\DeclareMathOperator{\id}{id}
\newcommand{\A}{\mathcal{A}}
\newcommand{\M}{\mathbb{M}}
\DeclareMathOperator{\diam}{diam}
\newcommand{\one}{\mathbbm{1}}
\newcommand{\E}{\mathscr{E}}
\DeclareMathOperator{\Tr}{Tr}
\newcommand{\lind}{\mathscr{L}}
 \def\idty{{\mathchoice {\mathrm{1\mskip-4mu l}} {\mathrm{1\mskip-4mu l}} %
{\mathrm{1\mskip-4.5mu l}} {\mathrm{1\mskip-5mu l}}}}
\let\emptyset\varnothing
\numberwithin{equation}{section}
\newtheorem{thm}{Theorem}[section]
\newtheorem{lem}[thm]{Lemma}
\newtheorem{cor}[thm]{Corollary}
\theoremstyle{definition}
\newtheorem{define}[thm]{Definition}
\newtheorem{rmk}[thm]{Remark}
\begin{document}
\title{On Quasi-Locality and Decay of Correlations for Long-Range Models of Open Quantum Spin Systems}
\author{Eric B. Roon}
\author{Robert 
Sims}

\affil{Department of Mathematics, The University of Arizona}
\date{\today}              
\renewcommand\Affilfont{\itshape\small}
\maketitle

\begin{abstract}
We consider models of open quantum spin systems with irreversible dynamics and show
that general quasi-locality results for long-range models, \textit{e.g.} as proven for the Heisenberg dynamics
associated to quantum systems in \cite{Matsuta}, naturally extend to this setting. 
Given these bounds, we provide two applications. First, we use these results to obtain 
estimates on a strictly local approximation of these finite-volume, irreversible dynamics.
Next, we show how these bounds can be used to estimate correlation decay in various states.

\end{abstract}
\tableofcontents

\section{Introduction}

Since the pioneering work of Lieb and Robinson, see \cite{LiebRobinson},  quasi-locality bounds for the dynamics of
models of quantum spin systems have become an indispensable tool in this many-body framework.
By now, the well-known list of applications is long, and we refer the interested reader to the books \cite{BratteliRobinson_2, Naaijkens} as well as the 
review-oriented article \cite{NachtergaeleSimsYoung} for a more complete set of references.

In the present work, we focus on studying the irreversible dynamics associated to open quantum systems which allow for dissipative interactions.
To our knowledge, the first quasi-locality results for such models were obtained in \cite{Hastings} and further extensions, as well as some applications, can be found in \cite{BarthelKliesch, Cubitt, KastoryanoEisert, Kliesch_CTT, Kliesch_LRB, Lucia, NachtergaeleVershyninaZagrebnov, Sweke}. 
Our main goal here is to extend the results of \cite{Matsuta}, where the authors consider the Heisenberg dynamics of quantum spin models with 
long-range interactions, to this setting of long-range models with an irreversible dynamics, defined as a quantum dynamical semigroup. 
In fact, our efforts were inspired by statements in the work of \cite{Sweke} where the authors indicate that they are unaware of such an extension in the literature.

Let us now briefly distinguish our results from that of some recent works. In the past few years, 
much attention has been given to understanding quasi-locality results for the dynamics of models
with long-range interactions. Interest in these bounds started with investigations of the Heisenberg dynamics 
associated to quantum spin Hamiltonians, see \cite{Else, Foss-Feig, KuwaharaSaito, Matsuta, Tran, Tran_2019}, and results for models with an irreversible dynamics
soon followed, see \cite{Guo, Sweke}.  

In contrast to the results of \cite{Sweke}, we derive a power-law Lieb-Robinson bound where, in a certain regime, the upper bound is linear in time.  In \cite[Theorem 3]{KastoryanoEisert}, the authors develop methods to obtain estimates on the spatial correlations of a dynamical fixed point. Our methods diverge from \cite{KastoryanoEisert}, in that we do not require a reversibility assumption on the local Lindbladians, and instead we estimate the correlations using ideas which go back to \cite{Poulin}. The authors of \cite{Guo} also require reversibility in their clustering estimates. Moreover, \cite{Guo} contains an iterative method to establish a power-law Lieb-Robinson bound whereas we use perturbative methods inspired by \cite{Matsuta}. Lastly, we emphasize that our results focus on models of open quantum systems defined over a discrete metric space. In contrast, we note that two recent papers \cite{Breteaux, Breteaux_preprint} generalize Lieb-Robinson bounds for open quantum systems to models in the continuum. 

The remainder of the paper is organized as follows. In Section~\ref{sec:set-up}, we introduce the 
basic set-up for the models we consider. Next, in Section~\ref{sec:fra}, 
we discuss a fundamental quasi-locality result for a class of models with finite-range interactions. We state this key estimate as Theorem~\ref{thm:appLRB}.  
Given this bound, we adapt the methods of \cite{Matsuta} to models with an irreversible dynamics.
We present this as two separate results. First, in Theorem~\ref{thm:dyn_diff}, 
we demonstrate an estimate for a finite-range approximation of the finite-volume dynamics.
Next, Theorem~\ref{thm:gen_LRB} establishes an estimate analogous to the result in \cite[Theorem 2.1]{Matsuta}.
In particular, a quasi-locality bound for the finite-volume dynamics is expressed in terms of
a quasi-locality bound for a finite-range approximation.  
Interestingly, our result in Theorem~\ref{thm:gen_LRB} differs from that of \cite[Theorem 2.1]{Matsuta}
in that we are able to recover the well-known (static) Lieb-Robinson bound of \cite[Theorem 2]{NachtergaeleVershyninaZagrebnov} in the limit as the finite-range parameter $R$ tends to infinity. Next, we supplement Theorem~\ref{thm:gen_LRB} with Theorem~\ref{thm:poly_dec_lrb} which 
provides an explicit estimate for models where only power-law decay of the interactions is known.
Then, in Section~\ref{sl_Approx}, we consider approximations of the finite-volume dynamics by a strictly local evolution. 
Such estimates are crucial in many applications, and we describe our results in
Theorem~\ref{thm:gen_sl_app} and Theorem~\ref{thm:sl_app_poly_dec}.
 
Finally, in Section~\ref{sec:prop_of_corr}, we estimate correlation decay in various states. First, in Section~\ref{sec:spatiallydecaying}, we follow
the techniques of \cite{NachtergaeleOgataSims} and demonstrate 
estimates on dynamic correlations for states with a quantified spatial decay, now in this case of an irreversible dynamics. 
Then, in Section~\ref{sec:dynamicalsteadystates}, we consider correlation decay for a class of rapidly mixing, dynamical fixed points. 
Unlike the general methods in \cite{KastoryanoEisert}, we here obtain our results 
with ideas that go back to earlier work of Poulin, see \cite{Poulin}. In an appendix, we 
relate our assumption of rapid mixing to other well-studied notions.   


\section{Locality}\label{Locality}
\subsection{Set-Up} \label{sec:set-up}

All the models we consider will be defined on a countable metric space.
In many applications, the models of interest arise naturally on $\mathbb{Z}^{\nu}$ where $\nu \in \mathbb{N}$, but
the precise structure of $\mathbb{Z}^{\nu}$ is unnecessary for most of the rigorous results discussed here. In general, we take 
$(\Gamma, d)$ to be a countable metric space. For any $X \subset \Gamma$, we will denote
the cardinality of $X$ by $|X|$ and the diameter of $X$ by ${\rm diam}(X) = \sup\{ d(x,y) : x,y \in X\}$.
Given $X \subset \Gamma$ and $r \geq 0$, it is convenient to a introduce an $r$-inflation of $X$ by setting 
$X(r) = \{ y \in \Gamma : d(y,x) \leq r \mbox{ for some } x \in X \}$. By $\mathcal{P}_0( \Gamma)$ , we will denote 
the set of finite subsets of $\Gamma$, \textit{i.e.} $\mathcal{P}_0(\Gamma) = \{ X \subset \Gamma : |X| < \infty \}$.

Let $(\Gamma, d)$ be a countable metric space. In order to introduce the models investigated here, we first associate to $(\Gamma, d)$ the 
structure of a quantum spin system. To do so, for each $x \in \Gamma$, we select a finite-dimensional (complex) Hilbert space of
states $\mathcal{H}_x$ and an algebra of observables $\mathcal{A}_x = \mathcal{B}( \mathcal{H}_x)$, the collection of 
bounded linear operators from $\mathcal{H}_x$ to itself. For any finite set $\Lambda \subset \Gamma$, the Hilbert space of
states associated with $\Lambda$ is then defined as the tensor product
\begin{equation} \label{local_hs}
\mathcal{H}_{\Lambda} = \bigotimes_{x \in \Lambda} \mathcal{H}_x 
\end{equation}
and similarly, the algebra of observables in $\Lambda$ is given by
\begin{equation} \label{local_alg}
\mathcal{A}_{\Lambda} = \bigotimes_{x \in \Lambda} \mathcal{A}_x \, .
\end{equation}
For finite sets $\Lambda_0 \subset \Lambda \subset \Gamma$, we identify $\mathcal{A}_{\Lambda_0}$
with the subalgebra $\mathcal{A}_{\Lambda_0} \otimes \idty_{\Lambda \setminus \Lambda_0}$ of $\mathcal{A}_{\Lambda}$ and
simply write $\mathcal{A}_{\Lambda_0} \subset \mathcal{A}_{\Lambda}$. An algebra of all strictly local observables $\mathcal{A}_{\Gamma}^{\rm loc}$ is then defined by the
inductive limit
\begin{equation}
\mathcal{A}_{\Gamma}^{\rm loc} = \bigcup_{\Lambda \in \mathcal{P}_0(\Gamma)} \mathcal{A}_{\Lambda}  \, .
\end{equation}
Given this, the $C^*$-algebra of quasi-local observables $\mathcal{A}_{\Gamma}$ is taken to be the
norm completion of $\mathcal{A}_{\Gamma}^{\rm loc}$.  The interested reader can find more details about this
construction \textit{e.g.} in \cite{BratteliRobinson_1, BratteliRobinson_2}. 

%
%
	For us, interactions will consist of a family $\mathscr{L} = \{L_Z: Z\subset \Gamma \text{ finite }\}$ of bounded linear maps $L_Z:\mathcal{A}_Z \to \mathcal{A}_Z$ in {\em Lindblad} form. Discovered independently in 1976 by \cite{Lindblad} and \cite{GoriniKossakowskiSudarshan}, the Lindblad equation characterizes generators of a norm-continuous, one parameter semigroup of unital completely positive maps. For us, this means there exists $H_Z = H_Z^*\in \mathcal{A}_Z$, a number $\ell_Z\in \mathbb{N}$ and operators $K_1, \dots, K_{\ell_Z}\in \mathcal{A}_Z$ so that for all $A\in \mathcal{A}_Z$, one has
	\begin{equation}\label{eqn:lindblad_equation}
		L_Z(A) = i[H_Z, A] + \sum_{j=1}^{\ell_Z} K_j^*AK_j - \frac{1}{2}\{K_j^*K_j, A\}.
	\end{equation} The total interaction in a given finite volume, $\Lambda$, is then the operator 
	\begin{equation}
		\mathscr{L}^{\Lambda} = \sum_{Z\subset \Lambda} L_Z.
	\end{equation}
	
The $\Lambda$-local dynamics generated by $\mathscr{L}^\Lambda$ form a norm-continuous, one-parameter semigroup of operators \begin{equation} T^{\Lambda}_t(A) = e^{t\mathscr{L}^{\Lambda}}(A), \quad A\in \mathcal{A}, \end{equation} which is well defined since $\mathscr{L}^{\Lambda}$ is bounded. The Lindblad form guaruntees that $T^{\Lambda}_t$ have the properties of a quantum dynamical semigroup. Namely, for each $t\in [0, \infty)$ and $\Lambda \in \mathscr{P}_0(\Gamma)$, the map $T_t^{\Lambda}$ is unit preserving and completely positive (ucp). We refer the interested reader to \cite{AlickiFannes, AlickiLendi} for more on the theory of quantum dynamical semigroups. 

Complete positivity is a general condition some maps between $C^*$-algebras possess. We recall that for a finite dimensional Hilbert space $H$, the one may regard $B(H) \otimes \M_n$ as $n\times n$, $B(H)$-valued block matrices.   Any linear map $\Phi:B(H) \to B(H)$ induces a linear map on $B(H)\otimes \M_n$ by acting block-wise \begin{equation} \Phi\otimes \one_{n\times n} ( [[X_{i,j}]]_{i,j=1}^n ) := [[ \Phi(X_{i,j})]]_{i,j=1}^n. \end{equation} The linear map $\Phi$ is said to be completely positive if for all $n$, the matrix $\Phi \otimes \one_{n\times n}(X^*X)$ is positive semidefinite for all $X\in B(H) \otimes \M_n$. 
Recall, by \cite[Proposition 3.6]{Paulsen}, that completely positive maps are also completely bounded in the sense that the quantity \begin{equation} \|\Phi\|_{cb}:= \sup_{n\in \mathbb{N}} \|\Phi\otimes \one_{n\times n}\| <\infty. \end{equation} In fact, every linear map between matrix algebras is completely bounded by \cite[Proposition 8.11]{Paulsen}. In particular, given a finite volume $X\subset \Lambda \in P_0(\Gamma)$, and a linear map $\Phi \in B(\mathcal{A}_X)$, one has
\begin{equation}
	\|\Phi \otimes \id_{\Lambda\setminus X}\|_{cb} = \|\Phi\|_{cb} \|\|\id_{\Lambda \setminus X}\|_{cb} = \|\Phi\|_{cb}\,,
\end{equation}  see, \textit{e.g.} \cite[Corollary 3.5.5]{BrownOzawa} for a detailed proof.

We describe a class of completely bounded maps for the purposes of our model.  Given a finite volume $Y$, the set $CB_0(Y)$ is defined to be the set of completely bounded maps $\Phi$ on $\mathcal{A}_Y$ satisfying $\Phi(\one_Y) = 0$. An example of such a map is given by fixing $B\in \mathcal{A}_Y$ and defining $\Phi(A) = [B, A]$ for $A\in \mathcal{A}_Y$.

%
%

We distinguish classes of models according to short or long range interactions by introducing decay functions, and note
that this notion is primarily of use in contexts where $\Gamma$ is infinite.
A non-increasing function $F:[0, \infty) \to (0, \infty)$ is said to be an $F$-function on $(\Gamma, d)$ if:

\noindent i) $F$ is uniformly summable, in the sense that
\begin{equation} \label{uni_sum}
\| F \| = \sup_{x \in \Gamma} \sum_{y \in \Gamma} F(d(x,y)) < \infty
\end{equation} 
and

\noindent ii) $F$ satisfies a convolution condition, \textit{i.e.} there is a positive number $C$ for which, given any pair $x,y \in \Gamma$,
\begin{equation} \label{conv_cond}
\sum_{z \in \Gamma} F(d(x,z)) F(d(z,y)) \leq C F(d(x,y)) 
\end{equation}
We define the convolution constant $C_F$ as the infimum over all constants $C$ for which (\ref{conv_cond}) holds.  
In general, one readily checks that if $F$ is an $F$-function on $(\Gamma, d)$, then for any $a \geq 0$, the function $F_a(r) = e^{-ar} F(r)$ is also
an $F$-function on $(\Gamma, d)$. Moreover, one has that $\| F_a \| \leq \| F \|$ and $C_{F_a} \leq C_F$. 

%
%

	The set of doubly anchored interactions are those for which there exists a constant $M$ so that for any $x,y\in \Gamma$,
	\begin{equation}
		\sum_{\substack{Z\in \mathcal{P}_0(\Gamma):\\ x,y\in Z}} \|L_Z\|_{cb} \le M F(d(x,y)). 
	\end{equation} In particular, the infimum of those $M$ defines a norm $\|\mathscr{L}\|_F$, which we call the $F$-norm.

As indicated above, many interesting quantum spin models are defined with $\Gamma = \mathbb{Z}^{\nu}$. In this case, the existence of an $F$-function
is clear. In fact, the function 
\begin{equation} \label{poly_F}
F(r) = \frac{1}{(1+r)^{\alpha}} \quad \mbox{for any } r \geq 0
\end{equation}
is an $F$-function on $(\mathbb{Z}^{\nu}, d)$ whenever $\alpha > \nu$. Clearly, the above function $F$ is summable (take for example $d(x,y) = |x-y|$ with the
$\ell^1$-distance), and a simple convexity argument shows that $C_F \leq 2^{\alpha} \| F \|$. 

Under some mild regularity assumptions, one can establish the existence of $F$-functions more generally. We say that a metric space
$(\Gamma, d)$ is $\nu$-regular if there are positive numbers $\kappa$ and $\nu$ for which
\begin{equation} \label{nu_reg}
\sup_{x \in \Gamma} |b_x(n+1)| \leq \kappa (n+1)^{\nu}  \quad \mbox{for any } n \geq  0.
\end{equation}
Here we have used $b_x(n)$ to denote the closed ball of radius $n$ centered at $x \in \Gamma$. 
Note that if $(\Gamma, d)$ is $\nu$-regular, then the function $F$ in (\ref{poly_F}) is an $F$-function on $(\Gamma, d)$ whenever $\alpha > \nu +1.$
In fact, for any $x \in \Gamma$ and any non-increasing, non-negative function $F$,
\begin{eqnarray}
\sum_{y \in \Gamma} F(d(x,y)) = F(0) + \sum_{n \geq 0} \sum_{\stackrel{y \in \Gamma:}{n< d(x,y) \leq n+1}} F(d(x,y)) & \leq & F(0) +  \sum_{n \geq 0} F(n) | b_x(n+1) \setminus b_x(n)| \nonumber  \\
& \leq &  F(0) +  \kappa \sum_{n \geq 0} F(n) (1+n)^{\nu} \label{F_est}
\end{eqnarray}
where the last bound uses that $(\Gamma, d)$ is $\nu$-regular. Taking $F$ as in (\ref{poly_F}), it is clear that $F$ is uniformly summable whenever
$\alpha > \nu +1$. As the previously indicated bound on the convolution constant does not depend on the metric space, the above $F$ is an $F$-function on
any $\nu$-regular $(\Gamma, d)$.

We note that, for many arguments, different regularity assumptions suffice. For example, the metric space $(\Gamma, d)$ is
said to be $\nu$-surface regular if there are positive numbers $\kappa$ and $\nu >1$ for which
\begin{equation} \label{nu_surf_reg} 
\sup_{x \in \Gamma} |b_x(n+1) \setminus b_x(n)| \leq \kappa (n+1)^{\nu-1}  \quad \mbox{for any } n \geq  0.
\end{equation}
Clearly, if $(\Gamma, d)$ is $\nu$-surface regular, then $(\Gamma, d)$ is $\nu$-regular (with the same value of $\nu$) as
\begin{eqnarray}
b_x(n+1) = \{ x \} \cup \bigcup_{m=0}^n \left( b_x(m+1) \setminus b_x(m) \right) \quad \mbox{implies} \quad |b_x(n+1)| & \leq & 1 + \kappa \sum_{m=0}^n (1+m)^{\nu -1} \nonumber \\
& \leq & (1 + \kappa)(1+n)^{\nu} \, .
\end{eqnarray}
There may be situations where these different regularity assumptions are useful (note that if $(\Gamma, d)$ is $\nu$-surface regular, then
the function $F$ in (\ref{poly_F}) is an $F$-function on $(\Gamma, d)$ whenever $\alpha > \nu$, as is the case when $\Gamma = \mathbb{Z}^{\nu}$)
however, we state our results in terms of the notion of $\nu$-regularity, as in (\ref{nu_reg}), as this is more general among metric spaces.


\subsection{Finite Range Approximations} \label{sec:fra}
Our first result, Theorem~\ref{thm:appLRB} below, presents the essential arguments of \cite{NachtergaeleVershyninaZagrebnov} from a 
slightly different perspective. Here we focus on an estimate corresponding to a finite-range approximation of the finite-volume dynamics. 
It is clear that our arguments allow us to recover a simplified version of Theorem 2 from \cite{NachtergaeleVershyninaZagrebnov} and we state this as
Theorem~\ref{thm:NVZ_LRB}. Moreover, if a given model is finite-range and uniformly bounded, then we obtain an estimate
with strong spatial decay; this is the content of Corollary~\ref{cor:frLRB} below. 
Next, we use Theorem~\ref{thm:appLRB} to derive an estimate on the difference between the finite-volume dynamics 
and this finite-range approximation; this is the content of Theorem~\ref{thm:dyn_diff}. Our argument here is similar to that of 
\cite{Matsuta} with two main differences. First, we work in the setting of open quantum systems.
Next, unlike the results in \cite{Matsuta}, our estimates, see specifically Theorem~\ref{thm:gen_LRB}, allow us to recover the bound from \cite{NachtergaeleVershyninaZagrebnov} by letting the finite range approximation parameter
tend to infinity. We end this section with Theorem~\ref{thm:poly_dec_lrb} which provides an explicit estimate for
a class of models with polynomially decay.

For the results proven below, we assume $(\Gamma, d)$ is a $\nu$-regular metric space equipped with an $F$-function $F$.
As described above, let $\mathcal{A}_{\Gamma}$ be a $C^*$-algebra of quasi-local observables defined over $(\Gamma, d)$ and take
$\mathscr{L} = \{ L_Z \}_{Z \in \mathcal{P}_0(\Gamma)}$ to be a dissipative interaction. For any $\Lambda \in \mathcal{P}_0(\Gamma)$, we denote 
the finite-volume generator and corresponding irreversible dynamics by
\begin{equation}
\mathscr{L}^{\Lambda} = \sum_{Z \subset \Lambda} L_Z \quad \mbox{and} \quad  T_t^{\Lambda} = {\rm exp}(t \mathscr{L}^{\Lambda}) \quad \mbox{for any } t \geq 0. 
\end{equation}
Analogously, we consider an approximate, finite-volume dynamics, for any $R>0$, by setting
\begin{equation}
\mathscr{L}^{\Lambda, R} = \sum_{\stackrel{Z \subset \Lambda:}{{\rm diam}(Z) \leq R}} L_Z \quad \mbox{and} \quad  T_t^{\Lambda, R} = {\rm exp}(t \mathscr{L}^{\Lambda, R}) \quad \mbox{for any } t \geq 0. 
\end{equation}

Throughout the rest of this section, we will deal extensively with $k$-tails of the exponential power series. As such we introduce the following notation for ease of reading: for all $t\ge 0$ and $k>0$, we write
\begin{equation}
	\E_t(k):= \sum_{n = \lceil k \rceil}^\infty \frac{t^n}{n!}.
\end{equation} We will often omit the ceiling function and instead write $n\ge k$ when no confusion can arise.

We now state and sketch a well-known proof of a locality estimate
in the special case of this finite-range approximation. 
%
%

\begin{thm}\label{thm:appLRB}
Let $(\Gamma, d)$ be a countable metric space equipped with an $F$-function $F$.
Let $\mathscr{L} = \{L_Z \}_{Z \in \mathcal{P}_0(\Gamma)}$ be a dissipative interaction with $\| \mathscr{L} \|_F< \infty$. 
Fix $X,Y \in \mathcal{P}_0(\Gamma)$ with $X \cap Y = \emptyset$. 
For any $\Lambda \in \mathcal{P}_0( \Gamma)$ with $X \cup Y \subset \Lambda$, $A \in \mathcal{A}_X$, and 
$K\in CB_0(Y)$ one has that 
    \begin{equation}\label{appLRB}
        \|K(T^{\Lambda, R}_t(A))\| \le \frac{\|K\|_{cb}\|A\|}{C_F} \E_{vt}\left(\frac{d(X,Y)}{R}\right) \sum_{x\in X}\sum_{y\in Y} F(d(x,y))
    \end{equation}
    for any $t \geq 0$ and $R>0$.  Here $v = \| \mathscr{L} \|_F C_F$. 
\end{thm}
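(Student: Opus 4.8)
The plan is to follow the now-standard Duhamel--iteration scheme of \cite{NachtergaeleVershyninaZagrebnov}, organized around the auxiliary quantity
\[
g(W,t) \;:=\; \sup\Big\{ \|\widehat K(T^{\Lambda,R}_t(A))\| \;:\; \widehat K \in CB_0(W),\ \|\widehat K\|_{cb}\le 1 \Big\}, \qquad W\subseteq\Lambda,
\]
so that the left side of \eqref{appLRB} is at most $\|K\|_{cb}\,g(Y,t)$. Since $T^{\Lambda,R}_t$ is ucp, hence a contraction, one always has $g(W,t)\le\|A\|$; the content is a recursion valid when $W\cap X=\emptyset$, which is driven down to the trivial bound once the accumulated set meets $X$.

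For the Duhamel step I would fix $W$ with $W\cap X=\emptyset$ and split $\mathscr{L}^{\Lambda,R}=\mathscr{L}^{\Lambda,R}_W + \widetilde{\mathscr{L}}_W$, where $\mathscr{L}^{\Lambda,R}_W=\sum_{Z\subseteq\Lambda,\ \mathrm{diam}(Z)\le R,\ Z\cap W\ne\emptyset}L_Z$ collects the terms anchored in $W$. The remainder $\widetilde{\mathscr{L}}_W$ is again a sum of Lindbladians, so $S^{(W)}_t:=\exp(t\widetilde{\mathscr{L}}_W)$ is ucp (a contraction), and since its generator is supported in $\Lambda\setminus W$ it factorizes as the identity on $\mathcal A_W$ tensored with a map on $\mathcal A_{\Lambda\setminus W}$; consequently $S^{(W)}_t$ commutes with every element of $CB_0(W)$, and $S^{(W)}_t(A)$ is supported in $\Lambda\setminus W$, so $\widehat K(S^{(W)}_t(A))=0$ for $\widehat K\in CB_0(W)$. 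Inserting the Duhamel identity $T^{\Lambda,R}_t=S^{(W)}_t+\int_0^t S^{(W)}_{t-s}\,\mathscr{L}^{\Lambda,R}_W\,T^{\Lambda,R}_s\,ds$ into $\widehat K(\cdot\,(A))$, pulling $\widehat K$ through $S^{(W)}_{t-s}$, and using contractivity of $S^{(W)}$ gives $\|\widehat K(T^{\Lambda,R}_t(A))\| \le \int_0^t \sum_{Z}\|\widehat K(L_Z(T^{\Lambda,R}_s(A)))\|\, ds$, the sum over $Z\subseteq\Lambda$ with $\mathrm{diam}(Z)\le R$ and $Z\cap W\ne\emptyset$. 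For each such $Z$ the map $\widehat K\circ L_Z$, restricted to $\mathcal A_{W\cup Z}$, lies in $CB_0(W\cup Z)$ with $\|\widehat K\circ L_Z\|_{cb}\le\|\widehat K\|_{cb}\|L_Z\|_{cb}$ (it kills $\one_{W\cup Z}$ since $L_Z$ kills $\one_Z$; the cb-norm bound is submultiplicativity together with the ampliation-invariance of $\|\cdot\|_{cb}$ recorded in Section~\ref{sec:set-up}). Hence $\|\widehat K(L_Z(T^{\Lambda,R}_s(A)))\|\le\|\widehat K\|_{cb}\|L_Z\|_{cb}\,g(W\cup Z,s)$, and taking the supremum over $\widehat K$ yields
\[
g(W,t)\;\le\; \int_0^t \sum_{\substack{Z\subseteq\Lambda,\ \mathrm{diam}(Z)\le R\\ Z\cap W\ne\emptyset}} \|L_Z\|_{cb}\, g(W\cup Z,s)\, ds \qquad (W\cap X=\emptyset),
\]
together with $g(W,t)\le\|A\|$ for all $W$.

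I would then iterate this, starting from $W=Y$. Two ingredients finish the argument. First, geometry: in a cluster $Y\cup Z_1\cup\cdots\cup Z_n$ built by successively adjoining sets $Z_i$ that each meet the current accumulation, any point of $X$ in the cluster is joined to $Y$ by a path through at most $n$ of the $Z_i$, each of diameter $\le R$, so $d(X,Y)\le nR$; hence no termination into the bound $\|A\|$ can occur before step $\lceil d(X,Y)/R\rceil$, and the $n$-fold nested time integral over the simplex contributes $t^n/n!$. Second, the $F$-function combinatorics: the spatial sum $\sum_{Z_1,\dots,Z_n}\prod_i\|L_{Z_i}\|_{cb}$ over admissible clusters meeting both $Y$ and $X$ is estimated by anchoring each $L_{Z_i}$ at two points and repeatedly invoking the doubly-anchored bound $\sum_{Z\ni a,b}\|L_Z\|_{cb}\le\|\mathscr{L}\|_F\,F(d(a,b))$ and the convolution condition $\sum_z F(d(a,z))F(d(z,b))\le C_F\,F(d(a,b))$, which collapses the product of $F$'s into $\|\mathscr{L}\|_F^{\,n}\,C_F^{\,n-1}\sum_{x\in X}\sum_{y\in Y}F(d(x,y))$ at order $n$. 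Summing over $n\ge\lceil d(X,Y)/R\rceil$ with $v=\|\mathscr{L}\|_F C_F$ gives $g(Y,t)\le\frac{\|A\|}{C_F}\,\E_{vt}\!\big(d(X,Y)/R\big)\sum_{x\in X}\sum_{y\in Y}F(d(x,y))$, and multiplying by $\|K\|_{cb}$ yields \eqref{appLRB}.

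The step I expect to be the main obstacle is the second ingredient above: although the accumulated support $W$ grows at every iteration, the doubly-anchored $F$-norm is engineered precisely so that, combined with the convolution inequality, the apparent cardinality prefactors telescope away and leave the clean geometric factor $\|\mathscr{L}\|_F^{\,n}C_F^{\,n-1}$; doing this bookkeeping carefully---rather than the crude per-step estimate, which would only produce an exponential-in-time bound with a volume-dependent rate---is the technical heart, and it is inherited essentially verbatim from the Lieb--Robinson literature \cite{NachtergaeleVershyninaZagrebnov,NachtergaeleSimsYoung}. Only minor care is then needed in justifying the Duhamel formula for the bounded generator $\mathscr{L}^{\Lambda,R}$, the commutation/contractivity properties of $S^{(W)}$, and the closure of $CB_0$ under composition with the $L_Z$.
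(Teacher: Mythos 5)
Your overall strategy---a Duhamel step with respect to the part of the generator not anchored in the observable's support, iteration of a supremum over normalized maps in $CB_0(\cdot)$, vanishing of the low-order coefficients via the $d(X,Y)\le nR$ geometry, and the doubly-anchored/convolution combinatorics---is exactly the paper's. However, one step, as written, does not deliver the stated bound. In your induction you replace $\widehat K$ by $\widehat K\circ L_Z\in CB_0(W\cup Z)$ and recurse on $g(W\cup Z,s)$, so the accumulated support grows at every step. The iteration you obtain therefore sums over sequences $(Z_1,\dots,Z_n)$ in which $Z_{i+1}$ need only meet $Y\cup Z_1\cup\cdots\cup Z_i$, i.e.\ over \emph{trees} of sets rooted at $Y$, not over chains with $Z_{i+1}\cap Z_i\ne\emptyset$. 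The combinatorial estimate you then invoke, $\sum\prod_i\|L_{Z_i}\|_{cb}\le\|\mathscr{L}\|_F^{\,n}C_F^{\,n-1}\sum_{x\in X}\sum_{y\in Y}F(d(x,y))$, is a statement about the chain sum (the $a_n$ of \cite{NachtergaeleOgataSims, NachtergaeleVershyninaZagrebnov}); it fails for the tree sum. Already at $n=2$ your sum contains configurations where $Z_2$ meets both $Y$ and $X$ while $Z_1$ is a ``dead branch'' attached to $Y$ and otherwise unconstrained; that branch contributes an uncancelled factor of order $\|\mathscr{L}\|_F\,|Y|\,\|F\|$, and at higher orders dead branches attach to the growing set and contribute volume-dependent factors together with the count of attachment patterns. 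The resulting coefficients are not bounded by $\|\mathscr{L}\|_F^{\,n}C_F^{\,n-1}\sum_{x,y}F(d(x,y))$, so \eqref{appLRB} does not follow from your recursion as stated.

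The repair is one line and returns you to the paper's argument: after the Duhamel step, do not compose $\widehat K$ with $L_Z$. Instead bound $\|\widehat K(L_Z(T^{\Lambda,R}_s(A)))\|\le\|\widehat K\|_{cb}\,\|L_Z(T^{\Lambda,R}_s(A))\|$ and observe that $L_Z$ itself annihilates $\one$ (by the Lindblad form \eqref{eqn:lindblad_equation}), so $L_Z/\|L_Z\|_{cb}$ is an admissible test map in $CB_0(Z)$ and $\|L_Z(T^{\Lambda,R}_s(A))\|\le\|L_Z\|_{cb}\,g(Z,s)$. The recursion then reads
\begin{equation*}
g(W,t)\;\le\;\int_0^t\ \sum_{\substack{Z\subset\Lambda:\ Z\cap W\ne\emptyset \\ \diam(Z)\le R}}\|L_Z\|_{cb}\,g(Z,s)\,ds \qquad (W\cap X=\emptyset),
\end{equation*}
whose iteration produces precisely the chain sum with $Z_{i+1}\cap Z_i\ne\emptyset$, for which the anchoring/convolution telescoping and the bound $a_n\le\|\mathscr{L}\|_F^{\,n}C_F^{\,n-1}\sum_{x\in X}\sum_{y\in Y}F(d(x,y))$ are valid; your geometric observation that the coefficient vanishes for $nR<d(X,Y)$ survives unchanged. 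Everything else in your write-up (the Duhamel identity for bounded generators, contractivity and the commutation of $S^{(W)}_t$ with maps in $CB_0(W)$, the vanishing of $\widehat K(S^{(W)}_t(A))$, and the final resummation) is correct and matches the paper's proof.
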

\begin{proof}
 Consider the function $f:[0,\infty) \to \A_{\Lambda}$ given by
    \begin{equation} \label{def_f}
        f(t) = K(T^{\Lambda, R}_t(A)) \quad \text{for all $t\ge 0$.}
    \end{equation}
Clearly, one has that 
    \begin{align*}
        f'(t) = K(\mathscr{L}^{\Lambda, R}T^{\Lambda, R}_t(A)) &= K( \mathscr{L}^{\Lambda \setminus Y, R}\, T^{\Lambda, R}_t(A))\\
        &\quad + K( (\mathscr{L}^{\Lambda, R} - \mathscr{L}^{\Lambda \setminus Y, R})T^{\Lambda, R}_t(A))\\
        &= \mathscr{L}^{\Lambda \setminus Y, R}f(t) + K( (\mathscr{L}^{\Lambda, R} - \mathscr{L}^{\Lambda \setminus Y, R})T^{\Lambda, R}_t(A)).
    \end{align*} 
Regarding the above as a first-order, non-homogeneous linear equation, the unique solution of the corresponding initial value problem is 
   \begin{equation}
        f(t) = T^{\Lambda \setminus Y, R}_t(f(0)) + \int_0^t T^{\Lambda \setminus Y, R}_{t-s}(K ((\lind^{\Lambda, R}- \lind^{\Lambda \setminus Y, R})T^{\Lambda, R}_s(A))) \, ds.
    \end{equation}  
The norm bound 
    \begin{equation}\label{sol_bd}
        f(t) \le \|f(0)\| + \|K\|_{cb}\int_0^t  \|(\lind^{\Lambda, R} - \lind^{\Lambda \setminus Y, R})T^{\Lambda, R}_s(A)\| \, ds.
    \end{equation} 
readily follows. 

    For the convenience of iteration, introduce 
    \begin{equation}
        C_A(Z,t) = \sup_{\substack{G\in CB_0(Z):\\ G\neq 0}} \frac{\|GT^{\Lambda,R}_t(A)\|}{\|G\|_{cb}} \quad\text{ for any $Z\subset \Lambda$ and $t\ge 0$.}
    \end{equation}
   One checks that $C_A(Z,t)\le \|A\|$ for all $t\ge 0$, and moreover, 
    \[
        C_A(Z,0)\le \|A\| \delta_X(Z),
    \] where $\delta_X(Z) = 1$ if $Z\cap X \neq \emptyset$ and equals zero otherwise.  Notice that for each $s\ge 0$, we have 
    \[
        \|(\lind^{\Lambda,R}- \lind^{\Lambda \setminus Y, R})T^{\Lambda,R}_s(A)\| \le \sum_{\substack{Z\subset \Lambda: \\ Z\cap Y \neq \emptyset\\ \diam(Z)\le R}} \|L_ZT^{\Lambda, R}_s(A)\| \le \sum_{\substack{Z\subset \Lambda:\\ Z\cap Y \neq \emptyset,\\\diam(Z)\le R}}\|L\|_{cb}C_A(Z,s).
    \] It follows from~(\ref{sol_bd}) and the above that
    \[
        C_A(Y,t) \le C_A(Y,0) + \sum_{\substack{Z\subset \Lambda:\\ Z\cap Y\neq \emptyset,\\ \diam(Z)\le R}} \|L_Z\|_{cb}\int_0^t C_A(Z,s)ds.
    \]  Upon iteration, we find that 
    \begin{equation}
        C_A(Y,t) \le C_A(Y,0) + \|A\| \sum_{n=1}^\infty a_n \frac{t^n}{n!},
    \end{equation} where 
    \begin{equation}
        a_n = \sum_{\substack{Z_1\subset \Lambda:\\ Z_1\cap Y \neq \emptyset\\ \diam(Z_1)\le R}}\|L_{Z_1}\|_{cb} \sum_{\substack{Z_2\subset \Lambda:\\ Z_2\cap Z_1 \neq \emptyset\\ \diam(Z_2)\le R}}\|L_{Z_2}\|_{cb}\cdots \sum_{\substack{Z_n\subset \Lambda:\\ Z_n\cap Z_{n-1} \neq \emptyset\\ \diam(Z_n)\le R}}\|L_{Z_n}\|_{cb} \, \delta_X(Z_n).
    \end{equation} 
    The convergence of this iteration is easily justified, see \textit{e.g.} the proof of Theorem 3.1 in \cite{NachtergaeleSimsYoung}.  
    Next, one readily checks that $a_n = 0$ if $nR<d(X,Y)$.  Lastly, the general sums defining $a_n$ satisfy the estimate 
    \begin{equation}
        a_n \le \| \mathscr{L}\|_F^n C_F^{n-1} \sum_{x\in X}\sum_{y\in Y}F(d(x,y))\quad \text{for any $n\ge1$}
    \end{equation} 
 which can be argued as in the proof of Theorem 2.1 in \cite{NachtergaeleOgataSims}. 
    Since $C_A(Y,0)=0$, we have found that 
    \[
        C_A(Y,t)\le \frac{\|A\|}{C_F}\sum_{x\in X}\sum_{y\in Y}F(d(x,y)) \sum_{n\ge \frac{d(X,Y)}{R}}\frac{(\| \mathscr{L}\|_F C_F t)^n}{n!},
    \] from which~(\ref{appLRB}) follows. 
\end{proof}

%
%

The argument in the proof above follows closely the arguments for the proof of Theorem 2 in \cite{NachtergaeleVershyninaZagrebnov}. 
We have included this not only for completeness, but also because our later arguments use heavily 
these finite range approximations. In fact, we state the following simplified version of the result in \cite{NachtergaeleVershyninaZagrebnov}.

\begin{thm}[Theorem 2 in \cite{NachtergaeleVershyninaZagrebnov}] \label{thm:NVZ_LRB}
Let $(\Gamma, d)$ be a countable metric space equipped with an $F$-function $F$.
Let $\mathscr{L} = \{L_Z \}_{Z \in \mathcal{P}_0(\Gamma)}$ be a dissipative interaction with $\| \mathscr{L} \|_F< \infty$. 
Fix $X,Y \in \mathcal{P}_0(\Gamma)$ with $X \cap Y = \emptyset$. 
For any $\Lambda \in \mathcal{P}_0( \Gamma)$ with $X \cup Y \subset \Lambda$, $A \in \mathcal{A}_X$, and 
$K\in CB_0(Y)$ one has that 
    \begin{equation}\label{NVZ_LRB}
        \|K(T^{\Lambda}_t(A))\| \le \frac{\|K\|_{cb}\|A\|}{C_F} \left(e^{ \| \mathscr{L} \|_F C_F t} -1 \right) \sum_{x\in X}\sum_{y\in Y} F(d(x,y))
    \end{equation}
    for any $t \geq 0$.
\end{thm}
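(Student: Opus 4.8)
The plan is to obtain Theorem~\ref{thm:NVZ_LRB} as a direct corollary of Theorem~\ref{thm:appLRB}, by letting the finite-range cutoff $R$ swallow the entire volume $\Lambda$. The point is that since $\Lambda \in \mathcal{P}_0(\Gamma)$ is finite, for any $R \geq \diam(\Lambda)$ every subset $Z \subseteq \Lambda$ automatically satisfies $\diam(Z) \leq \diam(\Lambda) \leq R$. Hence the restricted generator agrees with the full one, $\mathscr{L}^{\Lambda,R} = \mathscr{L}^{\Lambda}$, and so $T_t^{\Lambda,R} = e^{t\mathscr{L}^{\Lambda,R}} = e^{t\mathscr{L}^{\Lambda}} = T_t^{\Lambda}$ for every $t \geq 0$. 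It therefore suffices to read off \eqref{appLRB} at such an $R$.

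Second, I would evaluate the tail factor $\E_{vt}(d(X,Y)/R)$, with $v = \|\mathscr{L}\|_F C_F$, at a cutoff $R \geq \diam(\Lambda)$. Because $X$ and $Y$ are finite and disjoint, every pair $x \in X$, $y \in Y$ has $x \neq y$ and hence $d(x,y) > 0$, so $d(X,Y) = \min\{ d(x,y) : x\in X,\ y\in Y \} > 0$; on the other hand $d(X,Y) \leq \diam(\Lambda)$ since $X,Y \subseteq \Lambda$. Thus $0 < d(X,Y)/R \leq 1$, so $\lceil d(X,Y)/R \rceil = 1$ and
\begin{equation*}
  \E_{vt}\left(\frac{d(X,Y)}{R}\right) = \sum_{n=1}^{\infty} \frac{(vt)^n}{n!} = e^{vt} - 1 .
\end{equation*}
Substituting this identity and $T_t^{\Lambda,R} = T_t^{\Lambda}$ into \eqref{appLRB} produces exactly \eqref{NVZ_LRB}.

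I do not expect any genuine obstacle: the statement is a special case of Theorem~\ref{thm:appLRB}, and the only steps needing a remark are that $d(X,Y) > 0$ in an arbitrary metric space (using disjointness and finiteness) and that $d(X,Y) \leq \diam(\Lambda)$. If one wished to avoid the exact finite-volume identity $\mathscr{L}^{\Lambda,R} = \mathscr{L}^{\Lambda}$, one could instead send $R \to \infty$ in \eqref{appLRB}, using that $\mathscr{L}^{\Lambda,R} \to \mathscr{L}^{\Lambda}$ in norm (again because $\Lambda$ is finite) and norm-continuity of $\mathscr{L} \mapsto e^{t\mathscr{L}}$ to get $T_t^{\Lambda,R} \to T_t^{\Lambda}$, together with $\E_{vt}(d(X,Y)/R) \to e^{vt} - 1$; but invoking the finite-volume identity is cleaner and needs no limiting argument.
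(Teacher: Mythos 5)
Your proof is correct, but it reaches the result by a different (and shorter) route than the paper. The paper's stated proof re-runs the iteration from Theorem~\ref{thm:appLRB} with the full dynamics $T_t^\Lambda$ in place of $T_t^{\Lambda,R}$ in (\ref{def_f}), observing that now $a_n \neq 0$ for all $n \geq 1$, so the tail sum starts at $n=1$ and produces $e^{vt}-1$. You instead exploit that $\Lambda \in \mathcal{P}_0(\Gamma)$ is finite: for $R \geq \diam(\Lambda)$ the cutoff is inactive, $\mathscr{L}^{\Lambda,R} = \mathscr{L}^{\Lambda}$, and since $0 < d(X,Y) \leq \diam(\Lambda) \leq R$ (disjointness of the finite sets $X,Y$ gives positivity) one has $\lceil d(X,Y)/R\rceil = 1$ and $\E_{vt}(d(X,Y)/R) = e^{vt}-1$, so (\ref{NVZ_LRB}) is literally the special case $R = \diam(\Lambda)$ of (\ref{appLRB}). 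Both arguments rest on the same underlying Dyson-type iteration; yours buys economy by not repeating it, and it requires no extra hypotheses (unlike the paper's alternative derivation via Theorem~\ref{thm:gen_LRB} and $R\to\infty$, which needs a finite $\nu$-th moment of $F$), while the paper's version is the intrinsic argument that matches the original proof in \cite{NachtergaeleVershyninaZagrebnov} and does not lean on the finite-range machinery. The only caveats, which you already note, are that $X,Y$ must be nonempty for $d(X,Y) \leq \diam(\Lambda)$ to make sense (the statement is trivial otherwise, since $K(\idty)=0$) and that the whole shortcut is tied to $\Lambda$ being finite, which is guaranteed here.
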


The proof of their results follows as in our proof of Theorem~\ref{thm:appLRB} excepting that one uses the full finite-volume
dynamics to define $f$ in (\ref{def_f}) and notes that, generally, $a_n \neq 0$ for any $n \geq 1$. 

In the event that a particular model is bounded and finite range, then Theorem~\ref{thm:appLRB} provides an estimate with strong spatial decay.
To make this precise, let us first introduce some terminology. Let $(\Gamma, d)$ be a countable metric space.
A dissipative interaction $\mathscr{L}= \{ L_Z \}_{Z \in \mathcal{P}_0(\Gamma)}$ on $(\Gamma, d)$ 
is said to be of finite range if there is some $R_0 \geq 0$ for which
\begin{equation} \label{finite_range}
L_Z = 0 \quad \mbox{for all } Z \in \mathcal{P}_0(\Gamma) \quad \mbox{with } {\rm diam}(Z) > R_0.
\end{equation} 
In this case, for any $\Lambda \in \mathcal{P}_0(\Gamma)$, the approximate dynamics eventually saturates in the sense that for any 
$R \geq R_0$, $T_t^{\Lambda,R}(A) = T_t^{\Lambda}(A)$ for all $A \in \mathcal{A}_{\Lambda}$ and $t \geq 0$.  
Furthermore, we say that a dissipative interaction $\mathscr{L}= \{ L_Z \}_{Z \in \mathcal{P}_0(\Gamma)}$ on 
$(\Gamma, d)$ is uniformly bounded if 
\begin{equation} \label{uni_bd}
\| \mathscr{L} \|_{\infty} = \sup_{Z \in \mathcal{P}_0(\Gamma)} \| L_Z \|_{cb} < \infty \, .
\end{equation}

%
%

\begin{cor} \label{cor:frLRB}
Let $(\Gamma, d)$ be a $\nu$-regular metric space and $\mathscr{L}= \{ L_Z \}_{Z \in \mathcal{P}_0(\Gamma)}$ a dissipative 
interaction on $(\Gamma, d)$ that is uniformly bounded and of finite range $R_0>0$. In this case,
for any $F$-function $F$ on $(\Gamma, d)$, 
\begin{equation} \label{frub_bd}
\| \mathscr{L} \|_F \leq \| \mathscr{L} \|_{\infty} \frac{2^{\kappa R_0^{\nu} -2}}{ F(R_0)} < \infty \, 
\end{equation}
and therefore, Theorem~\ref{thm:appLRB} applies for each $F$-function $F$ on $(\Gamma, d)$.  Moreover, for 
fixed $X,Y \in \mathcal{P}_0(\Gamma)$ with $X \cap Y = \emptyset$, $\Lambda \in \mathcal{P}_0( \Gamma)$ 
with $X \cup Y \subset \Lambda$, $A \in \mathcal{A}_X$, and $K\in CB_0(Y)$, one has that one has that
\begin{equation} \label{strong_LRB}
 \|K(T^{\Lambda}_t(A))\|  \leq   \frac{\|K\|_{cb}\|A\| |X| \| F \|}{C_F} (evt)^m e^{- m \ln(m) +vt}
\end{equation} 
for any $t \geq 0$. Here we have set $m = \lceil d(X,Y)/R_0\rceil$ and $v = \| \mathscr{L} \|_F C_F$ as before.
\end{cor}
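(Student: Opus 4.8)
The plan is to treat the two assertions of the corollary separately: the bound (\ref{frub_bd}) on $\|\mathscr{L}\|_F$ is a pure counting estimate on the finitely many sets $Z$ that can contribute to the $F$-norm, and the decay estimate (\ref{strong_LRB}) then follows by feeding the finite-range dynamics directly into Theorem~\ref{thm:appLRB} and applying an elementary tail bound on the exponential series.

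\textbf{The $F$-norm bound.} Fix $x,y \in \Gamma$. Since $\mathscr{L}$ is of finite range $R_0$, only sets $Z$ with $\diam(Z) \le R_0$ contribute to $\sum_{Z \ni x,y}\|L_Z\|_{cb}$, so if $d(x,y) > R_0$ this sum is empty and, as $F > 0$, there is nothing to prove. Assume then $d(x,y) \le R_0$. Any $Z$ with $x,y \in Z$ and $\diam(Z) \le R_0$ satisfies $Z \subseteq b_x(R_0)$ (each $z \in Z$ obeys $d(x,z) \le \diam(Z) \le R_0$), and the number of subsets of $b_x(R_0)$ containing the two prescribed sites is at most $2^{|b_x(R_0)|-2}$. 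Using $\|L_Z\|_{cb} \le \|\mathscr{L}\|_\infty$, $\nu$-regularity in the form $|b_x(R_0)| \le \kappa R_0^\nu$, and monotonicity of $F$ (so $F(d(x,y)) \ge F(R_0)$), one obtains
\[
\sum_{\substack{Z \in \mathcal{P}_0(\Gamma):\\ x,y \in Z}} \|L_Z\|_{cb} \;\le\; 2^{\kappa R_0^\nu - 2}\,\|\mathscr{L}\|_\infty \;\le\; \frac{2^{\kappa R_0^\nu - 2}\,\|\mathscr{L}\|_\infty}{F(R_0)}\, F(d(x,y)),
\]
which is (\ref{frub_bd}); finiteness is immediate since $F(R_0) > 0$, so in particular Theorem~\ref{thm:appLRB} applies to $\mathscr{L}$ relative to any $F$-function $F$.

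\textbf{The decay estimate.} Because $L_Z = 0$ whenever $\diam(Z) > R_0$, we have $\mathscr{L}^{\Lambda,R_0} = \mathscr{L}^{\Lambda}$, hence $T_t^{\Lambda,R_0} = T_t^{\Lambda}$ for all $t \ge 0$. Applying Theorem~\ref{thm:appLRB} with $R = R_0$ and bounding the double sum crudely, $\sum_{x\in X}\sum_{y\in Y}F(d(x,y)) \le \sum_{x\in X}\sum_{y\in\Gamma}F(d(x,y)) \le |X|\,\|F\|$, reduces everything to the claim
\[
\E_{vt}(m) \;=\; \sum_{n \ge m}\frac{(vt)^n}{n!} \;\le\; (evt)^m\, e^{-m\ln m + vt}, \qquad m := \Big\lceil \tfrac{d(X,Y)}{R_0}\Big\rceil .
\]
This is the one genuine computation: using $\tfrac{m!}{(m+k)!} \le \tfrac{1}{k!}$ gives $\sum_{n\ge m}\tfrac{(vt)^n}{n!} \le \tfrac{(vt)^m}{m!}\,e^{vt}$, and the inequality $m! \ge (m/e)^m$ (from $e^m \ge m^m/m!$) turns $\tfrac{(vt)^m}{m!}$ into $(evt/m)^m = (evt)^m e^{-m\ln m}$. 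Substituting back and using $\E_{vt}(d(X,Y)/R_0) = \E_{vt}(m)$ yields (\ref{strong_LRB}).

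I expect the only place demanding real care to be the bookkeeping of constants in the first part: pinning down the exponent $\kappa R_0^\nu$ cleanly from the $\nu$-regularity estimate, which is phrased for integer radii, and checking the degenerate case $x=y$ in the definition of the $F$-norm. Everything after Theorem~\ref{thm:appLRB} is routine.
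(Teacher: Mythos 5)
Your proposal is correct and follows essentially the same route as the paper: the same counting of subsets of $b_x(R_0)$ containing $x,y$ together with $F(d(x,y))\ge F(R_0)$ for the $F$-norm bound, and the same reduction via $T_t^{\Lambda,R_0}=T_t^\Lambda$, the crude bound $\sum_{x\in X}\sum_{y\in Y}F(d(x,y))\le |X|\|F\|$, and the standard tail estimate $\E_{vt}(m)\le \frac{(vt)^m}{m!}e^{vt}\le (evt)^m e^{-m\ln m+vt}$. The caveats you flag (integer radii in $\nu$-regularity, the degenerate case) are glossed over in the paper's own proof as well.
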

\begin{proof}
Let $F$ be an $F$-function on $(\Gamma, d)$. In this case, for any $x,y \in \Gamma$ with $d(x,y) \leq R_0$,
\begin{equation} \label{fr_bd}
\sum_{\stackrel{Z \in \mathcal{P}_0(\Gamma):}{x,y \in Z}} \| L_Z \|_{cb}  = \sum_{\stackrel{Z \subset b_x(R_0):}{x,y \in Z}} \| L_Z \|_{cb} \leq \| \mathscr{L} \|_{\infty} 2^{|b_x(R_0)| -2}
\leq \| \mathscr{L} \|_{\infty} 2^{\kappa R_0^{\nu} -2}
\end{equation}
where the final bound uses $\nu$-regularity. Since the sum on the left-hand-side of (\ref{fr_bd}) vanishes whenever $d(x,y) > R_0$, the bound in (\ref{frub_bd}) is clear.

As a result, Theorem~\ref{thm:appLRB} applies for any choice of $F$ and yields
\begin{equation}
 \|K(T^{\Lambda}_t(A))\| =  \|K(T^{\Lambda, R_0}_t(A))\|  \leq  \frac{\|K\|_{cb}\|A\|}{C_F} \E_{vt}\left(\frac{d(X,Y)}{R_0}\right) \sum_{x\in X}\sum_{y\in Y} F(d(x,y)) \, .
\end{equation} 
By uniform summability, \textit{i.e.} (\ref{uni_sum}), it is clear that
\begin{equation}
\sum_{x \in X} \sum_{y \in Y} F(d(x,y)) \leq |X| \| F \| \, .
\end{equation}
The final bound claimed in (\ref{strong_LRB}) is the result of the following simple estimate. 
Let $k \in \mathbb{N}$. One readily checks that $k! > k^ke^{-k}$ holds. In this case, for any 
$t>0$,
\begin{equation}
\E_t(k) = \sum_{n=k}^{\infty} \frac{t^n}{n!} = e^t - \sum_{n=0}^{k-1}  \frac{t^n}{n!} \leq \frac{t^k}{k!} e^t  \leq t^k e^{-k \ln(k) + k +t} 
\end{equation}
and (\ref{strong_LRB}) follows. 
\end{proof}

%
%

We now use Theorem~\ref{thm:appLRB} to provide an estimate on the difference between the finite-volume dynamics and this finite-range approximation.  
For the estimate which follows, it is convenient to introduce a variant of our decay function. 
Recall that we are working on a countable metric space $(\Gamma, d)$ equipped with an $F$-function $F$. 
In terms of $F$, define a new function $G:[0, \infty) \to [0, \infty)$ by setting
\begin{equation} \label{def_G}
G(r) = \sup_{x \in \Gamma} \sum_{\stackrel{y \in \Gamma:}{d(x,y)>r}} F(d(x,y)) \, .
\end{equation}
Since $F$ is uniformly summable, as in (\ref{uni_sum}), it is reasonable to expect decay from $G$. 
When $(\Gamma, d)$ is $\nu$-regular, this can be seen in a fairly explicit manner. In fact, 
arguing similarly to (\ref{F_est}), one sees that for any $x \in \Gamma$ and $r>0$,
\begin{equation}
\sum_{\substack{y\in \Gamma:\\ d(x,y)> r}} F(d(x,y)) \leq \sum_{n= \lfloor r \rfloor}^\infty F(n) |b_x(n+1)\setminus b_x(n)|
\end{equation}
and so $\nu$-regularity implies that 
\begin{equation} \label{g_bd}
G(r) \leq \kappa \sum_{n = \lfloor r \rfloor}^{\infty} (1+n)^{\nu}F(n) \,.
\end{equation}
In this case, if $F$ has a finite $\nu$-th moment, \textit{i.e.} $\sum_{n \geq 0} (1+n)^{\nu}F(n) < \infty$, then one has that $\lim_{r \to \infty} G(r) =0$. 

%
%

\begin{thm}\label{thm:dyn_diff}
Let $(\Gamma, d)$ be a countable metric space equipped with an $F$-function $F$.
Let $\mathscr{L}= \{ L_Z \}_{Z \in \mathcal{P}_0(\Gamma)}$ be a dissipative interaction with $\| \mathscr{L} \|_F< \infty$. 
Fix $X,Y \in \mathcal{P}_0(\Gamma)$ with $X \cap Y = \emptyset$. 
For any $\Lambda \in \mathcal{P}_0( \Gamma)$ with $X \cup Y \subset \Lambda$ and $A \in \mathcal{A}_X$, one has that 
 \begin{equation}\label{dyn_diff_est}
        \|T^{\Lambda}_t(A) -T^{\Lambda, R}_t(A) \|  \leq \|A\| \| \mathscr{L} \|_F G(R/2) \left( t |X(r)| + \frac{\E_{vt}(1+ rR^{-1})}{v C_F} \,
        \sum_{x \in X} \sum_{y \in \Lambda \setminus X(r)} F(d(x,y)) \right)
    \end{equation} 
for any $t \geq 0$, $r \geq 0$, and $R>0$. Here $G$ is as in (\ref{def_G}) and $v = \| \mathscr{L} \|_F C_F$. 
\end{thm}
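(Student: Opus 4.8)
The plan is to interpolate between $T^{\Lambda}_t$ and $T^{\Lambda,R}_t$ by a Duhamel formula and then feed the resulting local term into Theorem~\ref{thm:appLRB} together with the doubly-anchored bound encoded in $\|\mathscr{L}\|_F$. First I would fix $t$, set $g(s)=T^{\Lambda}_{t-s}(T^{\Lambda,R}_s(A))$ for $s\in[0,t]$, and differentiate; since both families are norm-continuous semigroups with bounded generators and $\mathscr{L}^{\Lambda}-\mathscr{L}^{\Lambda,R}=\sum_{Z\subset\Lambda:\,\diam(Z)>R}L_Z$, integrating $g'$ over $[0,t]$ yields
\[
T^{\Lambda}_t(A)-T^{\Lambda,R}_t(A)=\int_0^t T^{\Lambda}_{t-s}\Big((\mathscr{L}^{\Lambda}-\mathscr{L}^{\Lambda,R})\,T^{\Lambda,R}_s(A)\Big)\,ds .
\]
Each $T^{\Lambda}_{t-s}$ is ucp, hence a norm contraction, so this reduces the estimate to $\int_0^t\sum_{Z\subset\Lambda:\,\diam(Z)>R}\|L_Z\,T^{\Lambda,R}_s(A)\|\,ds$. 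The one geometric fact I would use throughout is that $\diam(Z)>R$ forces each $u\in Z$ to have a partner $v\in Z$ with $d(u,v)>R/2$, so by the definitions of $\|\mathscr{L}\|_F$ and $G$ (see (\ref{def_G})), $\sum_{Z\ni u:\,\diam(Z)>R}\|L_Z\|_{cb}\le\|\mathscr{L}\|_F\,G(R/2)$ for every $u$.

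Next I would split the sum over $Z$ with $\diam(Z)>R$ according to whether $Z$ meets $X(r)$ or lies in $\Lambda\setminus X(r)$. For $Z$ meeting $X(r)$ I use only $\|L_Z\,T^{\Lambda,R}_s(A)\|\le\|L_Z\|_{cb}\|A\|$ (valid since $T^{\Lambda,R}_s$ is a contraction and $\|L_Z\otimes\id\|_{cb}=\|L_Z\|_{cb}$ dominates the operator norm), sum over such $Z$ by anchoring at a point of $X(r)$ and invoking the geometric fact to get the $s$-independent bound $|X(r)|\,\|\mathscr{L}\|_F\,G(R/2)\,\|A\|$, and then integrate, which multiplies it by $t$; this is the first summand of (\ref{dyn_diff_est}). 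For $Z\subset\Lambda\setminus X(r)$ one has $Z\cap X=\emptyset$ and $L_Z\in CB_0(Z)$ (Lindblad generators annihilate $\idty$), so Theorem~\ref{thm:appLRB} with $Y=Z$, $K=L_Z$ gives $\|L_Z\,T^{\Lambda,R}_s(A)\|\le\frac{\|L_Z\|_{cb}\|A\|}{C_F}\,\E_{vs}(d(X,Z)/R)\,\sum_{x\in X}\sum_{y\in Z}F(d(x,y))$. Since $Z\cap X(r)=\emptyset$ forces $d(X,Z)>r$, monotonicity of $\E$ lets me replace the tail by $\E_{vs}(r/R)$ and pull it outside the $Z$-sum, and applying the geometric fact again — now anchoring at the free vertex $u\in Z$ while retaining the weight $F(d(x,u))$ — gives $\sum_{Z\subset\Lambda\setminus X(r):\,\diam(Z)>R}\|L_Z\|_{cb}\sum_{y\in Z}F(d(x,y))\le\|\mathscr{L}\|_F\,G(R/2)\sum_{y\in\Lambda\setminus X(r)}F(d(x,y))$.

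Combining these, the contribution of $Z\subset\Lambda\setminus X(r)$ is at most $\frac{\|A\|}{C_F}\,\|\mathscr{L}\|_F\,G(R/2)\big(\sum_{x\in X}\sum_{y\in\Lambda\setminus X(r)}F(d(x,y))\big)\int_0^t\E_{vs}(r/R)\,ds$, and the identity $\int_0^t\E_{vs}(\rho)\,ds=\tfrac{1}{v}\E_{vt}(1+\rho)$ — obtained by termwise integration of the power series, using $\lceil1+\rho\rceil=1+\lceil\rho\rceil$ — turns the integral into exactly $\tfrac{1}{v}\E_{vt}(1+rR^{-1})$, giving the second summand of (\ref{dyn_diff_est}). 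The routine points are the interchange of the absolutely convergent $Z$-sum with the time integral and with the convergent iteration inside Theorem~\ref{thm:appLRB}, the membership $L_Z\in CB_0(Z)$, and some ceiling bookkeeping. I expect the genuine obstacle to be the $Z\subset\Lambda\setminus X(r)$ step: arranging that double sum so that the factor $G(R/2)$, the surviving weight $\sum_{x\in X}\sum_{y\in\Lambda\setminus X(r)}F(d(x,y))$, and the exponential tail all come out with the stated constants and without double counting.
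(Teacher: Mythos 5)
Your proposal is correct and follows essentially the same route as the paper's proof: the Duhamel interpolation, the split of the sum over $Z$ with $\diam(Z)>R$ according to whether $Z$ meets $X(r)$, the observation that $\diam(Z)>R$ forces a pair $u,v\in Z$ with $d(u,v)>R/2$ (yielding the factor $G(R/2)$), the application of Theorem~\ref{thm:appLRB} with $K=L_Z$ for the far sets, and the identity $\int_0^t\E_{vs}(r/R)\,ds=\tfrac1v\E_{vt}(1+rR^{-1})$. No gaps.
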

\begin{proof}
 For any $t \geq 0$, one has that
 \begin{equation} \label{dyn_diff}
        T^{\Lambda}_t(A) - T^{\Lambda, R}_t(A) = - \int_0^t \frac{d}{ds}T^{\Lambda}_{t-s}(T^{\Lambda,R}_s(A)) \, ds = \int_0^t T^{\Lambda}_{t-s}\left( (\lind^{\Lambda}-\lind^{\Lambda, R})(T^{\Lambda, R}_s(A))\right) \, ds \, .
  \end{equation}
  In this case, a simple norm bound implies 
    \begin{equation} \label{simple_norm_bd}
              \|T^{\Lambda}_t(A) -T^{\Lambda, R}_t(A) \|  \leq \int_0^t \|( \lind^{\Lambda}-\lind^{\Lambda, R})(T^{\Lambda, R}_s(A))\| \, ds \
    \end{equation}  
and for each $0\le s \le t$, the integrand above may be estimated as 
    \begin{equation} \label{integrand_bd}
        \|( \lind^{\Lambda}-\lind^{\Lambda, R})(T^{\Lambda, R}_s(A))\|\le \sum_{\substack{Z\subset \Lambda: \\ \diam(Z)> R}} \|L_Z T^{\Lambda, R}_s(A)\|.
    \end{equation}

We need only estimate the terms on the right-hand-side of (\ref{integrand_bd}). 
Before we begin, consider the following. Let $Z \subset \Lambda$ with ${\rm diam}(Z) >R$. We claim that if $x \in Z$, then
    there is some $y \in Z$ with $d(x,y) > R/2$. In fact, if this is not the case, then $Z \subset b_x(R/2)$ which contradicts ${\rm diam}(Z) >R$.
    
Now, introduce a parameter $r\ge 0$ describing a buffer region about $X$, the support of $A$. We find that
    \begin{align*}
        \sum_{\substack{Z\subset \Lambda:\\ \diam(Z)> R,\\ Z\cap X(r)\neq \emptyset}} \|L_ZT^{\Lambda, R}_s(A) \| \le \|A\| \sum_{x\in X(r)} \sum_{\substack{Z\subset \Lambda :\\ \diam(Z)> R,\\ x\in Z}} \|L_Z\|_{cb} &\le \|A\| \sum_{x\in X(r)} \sum_{\substack{y\in \Lambda:\\ 2d(x,y)>R}} \sum_{\substack{Z\subset \Lambda:\\ x,y\in Z}}\|L_Z\|_{cb}\\
        &\le \|A\| \|\mathscr{L}\|_F \sum_{x\in X(r)} \sum_{\substack{y\in \Lambda:\\ 2d(x,y)>R}} F(d(x,y)) \\
        & \le  \|A\| \|\mathscr{L}\|_F |X(r)| G(R/2).
    \end{align*} 
where we used the definition of $G$ in (\ref{def_G}).     

For the remaining terms on the right-hand-side of (\ref{integrand_bd}), an application of Theorem~\ref{thm:appLRB} shows that 
    \[
        \sum_{\substack{Z\subset \Lambda:\\ \diam(Z)> R\\ Z\cap X(r)=\emptyset}} \|L_ZT^{\Lambda, R}_s(A)\| \le \frac{\|A\|}{C_F} \sum_{\substack{Z\subset \Lambda:\\ \diam(Z)> R\\ Z\cap X(r) = \emptyset}} \|L_Z\|_{cb} \, \E_{vs}(d(X,Z)/R) \, \sum_{x\in X} \sum_{z \in Z}F(d(x,z)). 
    \]  
Since each set $Z$, as above, satisfies $Z \cap X(r) = \emptyset$, we have that 
\begin{equation}
\E_{vs}(d(X,Z)/R) \leq \E_{vs}(r/R) \quad \mbox{and moreover,} \quad \int_0^t \E_{vs}(r/R) \, ds = \frac{1}{v} \E_{vt} \left(\frac{r}{R} +1 \right) \, .
\end{equation}    
We need only estimate the remaining $s$-independent sum. To this end, fix $x \in X$ and note that
        \begin{align*}
            \sum_{\substack{Z\subset \Lambda:\\ \diam(Z)> R,\\ Z\cap X(r) = \emptyset}} \|L_Z\|_{cb} \sum_{z \in Z} F(d(x,z)) &\le \sum_{z\in \Lambda \setminus X(r)} F(d(x,z)) 
            \sum_{\substack{Z \subset \Lambda: \\ \diam(Z) >R, \\ z \in Z, Z \cap X(r) = \emptyset} }\| L_Z \|_{cb} \\
            & \le  \sum_{z\in \Lambda \setminus X(r)} F(d(x,z))    \sum_{\substack{y\in \Lambda \setminus X(r):\\ 2d(z,y)>R}}  \sum_{\substack{Z \subset \Lambda:\\ z,y \in Z}}\|L_Z\|_{cb}\\
            &\le \|\mathscr{L}\|_F \sum_{z\in \Lambda \setminus X(r)}  F(d(x,z))  \sum_{\substack{y\in \Lambda \setminus X(r):\\ 2d(z,y)>R}} F(d(z,y))\\
            &\le \| \mathscr{L}\|_F G(R/2) \sum_{z\in \Lambda \setminus X(r)} F(d(x,z)) \, .
        \end{align*} 
The estimate claimed in (\ref{dyn_diff_est}) now follows. 
\end{proof} 

%

An immediate consequence of Theorem~\ref{thm:dyn_diff} is the following. 

\begin{thm}\label{thm:gen_LRB}
Let $(\Gamma, d)$ be a countable metric space equipped with an $F$-function $F$.
Let $\mathscr{L}= \{ L_Z \}_{Z \in \mathcal{P}_0(\Gamma)}$ be a dissipative interaction with $\| \mathscr{L} \|_F< \infty$. 
Fix $X,Y \in \mathcal{P}_0(\Gamma)$ with $X \cap Y = \emptyset$. 
For any $\Lambda \in \mathcal{P}_0( \Gamma)$ with $X \cup Y \subset \Lambda$, $A \in \mathcal{A}_X$, and 
$K\in CB_0(Y)$ one has that 
 \begin{eqnarray}\label{LRB}
        \|K(T^{\Lambda}_t(A))\|  & \leq & \| K(T^{\Lambda, R}_t(A)) \| + \\
        & \mbox{ } & \quad +  \|K\|_{cb} \|A\| \| \mathscr{L} \|_F G(R/2) \left( t |X(r)| + \frac{\E_{vt}(1+ rR^{-1})}{v C_F} \,\sum_{x \in X} \sum_{y \in \Lambda \setminus X(r)} F(d(x,y))  \right) \nonumber
    \end{eqnarray} 
for any $t \geq 0$, $r \geq 0$, and $R>0$. 
\end{thm}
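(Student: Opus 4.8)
The plan is to obtain Theorem~\ref{thm:gen_LRB} directly from Theorem~\ref{thm:dyn_diff}, the triangle inequality, and complete boundedness of $K$; there is no genuine new content here, only a repackaging. First I would write the trivial algebraic identity
\[
T^{\Lambda}_t(A) = T^{\Lambda, R}_t(A) + \left( T^{\Lambda}_t(A) - T^{\Lambda, R}_t(A) \right),
\]
apply the linear map $K$ to both sides, and use the triangle inequality to get
\[
\| K(T^{\Lambda}_t(A)) \| \leq \| K(T^{\Lambda, R}_t(A)) \| + \| K(T^{\Lambda}_t(A) - T^{\Lambda, R}_t(A)) \|.
\]

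Next I would estimate the second summand. As usual, $K \in CB_0(Y)$ is understood through its ampliation $K \otimes \id_{\Lambda \setminus Y}$ on $\mathcal{A}_{\Lambda}$, and the discussion in Section~\ref{sec:set-up} gives $\| K \otimes \id_{\Lambda \setminus Y} \|_{cb} = \| K \|_{cb}$. Since $T^{\Lambda}_t(A) - T^{\Lambda, R}_t(A) \in \mathcal{A}_{\Lambda}$, complete boundedness (in particular ordinary boundedness) yields
\[
\| K(T^{\Lambda}_t(A) - T^{\Lambda, R}_t(A)) \| \leq \| K \|_{cb} \, \| T^{\Lambda}_t(A) - T^{\Lambda, R}_t(A) \|.
\]

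Finally I would insert the bound from Theorem~\ref{thm:dyn_diff} for $\| T^{\Lambda}_t(A) - T^{\Lambda, R}_t(A) \|$, which reproduces the claimed inequality (\ref{LRB}) with the stated dependence on $t$, $r$, and $R$. There is essentially no obstacle: the only point requiring a little care is the bookkeeping of which algebra $K$ acts on and the invariance of the cb norm under ampliation, both of which were already settled in the set-up. The substance of the estimate is carried entirely by Theorem~\ref{thm:dyn_diff}; the present formulation merely reduces a quasi-locality bound for the true finite-volume dynamics to one for the finite-range approximation, and is compatible with recovering Theorem~\ref{thm:NVZ_LRB} in the regime $R \to \infty$, where $G(R/2) \to 0$ and $T^{\Lambda,R}_t \to T^{\Lambda}_t$.
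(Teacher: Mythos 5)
Your proposal is correct and follows essentially the same route as the paper: the paper applies $K$ to the integral identity (\ref{dyn_diff}) and bounds the resulting integral, which is exactly the quantity controlled in the proof of Theorem~\ref{thm:dyn_diff}, while you apply $K$ to the difference $T^{\Lambda}_t(A)-T^{\Lambda,R}_t(A)$ and invoke the final statement of that theorem; the two bookkeepings yield the identical bound. Your remarks on the ampliation of $K$ and the invariance of the cb norm are consistent with the set-up and introduce no gap.
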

\begin{proof}
   Apply $K$ to both sides of (\ref{dyn_diff}). A norm bound shows that 
    \begin{equation}
        \|K(T^{\Lambda}_t(A))\| \le \|K(T^{\Lambda,R}_t(A))\| + \|K\|_{cb}\int_0^t \|( \lind^{\Lambda}-\lind^{\Lambda, R})(T^{\Lambda, R}_s(A))\| \, ds.
    \end{equation} 
As is clear from (\ref{simple_norm_bd}), the integral above is precisely the quantity estimated in the previous proof.
The bound in (\ref{LRB}) follows.
\end{proof} 

Although we have already remarked that the bound proven in Theorem~\ref{thm:NVZ_LRB} can be obtained
directly by reframing the arguments in the proof of Theorem~\ref{thm:appLRB}, this result also follows from the bounds established in Theorem~\ref{thm:dyn_diff} and/or Theorem~\ref{thm:gen_LRB} above. 
In fact, since $d(X,Y)/R>0$, one has that
\begin{equation}
\| K(T^{\Lambda, R}_t(A)) \| \leq \frac{\|K\|_{cb}\|A\|}{C_F} \left(e^{ v t} -1 \right) \sum_{x\in X}\sum_{y\in Y} F(d(x,y)
\end{equation}
uniformly in $R$. Moreover, if the $F$-function on $(\Gamma, d)$ has a finite $\nu$-th moment, then $\lim_{R \to \infty} G(R/2) =0$. 
Thus for $r\geq 0$ and $t \geq 0$ fixed, Theorem~\ref{thm:NVZ_LRB} follows from Theorem~\ref{thm:gen_LRB} by taking the $\limsup_{R \to \infty}$ on both sides 
of (\ref{LRB}). In fact, one can also see this using Theorem~\ref{thm:appLRB}, Theorem~\ref{thm:dyn_diff}, and a continuity-type argument.

We now investigate an application of Theorem~\ref{thm:gen_LRB} which does not follow from Theorem~\ref{thm:NVZ_LRB}.
In particular, we consider models for which we only assume power-law decay. 

%
%
\begin{thm}\label{thm:poly_dec_lrb}
Let $(\Gamma, d)$ be $\nu$-regular and $\mathscr{L} = \{L_Z \}_{Z \in \mathcal{P}_0(\Gamma)}$ be a dissipative interaction for which
$\| \mathscr{L} \|_F< \infty$ with the choice of $F$-function given by
\begin{equation} \label{power_law_F}
F(x) = \frac{1}{(1+x)^{\alpha}} \quad \mbox{for all } x \geq 0
\end{equation}
with some $\alpha > 2\nu +1$. Fix $\epsilon \in (0, \alpha - 2 \nu -1)$ and set $\alpha_{\epsilon} = \alpha - \nu - 1- \epsilon$.
Let $0 < \delta < 1$ satisfy $(1- \delta)\alpha_{\epsilon} > \nu$. 

Fix $X,Y \in \mathcal{P}_0(\Gamma)$ with $X \cap Y = \emptyset$, $\Lambda \in \mathcal{P}_0( \Gamma)$ with $X \cup Y \subset \Lambda$, 
$A \in \mathcal{A}_X$, and $K\in CB_0(Y)$. In this case, there $C>0$ for which 
\begin{equation} \label{power_law_lrb}
\| K (T^{\Lambda}_t(A)) \| \leq  C \| K \|_{cb} \| A \| |X| \cdot \frac{t}{(1+ d(X,Y))^{(1- \delta) \alpha_{\epsilon} - \nu}}
\end{equation}
whenever $d(X,Y) \geq 1$ and $0 \leq evt \leq d(X,Y)^{\delta}$. Moreover, one may take
\begin{equation} \label{constant} 
C =  \kappa C_{\epsilon} \| \mathscr{L} \|_F  \left( e + 2^{2\alpha_{\epsilon}} 2^{1- \delta} \left( \frac{C_{\epsilon} + C_F}{C_F} \right) \right) \, .
\end{equation}
\end{thm}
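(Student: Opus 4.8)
The plan is to specialize the general bound of Theorem~\ref{thm:gen_LRB} to the power-law $F$-function in (\ref{power_law_F}) and then optimize over the two free parameters $R$ (the finite-range cutoff) and $r$ (the buffer radius around $X$), choosing them as appropriate powers of $d:=d(X,Y)\geq 1$; write $v=\|\mathscr{L}\|_FC_F$ as in the statement. The first step is to record what $\nu$-regularity yields for this $F$. From (\ref{g_bd}), writing $(1+n)^{\nu-\alpha}=(1+n)^{-\alpha_{\epsilon}}(1+n)^{-(1+\epsilon)}$ and pulling the first factor out of the tail sum, one gets $G(s)\leq \kappa C_{\epsilon}(1+s)^{-\alpha_{\epsilon}}$ with $C_{\epsilon}:=\sum_{n\geq 0}(1+n)^{-(1+\epsilon)}<\infty$; this is precisely where the $\epsilon$-loss is incurred and where the constant $C_{\epsilon}$ in (\ref{constant}) originates. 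Alongside this I would use $|X(r)|\leq\kappa|X|(1+r)^{\nu}$, the crude bounds $\sum_{x\in X}\sum_{y\in Y}F(d(x,y))\leq|X|\,\|F\|$ and $\sum_{x\in X}\sum_{y\in\Lambda\setminus X(r)}F(d(x,y))\leq|X|\,G(r)$, and the tail estimates $\E_{vt}(k)\leq(evt/k)^{k}e^{vt}$ together with a refinement — valid whenever $vt$ is at most a small fixed multiple of $k$ — giving $\E_{vt}(k)\leq (\text{const})\,vt\,\rho^{\,k}$ for some $\rho<1$ (obtained by a ratio bound on the series, peeling off one factor $vt$ from the leading term).

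Next comes the parameter choice: take $R$ a suitable constant multiple of $d^{1-\delta}$ and $r$ a suitable constant multiple of $d$, so that $d/R$ and $1+r/R$ are of order $d^{\delta}$ while $|X(r)|\leq\kappa|X|(1+d)^{\nu}$ remains bounded uniformly in $\Lambda$. The hypothesis $evt\leq d^{\delta}$ is used exactly here: with the constants chosen correctly it makes $vt$ a small fixed fraction of both $d/R$ and $1+r/R$, so that $\E_{vt}(d/R)$ and $\E_{vt}(1+r/R)$ are each bounded by $(\text{const})\,vt$ times a quantity that decays faster than any power of $d$. Consequently every summand of (\ref{LRB}) that carries such a tail is automatically linear in $t$ over the admissible range of $t$ and is negligible compared to the target power; the only term whose polynomial decay must be tracked by hand is the $t|X(r)|$ term.

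I would then estimate the three summands of (\ref{LRB}) in turn. For $\|K(T^{\Lambda,R}_t(A))\|$, Theorem~\ref{thm:appLRB} gives at most $\tfrac{1}{C_F}\|K\|_{cb}\|A\|\,|X|\,\|F\|\,\E_{vt}(d/R)$, which by the refined tail bound and $\|F\|\leq 1+\kappa C_{\epsilon}$ is dominated by $(\text{const})\,\kappa C_{\epsilon}\,e\,\|\mathscr{L}\|_F\|K\|_{cb}\|A\|\,|X|\,t\,(1+d)^{-[(1-\delta)\alpha_{\epsilon}-\nu]}$, the factor $e$ of (\ref{constant}) being absorbed here while the super-polynomial factor swallows the polynomial (here one also uses $\tfrac1{C_F}vt=\|\mathscr{L}\|_F t$). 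For the $t|X(r)|$ term one inserts $G(R/2)\leq\kappa C_{\epsilon}2^{2\alpha_{\epsilon}}(1+d)^{-(1-\delta)\alpha_{\epsilon}}$ (the $2^{2\alpha_{\epsilon}}$ coming from evaluating $G$ at $R/2$ with $R$ a constant multiple of $d^{1-\delta}$) and $|X(r)|\leq\kappa|X|(1+d)^{\nu}$, producing a bound of order $\|\mathscr{L}\|_F\|K\|_{cb}\|A\|\,|X|\,t\,(1+d)^{\nu-(1-\delta)\alpha_{\epsilon}}$, which is the target. For the last term one bounds $\sum_{x\in X}\sum_{y\notin X(r)}F(d(x,y))\leq|X|G(r)\leq\kappa C_{\epsilon}|X|(1+d)^{-\alpha_{\epsilon}}$, $G(R/2)$ as before, and $\E_{vt}(1+r/R)/(vC_F)$ by $(\|\mathscr{L}\|_F/C_F)\,t$ times a super-polynomially small factor; this contributes the extra $1/C_F$, so that combining it with the $t|X(r)|$ term produces the factor $2^{2\alpha_{\epsilon}}2^{1-\delta}\bigl(\tfrac{C_{\epsilon}+C_F}{C_F}\bigr)$ in (\ref{constant}). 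Summing the three bounds and collecting constants gives (\ref{power_law_lrb}) with $C$ as in (\ref{constant}).

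The main obstacle is the simultaneous calibration of the implicit constants in $R\asymp d^{1-\delta}$ and $r\asymp d$: they must keep $d/R$ and $1+r/R$ safely above $vt$ (this is the sole use of $evt\leq d^{\delta}$, and it is what turns the exponential tails into something linear in $t$ and super-polynomially small in $d$), keep $|X(r)|$ from spoiling the power count in the $t|X(r)|$ term, and arrange that all three contributions collapse onto the single exponent $(1-\delta)\alpha_{\epsilon}-\nu$. Once the parameters are pinned down, extracting the explicit factor $t$ from the $\E_{vt}$-tails and bookkeeping $\kappa$, $C_F$, $C_{\epsilon}$, and the powers of $2$ to land on exactly (\ref{constant}) is routine but tedious.
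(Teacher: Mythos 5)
Your proposal is correct and follows essentially the same route as the paper: apply Theorem~\ref{thm:gen_LRB}, bound $G(s)\le\kappa C_{\epsilon}(1+s)^{-\alpha_{\epsilon}}$ and $|X(r)|\le\kappa|X|r^{\nu}$ via $\nu$-regularity, choose $R\asymp d^{1-\delta}$ and $r\asymp d$, and conclude from the ordering of the three exponents. The one place you deviate is the first term: you discard the spatial decay of $\sum_{x\in X}\sum_{y\in Y}F(d(x,y))$ (bounding it by $|X|\,\|F\|$) and ask the tail $\E_{vt}(d/R)$ to supply super-polynomial decay in $d$, which forces your calibration $vt\le\rho\,(d/R)$ with $\rho<1$ strictly; note that with $R=d^{1-\delta}$ exactly this would fail at the boundary $evt=d^{\delta}$, where Stirling gives $\E_{vt}(d^{\delta})$ of order $d^{-\delta/2}$ only, so you genuinely need $R$ to be a strictly smaller constant multiple of $d^{1-\delta}$, exactly the calibration you flag as the main obstacle. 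The paper sidesteps this entirely by keeping $\sum_{x\in X}\sum_{y\in Y}F(d(x,y))\le\kappa C_{\epsilon}|X|(1+d)^{-\alpha_{\epsilon}}$ for the first term and only requiring $\E_{vt}(d/R)\le e^{-d/R}(e^{evt}-1)\le evt\,e^{-d^{\delta}+evt}\le evt$, so that $R=d^{1-\delta}$ and $r=d$ work on the nose, the first term decays like $(1+d)^{-\alpha_{\epsilon}}$ (faster than the target), and the clean constant (\ref{constant}) drops out; your variant still proves the theorem but with a different, messier admissible $C$ rather than the one stated.
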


\begin{proof}
As we discussed previously, $\nu$-regularity guarantees that the function $F$ in (\ref{power_law_F}) is an $F$-function on 
$(\Gamma, d)$ whenever $\alpha > \nu +1$.  Since we have taken $\alpha > 2 \nu +1$, this is certainly the case. Moreover, by our choice of 
$\epsilon$, it is clear that $\alpha_{\epsilon} = \alpha - \nu - 1 - \epsilon > \nu$, and thus there are choices of $0< \delta <1$ for which
$(1 - \delta) \alpha_{\epsilon} > \nu$. With this understanding, observe that under these conditions
Theorem~\ref{thm:gen_LRB} applies, and the bound given in (\ref{LRB}) holds. We regard the right-hand-side of this estimate
as three terms and consider each separately. For notational convenience, in the calculations below we set $d=d(X,Y)$.  

For the first term, we apply Theorem~\ref{thm:appLRB} and find that 
\begin{eqnarray} \label{term1_bd}
\| K(T_t^{\Lambda, R}(A)) \| & \leq & \frac{ \| K \|_{cb} \| A \|}{C_F} \E_{vt} \left( \frac{d}{R} \right) \sum_{x \in X} \sum_{y \in Y} F(x(x,y)) \nonumber \\
& \leq &  \| K \|_{cb} \| A \|  \| \mathscr{L} \|_F e^{- d/R} t e^{evt+1} \frac{ \kappa C_{\epsilon} |X|}{(1+d)^{\alpha_\epsilon}}
\end{eqnarray}
Here we have used that for any $X,Y \in \mathcal{P}_0(\Gamma)$ with $X \cap Y = \emptyset$, $R>0$ and $t\geq0$, the bounds:
\begin{equation} \label{F_sum}
\sum_{x \in X} \sum_{y \in Y} F(d(x,y)) \leq \frac{ \kappa C_{\epsilon} |X|}{(1+d)^{\alpha_\epsilon}} \quad \mbox{with} \quad C_{\epsilon} = \sum_{n \geq 0} \frac{1}{(1+ n)^{ 1+ \epsilon}}< \infty \, 
\end{equation}
and 
\begin{equation} \label{E_bd}
\E_{vt} \left( \frac{d}{R} \right) = \sum_{n \geq d/R} \frac{(vt)^n}{n!}  \leq  \sum_{n \geq d/R} \frac{(vt)^n}{n!} e^{n - d/R} 
 \leq  e^{- d/R} \left(e^{evt}-1 \right) \leq v e^{-d/R} te^{evt+1}
\end{equation}
hold and recall that $v = \| \mathscr{L} \|_F C_F$. (\ref{term1_bd}) follows. 

For the second term, the non-constant part may be estimated as
\begin{equation} \label{term2_bd}
G(R/2) \cdot t \cdot |X(r)| \leq \frac{\kappa C_{\epsilon}}{(1+ \lfloor R/2 \rfloor)^{\alpha_\epsilon}} \cdot t \cdot |X| \kappa r^{\nu}
\end{equation}
Here we have used (\ref{g_bd}) to estimate $G$ with $C_{\epsilon}$ as in (\ref{F_sum}). In addition, we also used that for $X \in \mathcal{P}_0(\Gamma)$,
\begin{equation}
X(r) = \bigcup_{x \in X} b_x(r) \quad \mbox{implies} \quad |X(r)| \leq \sum_{x \in X} |b_x(r)| \leq |X| \kappa r^{\nu} 
\end{equation}
whenever $r\geq1$. The final bound above uses $\nu$-regularity. 

For the third term, the non-constant part may be estimated as
\begin{equation} \label{term3_bd}
G(R/2) \cdot \E_{vt}(1+r/R) \cdot \sum_{x \in X} \sum_{\Lambda \setminus X(r)} F(d(x,y)) \leq \frac{\kappa C_{\epsilon}}{(1+ \lfloor R/2 \rfloor)^{\alpha_\epsilon}} \cdot
e^{- r/R} v t e^{evt} \cdot \frac{ \kappa C_{\epsilon} |X|}{(1+r)^{\alpha_\epsilon}}
\end{equation}

Now, comparing the time-independent exponential factors in the estimates from (\ref{term1_bd}) and (\ref{term3_bd}), set the 
parameters $R = d^{1- \delta}$ and $r = d$.  
In this case,
\begin{equation}
e^{- d/R} = e^{-d^{\delta}} = e^{-r/R}\, .
\end{equation}
With $M = \| K \|_{cb}\| \mathscr{L} \|_F \kappa C_{\epsilon}  |X| \| A \|$ and this choice of parametrization, we have shown that
\begin{eqnarray} \label{big_est}
\| K(T_t^{\Lambda}(A)) \| & \leq & M \left( \frac{e^{- d^{\delta}} t e^{evt+1}}{(1+d)^{\alpha_{\epsilon}}} 
+  \frac{\kappa t d^{\nu}}{(1+ \lfloor d^{1- \delta}/2 \rfloor)^{\alpha_\epsilon}} + \frac{\kappa C_{\epsilon} e^{- d^{\delta}} t e^{evt}}{ C_F(1+ \lfloor d^{1- \delta}/2 \rfloor)^{\alpha_{\epsilon}} (1+d)^{\alpha_{\epsilon}}}  \right) \nonumber \\
& \leq & M t \left( \frac{e e^{- d^{\delta} +evt}}{(1+d)^{\alpha_{\epsilon}}} 
+ \kappa 2^{2 \alpha_{\epsilon}} \left( \frac{(1+d)^{\nu}}{(1+ d^{1- \delta})^{\alpha_\epsilon}} + \frac{C_{\epsilon} e^{- d^{\delta} +evt}}{ C_F (1+  d^{1- \delta})^{\alpha_{\epsilon}} (1+d)^{\alpha_{\epsilon}}} \right) \right) \nonumber \\
& \leq & M t \left( \frac{e e^{- d^{\delta} +evt }}{(1+d)^{\alpha_{\epsilon}}} 
+ \kappa 2^{2 \alpha_{\epsilon}} 2^{1- \delta} \left( \frac{1 }{(1+ d)^{(1- \delta)\alpha_\epsilon - \nu}} + \frac{C_{\epsilon} e^{- d^{\delta} +evt}}{ C_F(1+d)^{(2-\delta) \alpha_{\epsilon}}} \right) \right)
\end{eqnarray}
Note that to simplify the estimate above, we used two basic bounds. First, we used
\begin{equation}
\frac{1}{(1+ \lfloor R/2 \rfloor)^{\alpha_{\epsilon}}} \leq \frac{2^{2 \alpha_{\epsilon}}}{(1+R)^{\alpha_{\epsilon}}} \quad \mbox{valid when } R>0 \mbox{ and } \alpha_{\epsilon} >0. 
\end{equation}
Next, we used
\begin{equation}
\frac{1}{(1+d^{a})} \leq \frac{2^{a}}{(1+d)^{a}} \quad \mbox{whenever } d \geq 1 \mbox{ and } a >0 \, .
\end{equation}
As the final powers appearing on the right-hand-side of (\ref{big_est}) are ordered: $(2 - \delta) \alpha_{\epsilon} > \alpha_{\epsilon} > (1- \delta) \alpha_{\epsilon} - \nu$, the bound
claimed in (\ref{power_law_lrb}), as well as (\ref{constant}), readily follows. 
\end{proof}

%

\subsection{Strictly Local Approximations}\label{sl_Approx}

In this section, we investigate a different approximation for these finite-volume dynamics. 
Here, we provide explicit estimates for comparisons to a strictly local dynamics. 
Theorem~\ref{thm:gen_sl_app} below gives a general estimate for dissipative interactions which have a
finite $F$-norm. Such a bound has found many useful applications, particularly in cases where
the interaction is known to have a finite $F$-norm with a function $F$ which decays exponentially fast.
We supplement this result with Theorem~\ref{thm:sl_app_poly_dec} which focuses on the case where
the interactions are only known to decay like a power-law. The decay we obtain here is identical to that of 
Theorem~\ref{thm:poly_dec_lrb}. Before proving these results, we first establish a technical estimate in 
Lemma~\ref{lem:NOS}.

We begin with a technical estimate, see Lemma~\ref{lem:NOS} below, and note that similar bounds have been used \textit{e.g.} in \cite{NachtergaeleOgataSims}. 
To state this precisely, we first observe that for certain combinatorial sums, it is convenient to define 
a set of surface sets. Specifically, let $(\Gamma, d)$ be a countable metric space. 
For any $X \subset \Lambda \subset \Gamma$, define the set of $X$-surface sets in $\Lambda$ by setting
\begin{equation}
S_X^{\Lambda} = \{ Z \subset \Lambda \, | \, Z \cap X \neq \emptyset \mbox{ and } Z \cap ( \Lambda \setminus X) \neq \emptyset \} \, .
\end{equation}
The following bound holds.  
\begin{lem}\label{lem:NOS}
Let $(\Gamma, d)$ be a countable metric space equipped with an $F$-function $F$ and $\mathscr{L}= \{ L_Z \}_{Z\in \mathcal{P}_0(\Gamma)}$ a dissipative interaction with $\| \mathscr{L} \|_F< \infty$. Fix $X \in \mathcal{P}_0(\Gamma)$. For any $x \in X$ and $\Lambda \in \mathcal{P}_0(\Gamma)$ with $X \subset \Lambda$, one has that 
\begin{equation}\label{base_est_1}
\sum_{\substack{Z\in S_{X(r)}^\Lambda\\ Z\cap X = \emptyset}} \|L_Z\|_{cb} \sum_{z\in Z} F(d(x,z)) \le \| \mathscr{L}\|_{F}(C_{F} + \|F\|) \sum_{y\in \Lambda\setminus X(r)} F(d(x,y)).
\end{equation}
for any $r \geq 0$. 
\end{lem}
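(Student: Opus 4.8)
The plan is to swap the order of summation --- summing first over the points $z$, then over the sets $Z \ni z$ --- and to split the resulting sum over $z$ according to the position of $z$ relative to the inflated set $X(r)$. First I would use that every $Z$ occurring on the left of (\ref{base_est_1}) is disjoint from $X$, so that all of its points lie in $\Lambda \setminus X$; rearranging the (nonnegative) double sum, its left-hand side equals
\begin{equation}
\sum_{z \in \Lambda \setminus X} F(d(x,z)) \sum_{\substack{Z \in S_{X(r)}^\Lambda \\ Z \cap X = \emptyset,\ z \in Z}} \|L_Z\|_{cb} \, .
\end{equation}
Since $X \subset X(r)$, I then break the outer sum into the part with $z \notin X(r)$ and the part with $z \in X(r)$ (automatically with $z \notin X$), and estimate the inner sum differently in each case.

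For $z \notin X(r)$: membership of $Z$ in $S_{X(r)}^\Lambda$ supplies a point $u \in Z \cap X(r)$, necessarily distinct from $z$, so the doubly-anchored bound built into the definition of $\|\mathscr{L}\|_F$ gives $\sum_{Z \ni z, u} \|L_Z\|_{cb} \le \|\mathscr{L}\|_F F(d(z,u))$; summing over $u \in X(r) \subset \Gamma$ and using uniform summability (\ref{uni_sum}) bounds the inner sum by $\|\mathscr{L}\|_F \|F\|$. This part therefore contributes at most $\|\mathscr{L}\|_F \|F\| \sum_{y \in \Lambda \setminus X(r)} F(d(x,y))$, which is the $\|F\|$ half of the claimed constant.

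For $z \in X(r)$: now $S_{X(r)}^\Lambda$ supplies a point $w \in Z \cap (\Lambda \setminus X(r))$, again distinct from $z$, and the doubly-anchored bound gives $\sum_{Z \ni z, w} \|L_Z\|_{cb} \le \|\mathscr{L}\|_F F(d(z,w))$. What is left is $\|\mathscr{L}\|_F \sum_{z} F(d(x,z)) \sum_{w \in \Lambda \setminus X(r)} F(d(z,w))$; enlarging the $z$-sum to all of $\Gamma$ (harmless since $F \ge 0$) and applying the convolution condition (\ref{conv_cond}) collapses this to $\|\mathscr{L}\|_F C_F \sum_{w \in \Lambda \setminus X(r)} F(d(x,w))$, the $C_F$ half. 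Adding the two contributions yields (\ref{base_est_1}).

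The step I expect to be the main obstacle --- indeed essentially the only place where care is needed --- is ensuring that in each case the anchor produced by the surface-set condition is genuinely distinct from $z$, so that the two-point doubly-anchored estimate applies, and, in the second case, that one anchors at a point of $\Lambda \setminus X(r)$ rather than merely outside $X$; otherwise the convolution step would return a sum over $\Lambda \setminus X$ rather than the smaller set $\Lambda \setminus X(r)$ demanded on the right. Both points follow immediately from the observation that a point of $X(r)$ and a point outside $X(r)$ cannot coincide.
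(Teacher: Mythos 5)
Your proof is correct and follows essentially the same route as the paper's: split the $z$-sum according to whether $z$ lies in $X(r)$ or in $\Lambda \setminus X(r)$, anchor each set $Z$ at a second point on the opposite side of the boundary, and control the close part with the convolution constant $C_F$ and the far part with uniform summability $\|F\|$. The only cosmetic difference is that you interchange the order of summation explicitly at the outset, whereas the paper does so term by term.
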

\begin{proof}
Note that for any set $Z \subset \Lambda$, we may write 
\begin{equation} \label{decomp}
\sum_{z \in Z} \cdot = \sum_{z \in Z\cap X(r)} \cdot  \quad+ \quad \sum_{z \in Z \cap (\Lambda \setminus X(r))} \cdot 
\end{equation} 
We use this decomposition to prove (\ref{base_est_1}). Observe that those sites which are $r$-close to $X$ satisfy
\begin{eqnarray} \label{z_close_1} 
\sum_{\stackrel{Z \in S^{\Lambda}_{X(r)}:}{Z \cap X  = \emptyset}} \| L_Z \|_{cb} \sum_{z \in Z \cap X(r)} F(d(x,z)) & \leq &\sum_{y \in \Lambda \setminus X(r)}  \sum_{z \in X(r) \setminus X} F(d(x,z)) \sum_{\stackrel{Z \subset \Lambda :}{z,y \in Z }} \| L_Z \|_{cb}  \nonumber \\
& \leq & \| \mathscr{L} \|_F \sum_{y \in \Lambda \setminus X(r)}  \sum_{z \in X(r) \setminus X} F(d(x,z)) F(d(z,y)) \nonumber \\
& \leq & \| \mathscr{L} \|_F C_F \sum_{y \in \Lambda \setminus X(r)} F(d(x,y)) 
 \end{eqnarray}
Similarly, for those sites $r$-far from $X$, we have that
 \begin{eqnarray} \label{z_far_1}
\sum_{\stackrel{Z \in S^{\Lambda}_{X(r)}:}{Z \cap X = \emptyset}} \| L_Z \|_{cb} \sum_{z \in Z \cap ( \Lambda \setminus X(r))} F(d(x,z)) & \leq & \sum_{y \in X(r) \setminus X} \sum_{z \in \Lambda \setminus X(r)}  F(d(x,z)) \sum_{\stackrel{Z \subset \Lambda :}{z,y \in Z }} \| L_Z \|_{cb}  \nonumber \\
& \leq & \| \mathscr{L} \|_F \sum_{z \in \Lambda \setminus X(r)}  F(d(x,z)) \sum_{y \in X(r) \setminus X}  F(d(z,y)) \nonumber \\
& \leq & \| \mathscr{L} \|_F \| F \| \sum_{z \in \Lambda \setminus X(r)} F(d(x,z)) 
 \end{eqnarray}
Using (\ref{decomp}), one sees that (\ref{base_est_1}) follows from (\ref{z_close_1}) and (\ref{z_far_1}). 
\end{proof}

%
    
    \begin{thm}\label{thm:gen_sl_app}
    Let $(\Gamma, d)$ be a countable metric space equipped with an $F$-function $F$ and $\mathscr{L}= \{ L_Z \}_{Z \in \mathcal{P}_0(\Gamma)}$ a dissipative interaction with $\| \mathscr{L} \|_F< \infty$. Fix $X \in \mathcal{P}_0(\Gamma)$ and $A \in \mathcal{A}_X$. For any $\Lambda \in \mathcal{P}_0(\Gamma)$ with $X \subset \Lambda$, 
   \begin{equation}\label{localize}
            \|T^\Lambda_t(A) - T^{X(r)}_t(A)\| \le \|A\| \| \mathscr{L}\|_{F} \left(t + \frac{C_{F} + \| F \|}{C_{F}} \int_0^t (e^{vs} -1) \, ds \right) \sum_{x \in X} \sum_{y \in \Lambda \setminus X(r)} F(d(x,y)) 
            \end{equation}   
    for any $t \geq 0$ and $r \geq 1$. Here $v = \| \mathscr{L} \|_F C_F$. 
    \end{thm}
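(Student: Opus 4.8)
The plan is to mimic the Duhamel (interpolation) argument used for Theorem~\ref{thm:dyn_diff}, now comparing the full finite-volume evolution $T^\Lambda_t$ with the strictly local evolution $T^{X(r)}_t$, and then to control the resulting error by a case split governed by the surface sets $S^\Lambda_{X(r)}$, feeding Theorem~\ref{thm:NVZ_LRB} and Lemma~\ref{lem:NOS} into the long-range part. First I would write, for $A \in \mathcal{A}_X$,
\begin{equation}
T^\Lambda_t(A) - T^{X(r)}_t(A) = -\int_0^t \frac{d}{ds}\, T^\Lambda_{t-s}\big(T^{X(r)}_s(A)\big)\, ds = \int_0^t T^\Lambda_{t-s}\Big( \big(\mathscr{L}^\Lambda - \mathscr{L}^{X(r)}\big)\big(T^{X(r)}_s(A)\big) \Big)\, ds,
\end{equation}
and, using that each $T^\Lambda_{t-s}$ is ucp hence contractive in norm,
\begin{equation}
\big\| T^\Lambda_t(A) - T^{X(r)}_t(A) \big\| \le \int_0^t \big\| \big(\mathscr{L}^\Lambda - \mathscr{L}^{X(r)}\big)\big(T^{X(r)}_s(A)\big) \big\|\, ds.
\end{equation}

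The key structural point is that the full dynamics must sit on the outside and the strictly local one inside: since $\mathscr{L}^\Lambda - \mathscr{L}^{X(r)} = \sum_{Z \subset \Lambda,\, Z \not\subset X(r)} L_Z$, since every $L_Z$ annihilates observables supported in $\Lambda \setminus Z$ (immediate from the Lindblad form \eqref{eqn:lindblad_equation}), and since $T^{X(r)}_s(A) \in \mathcal{A}_{X(r)}$, the only $Z$ contributing to the integrand are those with $Z \cap X(r) \neq \emptyset$ \emph{and} $Z \cap (\Lambda \setminus X(r)) \neq \emptyset$, i.e.\ $Z \in S^\Lambda_{X(r)}$. Thus the integrand is at most $\sum_{Z \in S^\Lambda_{X(r)}} \| L_Z(T^{X(r)}_s(A)) \|$, and I would split this sum according to whether $Z \cap X = \emptyset$. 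If $Z \cap X \neq \emptyset$, I use the crude bound $\| L_Z(T^{X(r)}_s(A)) \| \le \| L_Z \|_{cb}\|A\|$; since each such $Z$ contains a point of $X$ and a point of $\Lambda \setminus X(r)$, the doubly anchored estimate gives $\sum_{Z \in S^\Lambda_{X(r)},\, Z \cap X \neq \emptyset} \| L_Z \|_{cb} \le \|\mathscr{L}\|_F \sum_{x \in X} \sum_{y \in \Lambda \setminus X(r)} F(d(x,y))$, and integrating this time-independent bound over $[0,t]$ produces the term $\|A\| \|\mathscr{L}\|_F\, t \sum_{x \in X}\sum_{y \in \Lambda\setminus X(r)} F(d(x,y))$.

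For the remaining case $Z \cap X = \emptyset$, I would apply Theorem~\ref{thm:NVZ_LRB} with $K = L_Z \in CB_0(Z)$ (note $L_Z(\one)=0$), regarding $T^{X(r)}_s$ as the finite-volume dynamics on $\Lambda$ generated by the truncated interaction $\{L_W : W \subset X(r)\}$, whose $F$-norm is at most $\|\mathscr{L}\|_F$; this yields $\| L_Z(T^{X(r)}_s(A)) \| \le \tfrac{\| L_Z \|_{cb}\|A\|}{C_F}(e^{vs}-1) \sum_{x \in X}\sum_{z \in Z} F(d(x,z))$. Summing over $Z \in S^\Lambda_{X(r)}$ with $Z \cap X = \emptyset$, interchanging the $x$- and $Z$-sums, and applying Lemma~\ref{lem:NOS} bounds the sum by $\tfrac{\|A\|}{C_F}(e^{vs}-1)\|\mathscr{L}\|_F(C_F + \|F\|) \sum_{x \in X}\sum_{y \in \Lambda\setminus X(r)} F(d(x,y))$; integrating over $[0,t]$ gives the second term in \eqref{localize}, and adding the two contributions recovers \eqref{localize} exactly.

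I expect the part needing the most care is the bookkeeping of the surface sets $S^\Lambda_{X(r)}$: one must notice that the outside/inside placement of the two semigroups in the Duhamel formula is precisely what restricts the contributing sets to $S^\Lambda_{X(r)}$, so that Lemma~\ref{lem:NOS} supplies exactly the combinatorial sum needed and the factor $C_F + \|F\|$ appears with the correct weight. A secondary, minor technicality is that an interaction term $L_Z$ need not be supported inside $X(r)$; this causes no problem because Theorem~\ref{thm:NVZ_LRB} can be invoked with reference volume $\Lambda$ (which contains both $X$ and $Z$) rather than $X(r)$, the truncated interaction still having finite $F$-norm. Everything else is routine estimation.
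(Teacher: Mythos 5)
Your proposal is correct and follows essentially the same route as the paper's proof: the Duhamel interpolation, the reduction of the generator difference to the surface sets $S^\Lambda_{X(r)}$, the split according to whether $Z\cap X=\emptyset$, the crude $t$-linear bound via the $F$-norm for the near terms, and Theorem~\ref{thm:NVZ_LRB} with $K=L_Z$ combined with Lemma~\ref{lem:NOS} for the far terms. Your explicit remark that $T^{X(r)}_s$ should be viewed as the $\Lambda$-volume dynamics of the truncated interaction (so that Theorem~\ref{thm:NVZ_LRB} applies even when $Z\not\subset X(r)$) is a correct clarification of a point the paper leaves implicit.
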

    \begin{proof}
We argue as in the beginning of the proof of Theorem~\ref{thm:dyn_diff}. Note that for any $t \geq 0$, 
    \[
        T^{\Lambda}_t(A) - T^{X(r)}_t(A) = - \int_0^t \frac{d}{ds}T^{\Lambda}_{t-s}(T^{X(r)}_s(A)) \, ds = \int_0^t T^{\Lambda}_{t-s}\left( (\lind^{\Lambda}-\lind^{X(r)})(T^{X(r)}_s(A))\right) \, ds.
    \] 
An application of the triangle inequality shows that
\begin{equation}\label{eqn:gen_sl_app_1}
 \left\|T^{\Lambda}_t(A)-T^{X(r)}_t(A) \right\| \leq \int_0^t \left\|  \left( \mathscr{L}^{\Lambda} - \mathscr{L}^{X(r)} \right) T^{X(r)}_s(A) \right\| \, ds \le \sum_{Z \in S^{\Lambda}_{X(r)}} \int_0^t \|L_Z T^{X(r)}_s(A)\| \, ds
\end{equation}
For those terms $Z$ which intersect $X$, one has
 \begin{eqnarray}
  \sum_{\substack{Z\in S^\Lambda_{X(r)}:\\ Z\cap X \neq \emptyset}} \int_0^t \|L_Z T^{X(r)}_s(A)\| \, ds & \leq & t\|A\|   \sum_{\substack{Z\in S^\Lambda_{X(r)}:\\ Z\cap X \neq \emptyset}}  \| L_Z \|_{cb} \nonumber \\
  & \leq & t \| A \| \sum_{x\in X} \sum_{y\in \Lambda \setminus X(r)} \sum_{\stackrel{Z \subset \Lambda:}{x,y \in Z}}\|L_Z\|_{cb} \nonumber \\
  & \leq & t\|A\| \| \mathscr{L}\|_F \sum_{x \in X} \sum_{y\in \Lambda \setminus X(r)}F(d(x,y)) \, .
  \end{eqnarray}
For those terms $Z$ that do not intersect $X$, Theorem~\ref{thm:NVZ_LRB} applies and 
\begin{align*}
            \sum_{\substack{Z \in S^\Lambda_{X(r)}:\\ Z\cap X = \emptyset}} \int_0^t \|L_Z T^{X(r)}_s(A)\| \, ds &\le \frac{\|A\|}{C_{F}} \int_0^t (e^{vs} - 1) \, ds 
            \sum_{\substack{Z \in S_{X(r)}^{\Lambda}\\ Z\cap X = \emptyset}} \|L_Z\|_{cb}\sum_{x\in X}\sum_{z\in Z}F(d(x,z))\\
            &\le \frac{\|A\|}{C_{F}} \int_0^t (e^{vs} - 1) \, ds \,  \| \mathscr{L}\|_{F} (C_F + \|F\|) \sum_{x \in X} \sum_{y \in \Lambda \setminus X(r)} F(d(x,y)),
\end{align*} 
where, for the final bound above,  we used Lemma~\ref{lem:NOS}. This completes the proof. 
\end{proof}

%

The final result of this section, Theorem~\ref{thm:sl_app_poly_dec} below, 
also provides an estimate on an approximation of the finite-volume dynamics with a strictly local dynamics. In this case, however, we restrict our attention to models generated by interactions which are only known to decay by a power-law. Here, in order to obtain a bound analogous to 
the one found in Theorem~\ref{thm:poly_dec_lrb}, we find it useful to argue differently than we did in the proof of
Theorem~\ref{thm:gen_sl_app}.

%

 \begin{thm}\label{thm:sl_app_poly_dec}
Let $(\Gamma, d)$ be $\nu$-regular and $\mathscr{L} = \{L_Z \}_{Z \in \mathcal{P}_0(\Gamma)}$ be a dissipative interaction with
$\| \mathscr{L} \|_F< \infty$ for the choice of $F$-function
\begin{equation} \label{power_law_F_2}
F(x) = \frac{1}{(1+x)^{\alpha}} \quad \mbox{for all } x \geq 0
\end{equation}
with some $\alpha > 2\nu +1$. Fix $\epsilon \in (0, \alpha - 2 \nu -1)$, set $\alpha_{\epsilon} = \alpha - \nu - 1- \epsilon$, and 
let $0 < \delta < 1$ satisfy $(1- \delta)\alpha_{\epsilon} > \nu$. 

Fix $X \in \mathcal{P}_0(\Gamma)$, $A \in \mathcal{A}_X$, 
and take any $\Lambda \in \mathcal{P}_0(\Gamma)$ with $X \subset \Lambda$. 
In this case, there $C>0$ for which 
\begin{equation} \label{power_law_lrb_2}
 \left\|T^{\Lambda}_t(A)-T^{X(r)}_t(A) \right\|  \leq  C \| A \| |X| \| \mathscr{L} \|_F \cdot \frac{t}{(1+ r)^{(1- \delta) \alpha_{\epsilon} - \nu}}
\end{equation}
whenever $r \geq 1$ and $0 \leq evt \leq r^{\delta}$. Moreover, one may take
\begin{equation} \label{constant_2} 
C =  \frac{\kappa C_{\epsilon}}{C_F}  \left(  \kappa 2^{2\alpha_{\epsilon}} 2^{1- \delta} \left( C_{\epsilon} + C_F \right) + e(C_F + \| F \|) \right) \, .
\end{equation}
    
    \end{thm}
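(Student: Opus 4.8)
The plan is to imitate the scheme of Theorem~\ref{thm:poly_dec_lrb}, now applied to the difference $T^{\Lambda}_t(A)-T^{X(r)}_t(A)$, by inserting a finite--range approximation at the intermediate scale $R=r^{1-\delta}$, exactly as there. First I would telescope through the $R$--finite--range dynamics, writing $T^{\Lambda}_t(A)-T^{X(r)}_t(A)$ as the sum of the three differences $T^{\Lambda}_t(A)-T^{\Lambda,R}_t(A)$, $\ T^{\Lambda,R}_t(A)-T^{X(r),R}_t(A)$, and $\ T^{X(r),R}_t(A)-T^{X(r)}_t(A)$. One cannot simply quote Theorem~\ref{thm:gen_sl_app}: its $\int_0^t(e^{vs}-1)\,ds$ factor grows like $e^{vt}$, and in the regime $0\le evt\le r^{\delta}$ this cannot be beaten by polynomial decay; the finite--range cutoff is precisely what trades that exponential for a power of $t$.

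The first and third differences are handled by Theorem~\ref{thm:dyn_diff} directly. The first is the finite--range error of $\mathscr{L}$ on $\Lambda$; taking the buffer in (\ref{dyn_diff_est}) equal to $r$, it is bounded in terms of $G(R/2)$, $|X(r)|$, $\E_{vt}(1+r/R)$, and the $F$--tail sum over $\Lambda\setminus X(r)$. The third is the same error on the smaller volume $X(r)$, to which Theorem~\ref{thm:dyn_diff} applies with $\Lambda$ replaced by $X(r)$ and buffer $r$; then the tail sum over $X(r)\setminus X(r)$ is empty, so only the $G(R/2)\,t\,|X(r)|$ term survives. Putting $R=r^{1-\delta}$, estimating $G(R/2)$ by (\ref{g_bd}), $|X(r)|\le|X|\kappa r^{\nu}$ by $\nu$--regularity, the tail sum by (\ref{F_sum}), and $\E_{vt}(1+r/R)=\E_{vt}(1+r^{\delta})\le vt$ (valid since $evt\le r^{\delta}$), and then rearranging the powers of $1+r^{1-\delta}$ against $1+r$ exactly as in the closing display of the proof of Theorem~\ref{thm:poly_dec_lrb}, one gets a bound of the form $t/(1+r)^{(1-\delta)\alpha_\epsilon-\nu}$ with all faster--decaying remainders absorbed; this produces the $\kappa 2^{2\alpha_\epsilon}2^{1-\delta}(C_\epsilon+C_F)$ contribution to (\ref{constant_2}).

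The middle difference is the heart of the matter. Using the integral representation of a semigroup difference (as in the proof of Theorem~\ref{thm:dyn_diff}), $\|T^{\Lambda,R}_t(A)-T^{X(r),R}_t(A)\|\le\sum_{Z\in S^{\Lambda}_{X(r)},\,\diam Z\le R}\int_0^t\|L_Z\,T^{X(r),R}_s(A)\|\,ds$ — the sets $Z$ lying entirely outside $X(r)$ contribute nothing, since $L_Z$ annihilates operators supported off $Z$. Two observations drive the estimate. First, because $R\le r$, any $Z$ with $\diam Z\le R$ meeting $X$ would lie in $X(R)\subset X(r)$, contradicting $Z\not\subset X(r)$; hence $Z\cap X=\emptyset$ and moreover $d(X,Z)>r-R$. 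Second, $T^{X(r),R}_s$ is the $R$--finite--range dynamics of the restriction of $\mathscr{L}$ to subsets of $X(r)$ (a legitimate dissipative interaction with $F$--norm at most $\|\mathscr{L}\|_F$), so Theorem~\ref{thm:appLRB} applies verbatim on the volume $\Lambda$ with $K=L_Z\in CB_0(Z)$ and gives $\|L_Z\,T^{X(r),R}_s(A)\|\le\frac{\|L_Z\|_{cb}\|A\|}{C_F}\,\E_{vs}(d(X,Z)/R)\sum_{x\in X}\sum_{z\in Z}F(d(x,z))$, the tail being bounded $Z$--independently by $\E_{vs}(r^{\delta}-1)$. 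Pulling this factor out, the residual double sum over $Z$ and $z\in Z$ is exactly the quantity controlled by Lemma~\ref{lem:NOS} (summed over $x\in X$), and $\int_0^t\E_{vs}(r^{\delta}-1)\,ds=\tfrac1v\E_{vt}(r^{\delta})\le et$ in the allowed regime; together with (\ref{F_sum}) this yields the $e(C_F+\|F\|)$ contribution to (\ref{constant_2}). Adding the three bounds and discarding the faster--decaying terms gives (\ref{power_law_lrb_2}) and the stated constant.

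The step I expect to be the main obstacle is the middle difference, where one must simultaneously (a) replace the exponential--in--$t$ behavior of the naive strictly--local bound by a linear factor, which forces the choice $R=r^{1-\delta}$ together with the closed--form integration $\int_0^t\E_{vs}(k)\,ds=\tfrac1v\E_{vt}(k+1)$; (b) recognize $T^{X(r),R}$ as the finite--range dynamics of a genuine dissipative interaction so that Theorem~\ref{thm:appLRB} and Lemma~\ref{lem:NOS} become applicable; and (c) carry out the $(1+r^{1-\delta})$--versus--$(1+r)$ power bookkeeping carefully enough to land on the exponent $(1-\delta)\alpha_\epsilon-\nu$ and the explicit constant (\ref{constant_2}). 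Everything else is a faithful repetition of estimates already appearing in the proofs of Theorem~\ref{thm:dyn_diff}, Theorem~\ref{thm:poly_dec_lrb}, and Lemma~\ref{lem:NOS}.
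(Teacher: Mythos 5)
Your proposal is correct and follows essentially the same route as the paper's proof: the identical three-term telescoping through the finite-range dynamics at scale $R=r^{1-\delta}$, Theorem~\ref{thm:dyn_diff} for the two outer differences, and the middle difference handled via the surface-set sum, Theorem~\ref{thm:appLRB} applied to $K=L_Z$ with $d(X,Z)>r-R$, Lemma~\ref{lem:NOS}, and the integration identity for $\E_{vt}$ in the regime $evt\le r^{\delta}$. Your two small refinements (noting the tail sum vanishes for the $X(r)$-volume term, and explicitly justifying that $T^{X(r),R}$ is the finite-range dynamics of a restricted interaction so Theorem~\ref{thm:appLRB} applies) are harmless sharpenings of steps the paper treats implicitly.
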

    \begin{proof}
For any $R>0$, one has that
\begin{eqnarray} \label{3_term_bd}
 \left\|T^{\Lambda}_t(A)-T^{X(r)}_t(A) \right\| & \leq & \left\|T^{\Lambda}_t(A)-T^{\Lambda, R}_t(A) \right\| +  \left\| T^{\Lambda, R}_t(A)-T^{X(r), R}_t(A) \right\| +  \nonumber \\
& \mbox{ } & \quad +  \left\|T^{X(r), R}_t(A) -T^{X(r)}_t(A) \right\| \, .
\end{eqnarray}
An application of Theorem~\ref{thm:dyn_diff} shows that
\begin{eqnarray}
\left\|T^{\Lambda}_t(A)-T^{\Lambda, R}_t(A) \right\| & \leq & \|A\| \| \mathscr{L} \|_F G(R/2) \left( t |X(r)| + \frac{\E_{vt}(1+ rR^{-1})}{v C_F} \,
        \sum_{x \in X} \sum_{y \in \Lambda \setminus X(r)} F(d(x,y)) \right)  \\
        & \leq & \| A \| \| \mathscr{L} \|_F \left( \frac{ \kappa C_{\epsilon}}{(1+ \lfloor R/2 \rfloor)^{\alpha_{\epsilon}}} \cdot t \cdot |X| \kappa r^{\nu} + 
        \frac{ \kappa C_{\epsilon}}{(1+ \lfloor R/2 \rfloor)^{\alpha_{\epsilon}}}  \cdot \frac{e^{-r/R} t e^{evt}}{C_F} \cdot \frac{ \kappa C_{\epsilon}|X|}{(1+ r)^{\alpha_{\epsilon}}}\right) \nonumber
\end{eqnarray}
where we have also used bounds as in (\ref{term2_bd}) and (\ref{term3_bd}) from Theorem~\ref{thm:poly_dec_lrb}. 
Note that an identical estimate holds for the third term on the right-hand-side of (\ref{3_term_bd}) above. The time-independent 
exponential factor above motivates the choice of parametrization: $R = r^{1- \delta}$. We now bound the middle term on the
right-hand-side of (\ref{3_term_bd}). Arguing as in the proof of Theorem~\ref{thm:gen_sl_app}, we find that
\begin{equation} \label{middle_diff}
\left\| T^{\Lambda, R}_t(A)-T^{X(r), R}_t(A) \right\| \leq \sum_{\substack{Z \in S^{\Lambda}_{X(r)}: \\ {\rm diam}(Z) \leq R}} \int_0^t \| L_Z T_s^{X(r), R}(A) \| \, ds 
\end{equation}
For $r \geq 1$, it is clear that $R<r$ with our choice of parametrization. As such, for any $Z \in S^{\Lambda}_{X(r)}$ with ${\rm diam}(Z) \leq R$, we have that
$d(X,Z) > r-R>0$. In this case, we may apply Theorem~\ref{thm:appLRB} to all terms on the right-hand-side of (\ref{middle_diff}). We find that
\begin{eqnarray}
\left\| T^{\Lambda, R}_t(A)-T^{X(r), R}_t(A) \right\| & \leq &
\sum_{\substack{Z \in S^{\Lambda}_{X(r)}: \\ {\rm diam}(Z) \leq R}} \int_0^t  \frac{\| A \| \| L_Z \|_{cb} }{C_F}  \E_{vs}\left( \frac{d(X,Z)}{R} \right)  \sum_{x \in X}  \sum_{z \in Z} F(d(x,z) \, ds \nonumber \\
& \leq &
\frac{\|A \|}{C_F} \int_0^t \E_{vs} \left( \frac{r-R}{R} \right) \, ds \sum_{x \in X} \sum_{\substack{Z \in S^{\Lambda}_{X(r)}: \\ {\rm diam}(Z) \leq R}} \| L_Z \|_{cb}  \sum_{z \in Z} F(d(x,z)  
\nonumber \\
& \leq &
\frac{\|A \| (C_F + \| F \|)  }{C_F^2} \E_{vt} \left( \frac{r}{R} \right)  \sum_{x \in X} \sum_{y \in \Lambda \setminus X(r)} F(d(x,y))  \nonumber \\
& \leq & \frac{\|A \| \| \mathscr{L} \|_F(C_F + \| F \|)  }{C_F} \cdot e^{- r/R} t e^{evt+1} \cdot \frac{\kappa C_{\epsilon} |X|}{(1+r)^{\alpha_{\epsilon}}}
\end{eqnarray}
where we have used Lemma~\ref{lem:NOS}, (\ref{E_bd}), and that $v= \| \mathscr{L} \|_F C_F$. Setting $M = \kappa C_{\epsilon} \| \mathscr{L} \|_F \| A \| |X|$, we have found that 
\begin{eqnarray}
 \left\|T^{\Lambda}_t(A)-T^{X(r)}_t(A) \right\| & \leq & M \left( \frac{2 \kappa t r^{\nu}}{(1+ \lfloor R/2 \rfloor)^{\alpha_{\epsilon}}} + 
 \frac{2 \kappa C_{\epsilon} t e^{-r/R} e^{evt}}{C_F(1+ \lfloor R/2 \rfloor)^{\alpha_{\epsilon}} (1+r)^{\alpha_{\epsilon}}}  + \frac{(C_F + \| F \|) t e^{-r/R} e^{evt+1}}{C_F(1+r)^{\alpha_{\epsilon}}} \right)  \nonumber \\
 & \leq & M t \left( \frac{ \kappa 2^{2 \alpha_{\epsilon}} 2^{2-\delta}}{(1+ r)^{(1- \delta) \alpha_{\epsilon} - \nu} } +  \frac{\kappa C_{\epsilon} 2^{2 \alpha_{\epsilon}} 2^{2-\delta} e^{-r^{\delta}} e^{evt}}{C_F(1+ r)^{(2- \delta)\alpha_{\epsilon}} }  + \frac{e(C_F + \| F \|) e^{-r^{\delta}} e^{evt}}{C_F(1+r)^{\alpha_{\epsilon}}} \right) 
\end{eqnarray}
and moreover, the power are ordered by $(2 - \delta) \alpha_{\epsilon} > \alpha_{\epsilon} > (1- \delta) \alpha_{\epsilon} - \nu$. 
We have established (\ref{power_law_lrb_2}), whenever $0 \leq evt \leq r^{\delta}$, with $C$ as in (\ref{constant_2}). 
\end{proof}

%

%

\section{Correlation Decay}\label{sec:prop_of_corr}
In this section, we use the quasi-locality estimates proven in the previous section to obtain 
bounds on the decay of correlations in various states. First, we consider a class of states with 
quantified spatial decay, see Definition~\ref{def:spatiallydecayingstate}, and prove an estimate on dynamic correlations. A general statement of this
bound is the content of Theorem~\ref{thm:g_decaying}, and the methods we use are similar to those found in \cite{NachtergaeleOgataSims}. 
We supplement Theorem~\ref{thm:g_decaying} with more explicit estimates, depending on the decay of the
interaction generating the dynamics, in Lemma~\ref{lem:gen_cor_dec} and Lemma~\ref{lem:poly_cor_dec}. Next, we consider a class of dynamical fixed-points, see Definition~\ref{def:steadystate}. 
For these states, we obtain estimates on decay of correlations, see Theorem~\ref{thm:steadystate1}, using ideas which go back to \cite{Poulin}.    
Again, we supplement this general bound with more explicit estimates in Lemma~\ref{lem:gen_fp_dec} and Lemma~\ref{lem:poly_cor_dec_2}.

\subsection{Spatially Decaying States}\label{sec:spatiallydecaying}
The goal of this section is to estimate the effect of an irreversible dynamics 
on the decay of correlations for states with quantified spatial decay. 
To make this precise, we introduce the notion of a decay function which governs
the correlations for states on a local algebra. As in the previous section,
we consider quantum spin systems defined over a countable metric space $(\Gamma, d)$. 

\begin{define}\label{def:spatiallydecayingstate}
Let $\Lambda \subset \Gamma$ be finite and let $\omega$ be a state on the local algebra $\A_{\Lambda}$.
We say that $\omega$ has spatial correlations governed by $G_{\Lambda}$ if there is a function 
$G_{\Lambda}:[0, \infty) \times [0, \infty) \times [0, \infty) \to [0,2]$ for which given any $x,y \in [0, \infty)$ the map $d \mapsto G_{\Lambda}(x,y,d)$ is 
non-increasing and 
 \begin{equation}
            |\omega(AB) - \omega(A)\omega(B)| \le \| A \| \| B \| G_{\Lambda}(|X|,|Y|,d(X,Y))
\end{equation} 
for all $A \in \A_X$ and $B \in \A_Y$ where $X$ and $Y$ are arbitrary subsets of $\Lambda$.
\end{define}

Of course, for any state $\omega$ on $\A_{\Lambda}$, $\omega$ has correlations governed by 
the constant function $G_{\Lambda}(x,y,d) =2$. On the other hand, if $\omega$ is a 
product state, then one may take $G_{\Lambda}(x,y,d) =0$ for any $d>0$. In more interesting situations, $\omega$ has
correlations governed by a function of the form $G_{\Lambda}(x,y,d) = x^ny^m \tilde{G}_{\Lambda}(d)$ 
where $n$ and $m$ are non-negative integers and $\tilde{G}_{\Lambda}(d)$ decays swiftly to (or is identically) zero for large $d$. In any case, one has the following result.

\begin{thm}\label{thm:g_decaying}
Let $(\Gamma, d)$ be a countable metric space, $\mathscr{L} = \{L_Z \}_{Z \in \mathcal{P}_0(\Gamma)}$ a dissipative interaction,
and $\Lambda \subset \Gamma$ finite. If $\omega$ is a state on $\A_{\Lambda}$ with correlations
governed by $G_{\Lambda}$, then for any $A \in \A_X$ and $B \in \A_Y$ where $X,Y \subset \Lambda$ 
satisfy $d(X,Y) \geq 2$, one has that
\begin{equation} \label{base_cor_est}
\left|  \omega(T_t^{\Lambda}(AB)) - \omega(T_t^{\Lambda}(A)) \omega(T_t^{\Lambda}(B))\right| \leq \|A \| \|B \| G_{\Lambda}(|X(r)|, |Y(r)|,d(X(r), Y(r)) +C_{A,B}^{\Lambda}(r,t).
\end{equation}
for any $t \geq 0$ and $2 \leq 2r \leq d(X,Y)$. Here we have denoted by
\begin{eqnarray} \label{def_cab}
C_{A,B}^{\Lambda}(r,t) & = &  \| A \| \| T_t^{\Lambda}(B) - T_t^{Y(r)}(B) \| +  \| B \| \| T_t^{\Lambda}(A) - T_t^{X(r)}(A) \| \nonumber \\
& \mbox{ } & \quad + \| T_t^{\Lambda}(AB) - T_t^{(X \cup Y)(r)}(AB) \| \, .
\end{eqnarray} 
\end{thm}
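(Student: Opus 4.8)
The plan is to adapt the reversible-case argument of \cite{NachtergaeleOgataSims}: pass to the state $\omega\circ T_t^{\Lambda}$, replace each globally evolved observable by a locally evolved one, invoke the spatial decay of $\omega$ on the resulting strictly local piece, and collect all replacement errors into $C_{A,B}^{\Lambda}(r,t)$. The genuinely new feature compared to \cite{NachtergaeleOgataSims} is that $T_t^{\Lambda}$ is not an algebra homomorphism, so $T_t^{\Lambda}(AB)\neq T_t^{\Lambda}(A)\,T_t^{\Lambda}(B)$ in general; accommodating this is exactly why a term built from $T_t^{(X\cup Y)(r)}(AB)$ must appear in $C_{A,B}^{\Lambda}(r,t)$.

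I would begin by recording two elementary facts used repeatedly: since $\omega$ is a state, $|\omega(C)|\le\|C\|$ for all $C\in\mathcal{A}_{\Lambda}$; and since each finite-volume evolution $T_s^{\Lambda'}$ is unit preserving and completely positive, it is a norm contraction, so $\|T_t^{X(r)}(A)\|\le\|A\|$, $|\omega(T_t^{\Lambda}(A))|\le\|A\|$, and similarly for $B$. Note also $T_t^{X(r)}(A)\in\mathcal{A}_{X(r)}$ and $T_t^{Y(r)}(B)\in\mathcal{A}_{Y(r)}$, and that the hypothesis $2\le 2r\le d(X,Y)$ gives $d(X(r),Y(r))\ge d(X,Y)-2r\ge 0$, so $X(r)$ and $Y(r)$ are disjoint; in the borderline case $d(X(r),Y(r))=0$ the asserted estimate is trivial since $G_{\Lambda}(\cdot,\cdot,0)\le 2$ and $|\omega(T_t^{\Lambda}(AB))-\omega(T_t^{\Lambda}(A))\omega(T_t^{\Lambda}(B))|\le 2\|A\|\,\|B\|$ always holds.

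Next I would telescope the left-hand side of (\ref{base_cor_est}) as a sum of four differences: (i) $\omega(T_t^{\Lambda}(AB))-\omega(T_t^{(X\cup Y)(r)}(AB))$; (ii) $\omega(T_t^{(X\cup Y)(r)}(AB))-\omega\!\big(T_t^{X(r)}(A)\,T_t^{Y(r)}(B)\big)$; (iii) $\omega\!\big(T_t^{X(r)}(A)\,T_t^{Y(r)}(B)\big)-\omega(T_t^{X(r)}(A))\,\omega(T_t^{Y(r)}(B))$; and (iv) $\omega(T_t^{X(r)}(A))\,\omega(T_t^{Y(r)}(B))-\omega(T_t^{\Lambda}(A))\,\omega(T_t^{\Lambda}(B))$. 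By the state property, $|(\mathrm{i})|\le\|T_t^{\Lambda}(AB)-T_t^{(X\cup Y)(r)}(AB)\|$, the third term of $C_{A,B}^{\Lambda}(r,t)$. Term (iii) is estimated directly from Definition~\ref{def:spatiallydecayingstate} applied to $T_t^{X(r)}(A)\in\mathcal{A}_{X(r)}$ and $T_t^{Y(r)}(B)\in\mathcal{A}_{Y(r)}$, giving $|(\mathrm{iii})|\le\|A\|\,\|B\|\,G_{\Lambda}(|X(r)|,|Y(r)|,d(X(r),Y(r)))$. Term (iv) is handled by the two-step scalar telescoping $[\omega(T_t^{X(r)}(A))-\omega(T_t^{\Lambda}(A))]\,\omega(T_t^{Y(r)}(B)) + \omega(T_t^{\Lambda}(A))\,[\omega(T_t^{Y(r)}(B))-\omega(T_t^{\Lambda}(B))]$, which via the state property and the contraction bounds is at most $\|B\|\,\|T_t^{\Lambda}(A)-T_t^{X(r)}(A)\| + \|A\|\,\|T_t^{\Lambda}(B)-T_t^{Y(r)}(B)\|$, the first two terms of $C_{A,B}^{\Lambda}(r,t)$.

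The crux — and the step I expect to be the main obstacle — is term (ii), which quantifies the failure of the localized dynamics $T_t^{(X\cup Y)(r)}$ to decouple across the disjoint partition $(X\cup Y)(r)=X(r)\sqcup Y(r)$. Writing $AB=A\otimes B$ in $\mathcal{A}_{X(r)}\otimes\mathcal{A}_{Y(r)}$, one has $T_t^{X(r)}(A)\,T_t^{Y(r)}(B)=e^{t(\mathscr{L}^{X(r)}+\mathscr{L}^{Y(r)})}(AB)$ while $T_t^{(X\cup Y)(r)}(AB)=e^{t\mathscr{L}^{(X\cup Y)(r)}}(AB)$, and the generator defect $\mathscr{L}^{(X\cup Y)(r)}-\mathscr{L}^{X(r)}-\mathscr{L}^{Y(r)}=\sum_{Z}L_Z$ runs only over $Z\subset(X\cup Y)(r)$ meeting both $X(r)$ and $Y(r)$. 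For an interaction of finite range smaller than $d(X(r),Y(r))$ there are no such $Z$, so (ii) vanishes identically — this is the setting one reduces to after passing to a finite-range approximation as in Section~\ref{sec:fra} — and in general I would estimate (ii) by a Duhamel expansion in this defect generator, bounding $\|L_Z\,e^{t(\mathscr{L}^{X(r)}+\mathscr{L}^{Y(r)})}(AB)\|$ termwise and absorbing the outcome into the $T_t^{(X\cup Y)(r)}(AB)$-contribution of $C_{A,B}^{\Lambda}(r,t)$. Since in \cite{NachtergaeleOgataSims} one has $\tau_t(AB)=\tau_t(A)\,\tau_t(B)$, term (ii) simply does not arise there; it is the one place where the irreversible setting genuinely demands a new argument, and getting the bookkeeping between $T_t^{(X\cup Y)(r)}(AB)$ and the product $T_t^{X(r)}(A)\,T_t^{Y(r)}(B)$ clean is what I expect to require the most care.
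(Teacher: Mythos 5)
Your four-term telescope is, up to bookkeeping, the paper's own argument. The paper writes the connected correlator as $\omega(T_t^{\Lambda}(AB_t^{\omega}))$ with $B_t^{\omega}=B-\omega(T_t^{\Lambda}(B))\one$, splits off the difference between the full and the $(X\cup Y)(r)$-localized evolution, and then invokes Definition~\ref{def:spatiallydecayingstate} on $T_t^{X(r)}(A)$ and $T_t^{Y(r)}(B)$; unwinding this reproduces exactly your terms (i), (iii) and (iv), each of which you bound correctly (and as the paper does) by the corresponding pieces of $C_{A,B}^{\Lambda}(r,t)$ and by $G_{\Lambda}$, using contractivity of the ucp maps and $|\omega(\cdot)|\le\|\cdot\|$.

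The one point of divergence is your term (ii), and there your proposal does not close. The paper disposes of (ii) by asserting the identities $T_t^{(X\cup Y)(r)}(AB)=T_t^{X(r)}(A)\,T_t^{Y(r)}(B)$ and $T_t^{(X\cup Y)(r)}(A)=T_t^{X(r)}(A)$ outright, i.e.\ it uses $\mathscr{L}^{(X\cup Y)(r)}=\mathscr{L}^{X(r)}+\mathscr{L}^{Y(r)}$; as you observe, this holds only when no term $L_W$ with $W\subset (X\cup Y)(r)$ meets both $X(r)$ and $Y(r)$ (for instance, finite range smaller than $d(X(r),Y(r))$), and fails for a genuinely long-range interaction. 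You are right that this is the crux, but the repair you sketch --- a Duhamel expansion in the defect generator whose outcome is ``absorbed into the $T_t^{(X\cup Y)(r)}(AB)$-contribution of $C_{A,B}^{\Lambda}(r,t)$'' --- is not available: $C_{A,B}^{\Lambda}(r,t)$ is the fixed quantity (\ref{def_cab}), and none of its three differences controls $\|T_t^{(X\cup Y)(r)}(AB)-T_t^{X(r)}(A)\,T_t^{Y(r)}(B)\|$. A Duhamel bound on the defect therefore produces an \emph{additional} error on the right-hand side, so you would be proving a weakened variant of (\ref{base_cor_est}) rather than the stated inequality. To obtain the theorem exactly as written you must either establish the factorization as an identity (which restricts the interactions to those with no terms coupling $X(r)$ to $Y(r)$) or modify the statement to include the decoupling defect as a fourth contribution to $C_{A,B}^{\Lambda}(r,t)$; as it stands, your term (ii) is left unproved in the general long-range case.
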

\begin{proof} 
For any $t \geq 0$, set $B_t^{\omega} = B - \omega(T^{\Lambda}_t(B))\one$ and observe that
 \begin{equation}
 \omega(T_t^{\Lambda}(AB)) - \omega(T_t^{\Lambda}(A)) \omega(T_t^{\Lambda}(B)) = \omega(T_t^{\Lambda}(AB_t^{\omega})) \, .
 \end{equation}
 With $Z = X \cup Y$, one has that 
 \begin{equation} \label{comp_1}
 | \omega(T_t^{\Lambda}(AB_t^{\omega})) | \leq |\omega(T_t^{Z(r)}(AB_t^{\omega}))|  + |\omega(T_t^{\Lambda}(AB_t^{\omega})) - \omega(T_t^{Z(r)}(AB_t^{\omega})) |
 \end{equation}
 for any $r \geq 1$. If $2 \leq 2r < d(X,Y)$, then
 \begin{equation}
 T_t^{Z(r)}(AB_t^{\omega}) = T_t^{Z(r)}(AB) - \omega(T_t^{\Lambda}(B)) T_t^{Z(r)}(A) = T_t^{X(r)}(A) T_t^{Y(r)}(B) - \omega(T_t^{\Lambda}(B)) T_t^{X(r)}(A) 
 \end{equation} 
In this case, the first term on the right hand side of (\ref{comp_1}) can be estimated as
\begin{eqnarray} \label{1_term}
 |\omega(T_t^{Z(r)}(AB_t^{\omega}))| & \leq & \left| \omega(T_t^{X(r)}(A) T_t^{Y(r)}(B)) - \omega(T_t^{X(r)}(A)) \omega(T_t^{Y(r)}(B)) \right|   \nonumber \\
 & \mbox{ } & \quad +  \left| \omega(T_t^{X(r)}(A)) \left( \omega \left( T_t^{\Lambda}(B) - T_t^{Y(r)}(B) \right) \right) \right| \nonumber \\
 & \leq & \|A\| \| B \| G_{\Lambda}\left(|X(r)|, |Y(r)|, d(X(r), Y(r))\right) + \| A \| \| T_t^{\Lambda}(B) - T_t^{Y(r)}(B) \| 
\end{eqnarray}
For the second term on the right hand side of (\ref{comp_1}), note that
\begin{equation}
T_t^{\Lambda}(AB_t^{\omega}) - T_t^{Z(r)}(AB_t^{\omega}) = T_t^{\Lambda}(AB) - T_t^{Z(r)}(AB) - \omega(T_t^{\Lambda}(B)) \left( T_t^{\Lambda}(A) - T_t^{X(r)}(A) \right) 
\end{equation}
and therefore, 
\begin{equation} \label{2_term}
|\omega(T_t^{\Lambda}(AB_t^{\omega})) - \omega(T_t^{Z(r)}(AB_t^{\omega})) | \leq \| T_t^{\Lambda}(AB) - T_t^{Z(r)}(AB) \| + \| B \| \| T_t^{\Lambda}(A) - T_t^{X(r)}(A) \| 
\end{equation}
Combining (\ref{comp_1}), (\ref{1_term}), and (\ref{2_term}), we have established (\ref{base_cor_est}). 
\end{proof}

Generally, the quantity $C_{A,B}^{\Lambda}(r,t)$ estimates the correlations which 
this irreversible dynamics generates between $A$ and $B$. In principle, the estimate above holds for all $r \geq 1$, however,
once $d(X(r), Y(r))=0$, $\omega$ no longer provides any spatial decay. Note that if $\omega$ is a product state, then Theorem~\ref{thm:g_decaying} implies that 
\begin{equation}
|  \omega(T_t^{\Lambda}(AB)) - \omega(T_t^{\Lambda}(A)) \omega(T_t^{\Lambda}(B)) |  \leq C_{A,B}^{\Lambda}(r, t) 
\end{equation}
for all $t \geq 0$ and $2 \leq 2r < d(X,Y)$. 

The quasi-locality estimates we developed in the previous section allow us to provide estimates on this quantity $C_{A,B}^{\Lambda}(r,t)$. 
We collect this information in the two lemmas below.  

\begin{lem} \label{lem:gen_cor_dec} Let $(\Gamma, d)$ be a countable metric space equipped with an $F$-function $F$ and $\mathscr{L}= \{ L_Z \}_{Z \in \mathcal{P}_0(\Gamma)}$ a dissipative interaction with $\| \mathscr{L} \|_F< \infty$.
Let $\Lambda \subset \Gamma$ be finite, take $X,Y \subset \Lambda$ with $1 \leq d(X,Y)$, and consider $A \in \A_X$ and $B\in \A_Y$. One has that
\begin{eqnarray} \label{gen_cor_est}
C_{A,B}^{\Lambda}(r,t) & \leq & 2 \|A\| \| B \| \| \mathscr{L}\|_{F}  \left(t + \frac{C_{F}  +   \| F \|}{C_{F}} \int_0^t (e^{vs} -1) \, ds \right)  \times \nonumber \\
& \mbox{ } & \quad  \times \left(  \sum_{x \in X}   \sum_{y \in \Lambda \setminus X(r)} F(d(x,y)) +
 \sum_{y \in Y} \sum_{x \in \Lambda \setminus Y(r)} F(d(x,y)) \right) 
\end{eqnarray}
for all $r \geq 1$.
\end{lem}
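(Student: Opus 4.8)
The plan is to bound each of the three terms appearing in the definition~(\ref{def_cab}) of $C_{A,B}^{\Lambda}(r,t)$ directly using Theorem~\ref{thm:gen_sl_app}, which is exactly the strictly-local approximation estimate~(\ref{localize}) we just proved. The first term is $\|A\| \|T_t^{\Lambda}(B) - T_t^{Y(r)}(B)\|$; since $B \in \A_Y$ with $Y \subset \Lambda$, Theorem~\ref{thm:gen_sl_app} applies with the roles of $X$ played by $Y$, giving
\[
\|T_t^{\Lambda}(B) - T_t^{Y(r)}(B)\| \leq \|B\| \| \mathscr{L}\|_{F} \left(t + \frac{C_{F} + \| F \|}{C_{F}} \int_0^t (e^{vs} -1) \, ds \right) \sum_{y \in Y} \sum_{x \in \Lambda \setminus Y(r)} F(d(x,y)).
\]
Symmetrically, the second term $\|B\| \|T_t^{\Lambda}(A) - T_t^{X(r)}(A)\|$ is handled by applying Theorem~\ref{thm:gen_sl_app} to $A \in \A_X$ directly, producing a factor $\sum_{x \in X}\sum_{y \in \Lambda \setminus X(r)} F(d(x,y))$.

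The only term requiring a small observation is the third, $\|T_t^{\Lambda}(AB) - T_t^{(X\cup Y)(r)}(AB)\|$. Here $AB \in \A_{X \cup Y}$, so Theorem~\ref{thm:gen_sl_app} applies with support set $X \cup Y$ and $\|AB\| \leq \|A\|\|B\|$, yielding the prefactor times $\sum_{z \in X \cup Y} \sum_{w \in \Lambda \setminus (X\cup Y)(r)} F(d(z,w))$. The final step is to dominate this sum by the two sums already appearing: since $(X \cup Y)(r) \supset X(r)$ and $(X \cup Y)(r) \supset Y(r)$, one has $\Lambda \setminus (X\cup Y)(r) \subset \Lambda \setminus X(r)$ and likewise for $Y(r)$, and splitting the sum over $z$ into the part with $z \in X$ and the part with $z \in Y$ gives
\[
\sum_{z \in X \cup Y} \sum_{w \in \Lambda \setminus (X\cup Y)(r)} F(d(z,w)) \leq \sum_{x \in X} \sum_{y \in \Lambda \setminus X(r)} F(d(x,y)) + \sum_{y \in Y} \sum_{x \in \Lambda \setminus Y(r)} F(d(x,y)).
\]
Adding the three bounds, each of the first two original terms contributes one of the two sums, the third term contributes at most the sum of both, and collecting gives a total of at most twice the parenthesized combination of sums in~(\ref{gen_cor_est}); this produces the overall factor of $2$.

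There is essentially no obstacle here — the lemma is a bookkeeping consequence of Theorem~\ref{thm:gen_sl_app}. The one point that deserves a line of care is the monotonicity step $(X \cup Y)(r) \supset X(r) \cup Y(r)$ (immediate from the definition of the $r$-inflation, since enlarging the base set enlarges its inflation) and the fact that the hypothesis $1 \leq d(X,Y)$ together with $r \geq 1$ is compatible with the constraint $r \geq 1$ needed to invoke~(\ref{localize}); no lower bound relating $r$ to $d(X,Y)$ is required for this particular lemma, as the spatial separation only enters through Theorem~\ref{thm:g_decaying}. I would also note in passing that the integral $\int_0^t (e^{vs}-1)\,ds = v^{-1}(e^{vt}-1) - t$ could be written in closed form, but leaving it as an integral keeps the statement parallel to~(\ref{localize}).
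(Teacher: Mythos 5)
Your proposal is correct and follows exactly the paper's own argument: apply Theorem~\ref{thm:gen_sl_app} to each of the three terms in (\ref{def_cab}), use $\|AB\| \leq \|A\|\|B\|$ for the third, and dominate the $X \cup Y$ sum by the sum of the two individual sums to collect the factor of $2$. No issues.
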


\begin{proof} Recall that
\begin{eqnarray} \label{recall_def_cab}
C_{A,B}^{\Lambda}(r,t) & = &  \| A \| \| T_t^{\Lambda}(B) - T_t^{Y(r)}(B) \| +  \| B \| \| T_t^{\Lambda}(A) - T_t^{X(r)}(A) \| \nonumber \\
& \mbox{ } & \quad + \| T_t^{\Lambda}(AB) - T_t^{(X \cup Y)(r)}(AB) \| \, .
\end{eqnarray} 
To prove (\ref{gen_cor_est}), we apply Theorem~\ref{thm:gen_sl_app}. In fact, a direct application of Theorem~\ref{thm:gen_sl_app} shows that
\begin{equation} \label{cor_est_A}
 \| B \| \| T_t^{\Lambda}(A) - T_t^{X(r)}(A) \|  \leq  \|A\| \| B \| \| \mathscr{L}\|_{F}  \left(t + \frac{C_{F}  +   \| F \|}{C_{F}} \int_0^t (e^{vs} -1) \, ds \right) 
 \sum_{x \in X} \sum_{y \in \Lambda \setminus X(r)} F(d(x,y)) 
\end{equation}
and an analogous bound estimates $ \| A \| \| T_t^{\Lambda}(B) - T_t^{Y(r)}(B) \|$. For the final term in (\ref{recall_def_cab}), we argue similarly and use that
\begin{equation}
\sum_{w \in X \cup Y} \sum_{z \in \Lambda \setminus (X \cup Y)(r)} F(d(w,z)) \leq 
\sum_{x \in X}   \sum_{y \in \Lambda \setminus X(r)} F(d(x,y)) +
 \sum_{y \in Y} \sum_{x \in \Lambda \setminus Y(r)} F(d(x,y))
\end{equation}
\end{proof}

\begin{lem}  \label{lem:poly_cor_dec} Let $(\Gamma, d)$ be $\nu$-regular and $\mathscr{L} = \{L_Z \}_{Z \in \mathcal{P}_0(\Gamma)}$ be a dissipative interaction with
$\| \mathscr{L} \|_F< \infty$ for the choice of $F$-function
\begin{equation} \label{power_law_F_2}
F(x) = \frac{1}{(1+x)^{\alpha}} \quad \mbox{for all } x \geq 0
\end{equation}
with some $\alpha > 2\nu +1$. Fix $\epsilon \in (0, \alpha - 2 \nu -1)$, set $\alpha_{\epsilon} = \alpha - \nu - 1- \epsilon$, and 
let $0 < \delta < 1$ satisfy $(1- \delta)\alpha_{\epsilon} > \nu$. 

Let $\Lambda \subset \Gamma$ be finite, take $X,Y \subset \Lambda$ with $1 \leq d(X,Y)$, and consider $A \in \A_X$ and $B\in \A_Y$. One has that
\begin{equation}
C_{A,B}^{\Lambda}(r,t) \leq  3 C \| A \| \| B \| \left( |X| + |Y| \right) \| \mathscr{L} \|_F \cdot \frac{t}{(1+ r)^{(1- \delta) \alpha_{\epsilon} - \nu}}
\end{equation}
whenever $r \geq 1$ and $0 \leq evt \leq r^{\delta}$. Here $C$ is as in (\ref{constant_2}). 
\end{lem}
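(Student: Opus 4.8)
The plan is to estimate each of the three terms in the definition of $C_{A,B}^{\Lambda}(r,t)$ separately, using Theorem~\ref{thm:sl_app_poly_dec} as the main input. Recall
\[
C_{A,B}^{\Lambda}(r,t) = \| A \| \| T_t^{\Lambda}(B) - T_t^{Y(r)}(B) \| + \| B \| \| T_t^{\Lambda}(A) - T_t^{X(r)}(A) \| + \| T_t^{\Lambda}(AB) - T_t^{(X \cup Y)(r)}(AB) \|,
\]
and note that $A \in \A_X$, $B \in \A_Y$, so $AB \in \A_{X \cup Y}$ with $|X \cup Y| \leq |X| + |Y|$ and $\|AB\| \leq \|A\|\|B\|$. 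Since $(\Gamma,d)$ is $\nu$-regular and $\mathscr{L}$ has finite $F$-norm for the power-law $F$ in~(\ref{power_law_F_2}) with $\alpha > 2\nu+1$, the hypotheses of Theorem~\ref{thm:sl_app_poly_dec} are in force for each of the single-site-algebra observables $A$, $B$, and $AB$, supported respectively on $X$, $Y$, and $X \cup Y$.

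First I would apply Theorem~\ref{thm:sl_app_poly_dec} with observable $A$ and support $X$: for $r \geq 1$ and $0 \leq evt \leq r^{\delta}$,
\[
\| B \| \| T_t^{\Lambda}(A) - T_t^{X(r)}(A) \| \leq C \|A\| \|B\| |X| \| \mathscr{L} \|_F \cdot \frac{t}{(1+r)^{(1-\delta)\alpha_{\epsilon} - \nu}},
\]
with $C$ as in~(\ref{constant_2}). Symmetrically, applying the same theorem with observable $B$ and support $Y$ yields the bound $C \|A\| \|B\| |Y| \| \mathscr{L} \|_F \cdot t/(1+r)^{(1-\delta)\alpha_{\epsilon}-\nu}$ for $\|A\| \|T_t^{\Lambda}(B) - T_t^{Y(r)}(B)\|$. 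For the third term, I apply Theorem~\ref{thm:sl_app_poly_dec} with the observable $AB$ on support $X \cup Y$, obtaining
\[
\| T_t^{\Lambda}(AB) - T_t^{(X\cup Y)(r)}(AB) \| \leq C \|AB\| |X \cup Y| \| \mathscr{L} \|_F \cdot \frac{t}{(1+r)^{(1-\delta)\alpha_{\epsilon}-\nu}} \leq C \|A\|\|B\| (|X|+|Y|) \| \mathscr{L} \|_F \cdot \frac{t}{(1+r)^{(1-\delta)\alpha_{\epsilon}-\nu}}.
\]
Summing the three contributions and using $|X| \leq |X|+|Y|$, $|Y| \leq |X|+|Y|$ gives an overall factor of at most $3C \|A\|\|B\|(|X|+|Y|) \| \mathscr{L} \|_F$ times $t/(1+r)^{(1-\delta)\alpha_{\epsilon}-\nu}$, which is exactly the claimed bound.

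There is essentially no genuine obstacle here — this lemma is a packaging result. The only points requiring a modicum of care are: confirming that the hypothesis $0 \leq evt \leq r^{\delta}$ in Theorem~\ref{thm:sl_app_poly_dec} (which is the same range for all three applications, since $v$, $\epsilon$, $t$, $\delta$, $r$ do not depend on the observable), ensuring the constant $C$ from~(\ref{constant_2}) is literally the same in all three applications (it depends only on $\kappa$, $C_{\epsilon}$, $C_F$, $\|F\|$, $\alpha_{\epsilon}$, $\delta$, none of which vary with the observable), and observing that for a finite region $\Lambda \subset \Gamma$ all sets involved are finite so every application is legitimate. The verification that $\|AB\| \le \|A\|\|B\|$ and $|X \cup Y| \le |X| + |Y|$ is immediate. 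Hence the proof is a direct three-fold invocation of Theorem~\ref{thm:sl_app_poly_dec} followed by the trivial arithmetic of adding the bounds.
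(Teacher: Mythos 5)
Your proof is correct and follows exactly the route the paper takes: the paper's own proof is a one-line remark that the lemma is a direct consequence of Theorem~\ref{thm:sl_app_poly_dec} together with simple over-estimates on the supports, which is precisely your three-fold application to $A$ on $X$, $B$ on $Y$, and $AB$ on $X\cup Y$ using $\|AB\|\le\|A\|\,\|B\|$ and $|X\cup Y|\le |X|+|Y|$.
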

\begin{proof}
One readily checks that the proof of this lemma is a direct consequence of Theorem~\ref{thm:sl_app_poly_dec} and simple over-estiimates on the
support of the observables. 
\end{proof}

\subsection{Dynamical Fixed-Points}\label{sec:dynamicalsteadystates}

The goal of this section is to quantify decay of correlations for a class of
dynamical fixed points. We discuss our assumptions on convergence to the fixed points
in relations to other applications in the literature in an appendix. As in the previous section, we first prove
a general bound, see Theorem~\ref{thm:steadystate1} below, and then follow-up with more explicit estimates in 
Lemma~\ref{lem:gen_fp_dec} and Lemma~\ref{lem:poly_cor_dec_2}. 

\begin{define}\label{def:steadystate}
Let $(\Gamma, d)$ be a countable metric space and $\mathscr{L}= \{ L_Z \}_{Z \in \mathcal{P}_0(\Gamma)}$ be a dissipative interaction.
Take $\Lambda \subset \Gamma$ finite and consider $\pi$ a state on the local algebra $\A_{\Lambda}$.
We say that $\pi$ is a local dynamical fixed-point of $\mathscr{L}$ with convergence governed by $g_{\Lambda}$ if
        \begin{enumerate}
            \item $\pi \circ T^{\Lambda}_t = \pi$ for all $t\ge 0$. 
            \item There exists a function $g_\Lambda:[0,\infty) \to [0,2]$ for which one has 
            \[ | (\pi - \omega)\circ T^{\Lambda}_t(A) | \le g_\Lambda(t) \| A\| \] 
            for all states $\omega$ on $\A_{\Lambda}$, $A \in \A_{\Lambda}$, and $t \geq 0$.
        \end{enumerate}  
\end{define}  

For any local dynamical fixed-point, \textit{i.e.} state $\pi$ on $\A_{\Lambda}$ satisfying 1 above, the function
$g_{\Lambda}(t) =2$ governs convergence to $\pi$ in a trivial sense. We are more interested
in situations where the convergence is governed by a non-increasing function $g_{\Lambda}$ with
$g_{\Lambda}(t) \to 0$ as $t \to \infty$. In the later case, one readily checks that any such local dynamical fixed-point is
unique. 

The following result estimates correlations in such local dynamical fixed-points.

\begin{thm}\label{thm:steadystate1}
Let $(\Gamma, d)$ be a countable metric space, $\mathscr{L} = \{L_Z \}_{Z \in \mathcal{P}_0(\Gamma)}$ a dissipative interaction,
and $\Lambda \subset \Gamma$ finite. Let $\pi$ be a local dynamical fixed-point for $\mathscr{L}$ with 
convergence governed by $g_{\Lambda}$. For any $A \in \A_X$ and $B \in \A_Y$ where $X,Y \subset \Lambda$ 
satisfy $d(X,Y) >0$, one has that
\begin{equation} \label{gen_fp_dec}
            |\pi(AB) - \pi(A) \pi(B)| \leq  | \omega(T_t^{\Lambda}(AB_t^{\omega}))| + 3 \| A \| \| B \| g_{\Lambda}(t) 
\end{equation}
for any state $\omega$ on $\A_{\Lambda}$ and $t \geq 0$.
\end{thm}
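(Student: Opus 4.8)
The plan is to massage the quantity $\pi(AB) - \pi(A)\pi(B)$ into the dynamical correlation quantity already controlled by Theorem~\ref{thm:g_decaying}, at the cost of three error terms each of the form ``$\|A\|\|B\|g_\Lambda(t)$''. Fix an arbitrary state $\omega$ on $\A_\Lambda$ and $t \geq 0$. Since $\pi$ is a fixed point, $\pi(AB) = \pi(T_t^\Lambda(AB))$, $\pi(A) = \pi(T_t^\Lambda(A))$, and $\pi(B) = \pi(T_t^\Lambda(B))$, so I would first replace every occurrence of $\pi$ applied to a time-evolved observable by $\omega$ applied to the same observable, using property 2 of Definition~\ref{def:steadystate}. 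Concretely, $|\pi(T_t^\Lambda(AB)) - \omega(T_t^\Lambda(AB))| \leq g_\Lambda(t)\|AB\| \leq g_\Lambda(t)\|A\|\|B\|$, which handles the first term; and for the product $\pi(A)\pi(B)$ I would write $\pi(T_t^\Lambda(A))\pi(T_t^\Lambda(B)) - \omega(T_t^\Lambda(A))\omega(T_t^\Lambda(B))$ as a telescoping difference $[\pi(T_t^\Lambda(A)) - \omega(T_t^\Lambda(A))]\pi(T_t^\Lambda(B)) + \omega(T_t^\Lambda(A))[\pi(T_t^\Lambda(B)) - \omega(T_t^\Lambda(B))]$ and bound each bracket by $g_\Lambda(t)\|A\|$ (resp. $\|B\|$), using that states have norm one so $|\pi(T_t^\Lambda(B))| \leq \|B\|$ and $|\omega(T_t^\Lambda(A))| \leq \|A\|$.

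After these substitutions I am left, up to the three collected error terms summing to $3\|A\|\|B\|g_\Lambda(t)$, with the quantity $|\omega(T_t^\Lambda(AB)) - \omega(T_t^\Lambda(A))\omega(T_t^\Lambda(B))|$. The final step is to recognize this as exactly $|\omega(T_t^\Lambda(AB_t^\omega))|$ in the notation of the proof of Theorem~\ref{thm:g_decaying}, since $B_t^\omega = B - \omega(T_t^\Lambda(B))\one$ gives $\omega(T_t^\Lambda(AB_t^\omega)) = \omega(T_t^\Lambda(AB)) - \omega(T_t^\Lambda(B))\omega(T_t^\Lambda(A))$ by linearity and unitality of $T_t^\Lambda$. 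Assembling the pieces via the triangle inequality yields (\ref{gen_fp_dec}). Since the bound holds for the arbitrary chosen $\omega$ and $t$, the statement follows.

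I do not anticipate a serious obstacle here; the argument is essentially bookkeeping. The one point requiring a little care is making sure each of the three $g_\Lambda(t)$-errors genuinely carries the full factor $\|A\|\|B\|$ and not something larger: for the $AB$ term this uses the submultiplicativity $\|AB\| \leq \|A\|\|B\|$ of the $C^*$-norm, and for the two telescoping terms it uses that $\pi$ and $\omega$ are states and $T_t^\Lambda$ is a contraction (being ucp), so the ``spectator'' factor contributes at most $\|B\|$ or $\|A\|$ respectively. It is worth noting that (\ref{gen_fp_dec}) does not yet exhibit any decay in $d(X,Y)$; that comes only after inserting the bound on $|\omega(T_t^\Lambda(AB_t^\omega))|$ from Theorem~\ref{thm:g_decaying} (with a suitable reference state $\omega$, e.g. a product state, to kill the $G_\Lambda$ contribution) and then optimizing over $t$ against the mixing rate $g_\Lambda(t)$ — but that optimization is the content of the follow-up lemmas, not of this theorem.
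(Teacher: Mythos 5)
Your argument is correct and is essentially the paper's proof: both exploit time-invariance of $\pi$, trade $\pi$ for $\omega$ at a cost of $g_\Lambda(t)\|A\|\|B\|$ per swap (three swaps in total), and land on exactly the quantity $|\omega(T_t^\Lambda(AB_t^\omega))|$. The only difference is bookkeeping --- the paper first centers $B$ as $B^\pi = B - \pi(B)\one$ and pays $2g_\Lambda(t)\|A\|\|B\|$ in one step and $g_\Lambda(t)\|A\|\|B\|$ in another, whereas you telescope the product $\pi(T_t^\Lambda(A))\pi(T_t^\Lambda(B))$ directly; the resulting bound is identical.
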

\begin{proof}
As in the proof of Theorem~\ref{thm:g_decaying}, set $B^\pi = B- \pi(B)\one$. Clearly, $\|B^\pi\|\le 2 \|B\|$ and moreover,
\begin{equation}
\pi(AB) - \pi(A) \pi(B) = \pi(AB^\pi).
\end{equation}
Since $\pi$ is time-invariant, we have that for any $t \geq 0$ 
\begin{eqnarray} \label{comp_2}
| \pi(AB^\pi)|  = | \pi(T_t^{\Lambda}(AB^\pi))| & \leq & | \omega(T_t^{\Lambda}(AB^\pi))| + | (\pi - \omega)(T_t^{\Lambda}(AB^\pi))| \nonumber \\
& \leq &  | \omega(T_t^{\Lambda}(AB^\pi)| + 2 g_{\Lambda}(t) \| A \| \| B \|  
\end{eqnarray}
and the above holds for any state $\omega$ on $\A_{\Lambda}$.
Rewriting things, one finds that 
\begin{equation}
B^\pi = B - \pi(B) \one = B - \pi(T_t^{\Lambda}(B)) \one = B_t^{\omega} - (\pi - \omega )(T_t^{\Lambda}(B)) \one
\end{equation}
where we have, again, used notation as in the proof of Theorem~\ref{thm:g_decaying}. 
In this case, the first term on the right hand side of (\ref{comp_2}) may be estimated as
\begin{eqnarray} \label{cor_est}
 | \omega(T_t^{\Lambda}(AB^\pi))| & \leq &  | \omega(T_t^{\Lambda}(AB_t^{\omega}))| + |(\pi - \omega)(T_t^{\Lambda}(B))| | \omega(T_t^{\Lambda}(A))|  \nonumber \\
 & \leq & | \omega(T_t^{\Lambda}(AB_t^{\omega}))| +g_{\Lambda}(t) \| A \| \| B \|
\end{eqnarray}
Combining (\ref{comp_2}) and (\ref{cor_est}), we have proven (\ref{gen_fp_dec}). 
\end{proof}

 Depending on the decay of the interactions, one may be more explicit with these bounds.
 In fact, if the interactions decay exponentially, we have the following.

\begin{lem} \label{lem:gen_fp_dec} Let $(\Gamma, d)$ be a countable metric space equipped with an $F$-function $F_0$ and $\mathscr{L}= \{ L_Z \}_{Z \in \mathcal{P}_0(\Gamma)}$ a dissipative interaction with $\| \mathscr{L} \|_{F_a}< \infty$ for some $a>0$ where $F_a$ is the $F$-function on $(\Gamma, d)$ given by $F_a(r) = e^{-ar}F_0(r)$.
Let $\Lambda \subset \Gamma$ be finite and $\pi$ be a local dynamical fixed-point for $\mathscr{L}$ with 
convergence governed by $g_{\Lambda}$. For any $X,Y \subset \Lambda$ with $d(X,Y)>2$, consider $A \in \A_X$ and $B\in \A_Y$. One has that
\begin{eqnarray} \label{fp_cor_dec}
|\pi(AB) - \pi(A) \pi(B)| & \le & 2 \|A\| \|B\| (|X| + |Y|) \| \mathscr{L} \|_{F_a} \| F_0 \|  \left(t_a + \frac{C_{F_a}  +   \| F_a \|}{C_{F_a}} \int_0^{t_a} (e^{v_as} -1) \, ds \right) e^{-\frac{ad(X,Y)}{2}} \nonumber \\
& \mbox{ } & \quad + 3 \| A \| \|B \| g_{\Lambda}(t_a) 
\end{eqnarray} 
where $v_a = \| \mathscr{L} \|_{F_a}C_{F_a}$ and $t_a = \frac{ad(X,Y)}{4v_a}$. 
\end{lem}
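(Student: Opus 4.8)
The plan is to feed a product state into the mixing bound of Theorem~\ref{thm:steadystate1} and then invoke the dynamic-correlation estimate of Theorem~\ref{thm:g_decaying}, whose spatial-correlation term will then drop out. I would fix $t=t_a$ and let $\omega=\bigotimes_{z\in\Lambda}\omega_z$ be any product state on $\A_\Lambda$. Since $d(X,Y)>2>0$, Theorem~\ref{thm:steadystate1} applies and gives
\[
|\pi(AB)-\pi(A)\pi(B)| \le |\omega(T^{\Lambda}_{t_a}(A B_{t_a}^{\omega}))| + 3\|A\|\|B\| g_{\Lambda}(t_a),
\]
whose second summand is exactly the second summand of~(\ref{fp_cor_dec}). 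As in the proof of Theorem~\ref{thm:g_decaying}, one has $\omega(T^{\Lambda}_{t_a}(A B_{t_a}^{\omega})) = \omega(T^{\Lambda}_{t_a}(AB)) - \omega(T^{\Lambda}_{t_a}(A))\,\omega(T^{\Lambda}_{t_a}(B))$, so it remains to bound this dynamically generated correlation by the first summand of~(\ref{fp_cor_dec}).

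A product state has spatial correlations governed, in the sense of Definition~\ref{def:spatiallydecayingstate}, by the function $G_\Lambda$ that equals $0$ when its third argument is positive and $2$ when it vanishes; this is admissible, as $d\mapsto G_\Lambda(x,y,d)$ is non-increasing. I would take $r=d(X,Y)/2$, so that $2\le 2r\le d(X,Y)$ and $r>1$ (using $d(X,Y)>2$) and so that $X(r),Y(r)$ are disjoint, whence $d(X(r),Y(r))>0$. Theorem~\ref{thm:g_decaying}, applied to this $\omega$, then yields $|\omega(T^{\Lambda}_{t_a}(A B_{t_a}^{\omega}))| \le C^{\Lambda}_{A,B}(r,t_a)$, the $G_\Lambda$-term having vanished. (If one wishes to be scrupulous about the closed balls $X(r),Y(r)$ touching at a midpoint, take $r$ marginally below $d(X,Y)/2$; this rescales the final constant by an absorbed $O(1)$.)

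It remains to estimate $C^{\Lambda}_{A,B}(r,t_a)$, for which I would apply Lemma~\ref{lem:gen_cor_dec} with the $F$-function $F_a$; by hypothesis $\|\mathscr{L}\|_{F_a}<\infty$, and the rate there is $v_a=\|\mathscr{L}\|_{F_a}C_{F_a}$. Lemma~\ref{lem:gen_cor_dec} bounds $C^{\Lambda}_{A,B}(r,t_a)$ by
\[
2\|A\|\|B\| \|\mathscr{L}\|_{F_a} \left( t_a + \frac{C_{F_a}+\|F_a\|}{C_{F_a}} \int_0^{t_a}(e^{v_a s}-1)\,ds \right)\left( \sum_{x\in X}\sum_{y\in\Lambda\setminus X(r)} F_a(d(x,y)) + \sum_{y\in Y}\sum_{x\in\Lambda\setminus Y(r)} F_a(d(x,y)) \right).
\]
For $x\in X$ and $y\notin X(r)$ one has $d(x,y)>r$, so $F_a(d(x,y))=e^{-ad(x,y)}F_0(d(x,y))\le e^{-ar}F_0(d(x,y))$; summing over $y$ and using~(\ref{uni_sum}) for $F_0$ gives $\sum_{x\in X}\sum_{y\in\Lambda\setminus X(r)}F_a(d(x,y))\le |X|\,\|F_0\|\,e^{-ar}$, and the $Y$-sum is $\le |Y|\,\|F_0\|\,e^{-ar}$ by the same computation. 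With $r=d(X,Y)/2$ this is the factor $e^{-ad(X,Y)/2}$, and substituting into the previous displays gives precisely~(\ref{fp_cor_dec}). The specific time $t_a=ad(X,Y)/(4v_a)$ plays no role in this formal step --- it is merely substituted --- but it is the natural choice: it makes $v_at_a=ad(X,Y)/4$, so the factor $e^{v_at_a}$ concealed in $\int_0^{t_a}(e^{v_as}-1)\,ds=(e^{v_at_a}-1)/v_a$ combines with $e^{-ad(X,Y)/2}$ to leave a genuinely decaying $e^{-ad(X,Y)/4}$ (times a prefactor polynomial in $d(X,Y)$), while $g_\Lambda(t_a)\to 0$ as $d(X,Y)\to\infty$ whenever $g_\Lambda$ decays.

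The only conceptual point is the product-state reduction carried out above: the state left free by Theorem~\ref{thm:steadystate1} may be chosen to be a product state, which kills the spatial-correlation contribution of Theorem~\ref{thm:g_decaying} and leaves only $C^{\Lambda}_{A,B}(r,t_a)$, in turn controlled by the strictly-local approximation machinery already packaged in Lemma~\ref{lem:gen_cor_dec}. Everything else is the routine conversion of the $F_a$-sums into $\|F_0\|\,e^{-ad(X,Y)/2}$ via $F_a=e^{-a(\cdot)}F_0$; the lone technical nuisance --- the $O(1)$ freedom in placing $r$ near $d(X,Y)/2$ so that $X(r),Y(r)$ stay disjoint --- is harmless.
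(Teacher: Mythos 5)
Your proposal is correct and follows essentially the same route as the paper: apply Theorem~\ref{thm:steadystate1} with a product state, bound the resulting dynamic correlation by $C_{A,B}^{\Lambda}(r,t_a)$ via Lemma~\ref{lem:gen_cor_dec} with the $F$-function $F_a$, convert the sums to $\|F_0\|e^{-ar}$, and push $r$ to $d(X,Y)/2$. The only cosmetic difference is at the boundary value of $r$: the paper obtains the exact constant by noting the bound holds for all $2\le 2r<d(X,Y)$ and extends by continuity to $r=d(X,Y)/2$, whereas your ``$O(1)$ rescaling'' remark is slightly weaker than needed for the stated constant, though the same limiting argument fixes this immediately.
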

Before we prove this bound, note that a simple estimate shows that
\begin{equation}
 \left(t_a + \frac{C_{F_a}  +   \| F_a \|}{C_{F_a}} \int_0^{t_a} (e^{v_as} -1) \, ds \right) e^{-\frac{ad(X,Y)}{2}}  \leq  \left( \frac{2C_{F_a}+\|F_a \|}{v_aC_{F_a}} \right) \frac{ad(X,Y)}{4}e^{-\frac{ad(X,Y)}{4}}
\end{equation}
and so the first term above in (\ref{fp_cor_dec}) decays exponentially in $d(X,Y)$. 

\begin{proof}
An application of Theorem~\ref{thm:steadystate1} shows that
\begin{equation}
  |\pi(AB) - \pi(A) \pi(B)| \leq  | \omega(T_t^{\Lambda}(AB_t^{\omega}))| + 3 \| A \| \| B \| g_{\Lambda}(t) 
  \end{equation}
 for any state $\omega$ and $t \geq 0$. Taking $\omega$ to be a product state, we find that
 \begin{equation} \label{base_cor_est}
 | \omega(T_t^{\Lambda}(AB_t^{\omega}))| = |\omega(T^{\Lambda}_t(AB) - \omega(T^{\Lambda}_t(A))\omega(T^{\Lambda}_t(B))|\le C_{A,B}^{\Lambda}(r,t) 
 \end{equation}
 for all $t \geq 0$ and $2 \leq 2r <d(X,Y)$, where we recall that $C_{A,B}^{\Lambda}(r,t)$ is as defined in (\ref{def_cab}). Lemma~\ref{lem:gen_cor_dec} provides an estimate
 for $C_{A,B}^{\Lambda}(r,t)$ which is relevant in this case:
 \begin{eqnarray} \label{cor_bd_1}
C_{A,B}^{\Lambda}(r,t) & \leq & 2 \|A\| \| B \| \| \mathscr{L}\|_{F_a}  \left(t + \frac{C_{F_a}  +   \| F_a \|}{C_{F_a}} \int_0^t (e^{v_as} -1) \, ds \right)  \times \nonumber \\
& \mbox{ } & \quad  \times \left(  \sum_{x \in X}   \sum_{y \in \Lambda \setminus X(r)} F_a(d(x,y)) +
 \sum_{y \in Y} \sum_{x \in \Lambda \setminus Y(r)} F_a(d(x,y)) \right) \nonumber \\
& \leq & 2 \|A\| \| B \| \| \mathscr{L}\|_{F_a}  \left(t + \frac{C_{F_a}  +   \| F_a \|}{C_{F_a}} \int_0^t (e^{v_as} -1) \, ds \right) e^{-ar} \| F_0 \| (|X|+|Y|) 
\end{eqnarray}
As (\ref{cor_bd_1}) holds for all  $2 \leq 2r < d(X,Y)$, the estimate extends to $r=d(X,Y)/2$.
The bound claimed in (\ref{fp_cor_dec}) now follows
by choosing $t = \frac{ad(X,Y)}{4v_a}$.
\end{proof}

An analogous bound holds for polynomially decaying interactions.

\begin{lem}  \label{lem:poly_cor_dec_2} Let $(\Gamma, d)$ be $\nu$-regular and $\mathscr{L} = \{L_Z \}_{Z \in \mathcal{P}_0(\Gamma)}$ be a dissipative interaction with
$\| \mathscr{L} \|_F< \infty$ for the choice of $F$-function
\begin{equation} \label{power_law_F_2}
F(x) = \frac{1}{(1+x)^{\alpha}} \quad \mbox{for all } x \geq 0
\end{equation}
with some $\alpha > 2\nu +1$. Fix $\epsilon \in (0, \alpha - 2 \nu -1)$, set $\alpha_{\epsilon} = \alpha - \nu - 1- \epsilon$, and 
let $0 < \delta < 1$ satisfy $(1- \delta)\alpha_{\epsilon} > \nu$. 

Let $\Lambda \subset \Gamma$ be finite and $\pi$ be a local dynamical fixed-point for $\mathscr{L}$ with 
convergence governed by $g_{\Lambda}$. For any $X,Y \subset \Lambda$ with $d(X,Y)>2$, consider $A \in \A_X$ and $B\in \A_Y$. 
For any $0< \eta < \min[ \delta, (1-\delta) \alpha_{\epsilon} - \nu]$, one has that
\begin{eqnarray}\label{eqn:dss_power_law}
|\pi(AB) - \pi(A) \pi(B)| & \le &  C'\|A\| \|B\| (|X|+|Y|) \| \mathscr{L} \|_F \cdot \frac{1}{(1+d(X,Y))^{(1-\delta)\alpha_\epsilon - \nu-\eta}}  \nonumber \\
& \mbox{ } & \quad + 3 \| A \| \| B \| g_{\Lambda} \left( \frac{d(X,Y)^{\eta}}{ev2^{\eta}} \right) 
\end{eqnarray}
where 
\begin{equation}
C' = \frac{3C}{ev} \cdot 2^{(1- \delta) \alpha_{\epsilon} - \nu - \eta}
\end{equation}
and $C$ is as in (\ref{constant_2}).
\end{lem}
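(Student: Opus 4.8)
The plan is to run the argument of Lemma~\ref{lem:gen_fp_dec}, but with the polynomial strictly-local bound of Lemma~\ref{lem:poly_cor_dec} in place of the exponential one, and with the time parameter tuned so as to produce the stated power law. Write $d = d(X,Y)$. First I would apply Theorem~\ref{thm:steadystate1}: for any state $\omega$ on $\A_{\Lambda}$ and any $t \ge 0$,
\[
|\pi(AB) - \pi(A)\pi(B)| \le |\omega(T_t^{\Lambda}(AB_t^{\omega}))| + 3\|A\|\|B\|\, g_{\Lambda}(t).
\]
I then take $\omega$ to be a product state; exactly as in the proof of Lemma~\ref{lem:gen_fp_dec}, the product structure allows one to apply Theorem~\ref{thm:g_decaying} with $G_{\Lambda} \equiv 0$ whenever the buffered regions $X(r)$ and $Y(r)$ are disjoint, and this yields $|\omega(T_t^{\Lambda}(AB_t^{\omega}))| = |\omega(T_t^{\Lambda}(AB)) - \omega(T_t^{\Lambda}(A))\omega(T_t^{\Lambda}(B))| \le C_{A,B}^{\Lambda}(r,t)$ for every $r$ with $2 \le 2r < d$.

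Next I would invoke Lemma~\ref{lem:poly_cor_dec}, which bounds $C_{A,B}^{\Lambda}(r,t) \le 3C\|A\|\|B\|(|X|+|Y|)\|\mathscr{L}\|_F\, t\,(1+r)^{-((1-\delta)\alpha_{\epsilon}-\nu)}$ whenever $r \ge 1$ and $0 \le evt \le r^{\delta}$, with $C$ as in (\ref{constant_2}). Since $d > 2$ we have $d/2 > 1$, and because $C_{A,B}^{\Lambda}(r,t)$ is non-increasing in $r$ while this upper bound is continuous in $r$, letting $r \uparrow d/2$ promotes the estimate to its value at $r = d/2$, provided the time constraint holds with strict inequality $evt < (d/2)^{\delta}$. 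This is what forces the choice $t = t_{\eta} := \frac{d^{\eta}}{ev2^{\eta}} = \frac{(d/2)^{\eta}}{ev}$: since $0 < \eta < \delta$ and $d/2 > 1$ one gets $evt_{\eta} = (d/2)^{\eta} < (d/2)^{\delta}$, so the hypotheses of Lemma~\ref{lem:poly_cor_dec} are met for all $r$ near $d/2$, and
\[
|\omega(T_{t_{\eta}}^{\Lambda}(AB_{t_{\eta}}^{\omega}))| \le \frac{3C}{ev}\,\|A\|\|B\|(|X|+|Y|)\,\|\mathscr{L}\|_F\cdot\frac{(d/2)^{\eta}}{(1+d/2)^{(1-\delta)\alpha_{\epsilon}-\nu}}.
\]

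Finally I would simplify the spatial factor: using $(d/2)^{\eta} \le (1+d/2)^{\eta}$, then $1+d/2 \ge (1+d)/2$ together with positivity of the exponent $(1-\delta)\alpha_{\epsilon}-\nu-\eta$ (which holds since $\eta < (1-\delta)\alpha_{\epsilon}-\nu$), one obtains the factor $2^{(1-\delta)\alpha_{\epsilon}-\nu-\eta}(1+d)^{-((1-\delta)\alpha_{\epsilon}-\nu-\eta)}$, hence the constant $C' = \frac{3C}{ev}2^{(1-\delta)\alpha_{\epsilon}-\nu-\eta}$. Substituting this and $t = t_{\eta}$ back into the bound from Theorem~\ref{thm:steadystate1} gives (\ref{eqn:dss_power_law}), with the convergence term appearing as $3\|A\|\|B\|g_{\Lambda}(t_{\eta})$.

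The one genuinely delicate point is the bookkeeping at the endpoint $r = d/2$: the product-state decoupling from Theorem~\ref{thm:g_decaying} requires $X(r) \cap Y(r) = \emptyset$, i.e. $r$ strictly below $d/2$, whereas both the target exponent and the constant $C'$ are read off at exactly $r = d/2$. This is resolved by the monotonicity-and-continuity limiting argument above, and it is precisely the reason the admissible window for $\eta$ is $0 < \eta < \min[\delta,(1-\delta)\alpha_{\epsilon}-\nu]$ — one needs $\eta < \delta$ to keep $evt_{\eta}$ strictly under $(d/2)^{\delta}$, and $\eta < (1-\delta)\alpha_{\epsilon}-\nu$ to keep the emerging spatial decay exponent positive. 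No estimates beyond those already assembled in Theorem~\ref{thm:steadystate1}, Theorem~\ref{thm:g_decaying}, and Lemma~\ref{lem:poly_cor_dec} are required.
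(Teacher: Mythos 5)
Your proposal is correct and follows essentially the same route as the paper: apply Theorem~\ref{thm:steadystate1} with a product state, bound $C_{A,B}^{\Lambda}(r,t)$ via Lemma~\ref{lem:poly_cor_dec}, choose $evt=(d/2)^{\eta}$, and pass to $r=d(X,Y)/2$ in the (continuous, $r$-independent on the left) estimate; your explicit simplification even recovers the stated constant $C'$. The only cosmetic remark is that the claimed monotonicity of $C_{A,B}^{\Lambda}(r,t)$ in $r$ is neither obvious nor needed — since the left-hand side does not depend on $r$, taking the infimum of the upper bound over admissible $r<d/2$ already yields the endpoint value.
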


\begin{proof}
Arguing as in the proof of Lemma~\ref{lem:gen_fp_dec}, using a product state in the application of Theorem~\ref{thm:steadystate1} shows 
\begin{equation} \label{fp_base_est_2}
  |\pi(AB) - \pi(A) \pi(B)| \leq  C_{A,B}^{\Lambda}(r,t) + 3 \| A \| \| B \| g_{\Lambda}(t) 
 \end{equation}
for any $t \geq 0$ and $2\leq 2r<d(X,Y)$. We need only estimate the correlations, \textit{i.e.} $C_{A,B}^{\Lambda}(r,t)$. 

In this case, Lemma~\ref{lem:poly_cor_dec} generally implies that
\begin{equation} \label{cor_est_2}
C_{A,B}^{\Lambda}(r,t) \leq  3 C \| A \| \| B \| \left( |X| + |Y| \right) \| \mathscr{L} \|_F \cdot \frac{t}{(1+ r)^{(1- \delta) \alpha_{\epsilon} - \nu}}
\end{equation}
whenever $r \geq 1$ and $0 \leq evt \leq r^{\delta}$. For any $0< \eta < \min[ \delta, (1- \delta) \alpha_{\epsilon} - \nu]$ and choice of $t$ satisfying
\begin{equation} \label{def_t}
evt = \left( \frac{d(X,Y)}{2} \right)^{\eta}
\end{equation}
the bound in (\ref{cor_est_2}) holds for all $r< d(X,Y)/2$ satisfying $0 \leq evt \leq r^{\delta}$. One readily checks that such values of $r$
exist, and therefore, the right-hand-side of (\ref{fp_base_est_2}) may be estimated by the right-hand-side of (\ref{cor_est_2}) evaluated with $t$ as in (\ref{def_t}) and
$r=d(X,Y)/2$. The bound claimed in (\ref{eqn:dss_power_law}) follows.
\end{proof}

\appendix
\renewcommand{\thesection}{\Alph{section}}
\renewcommand{\thesubsection}{A.\Roman{subsection}}

\section{Appendix: Some Semigroup Theory}
This appendix seeks to show that our definition of a local dynamical fixed-point as in Definiton~\ref{def:steadystate} is logically equivalent to a few other conditions that are present in the literature. In particular, we show that  $\|(\pi - \psi)(T^{\Lambda}_t(A))| \to 0$ uniformly exponentially fast (see Theorem~\ref{thm:exponential_steady_state}). We also show that our notion is equivalent to the assumption that the associated Schr\"odinger picture semigroup satisfies rapid mixing in the sense of \cite[Theorem 1]{KastoryanoEisert}  (see Theorem~\ref{thm:rapid_mixing}). Because our constants depend on the volume, existence of a local dynamical fixed point implies neither global nor local rapid mixing in the sense of \cite{Brandao, Cubitt}. 

We prove the following general theorem 
\begin{thm}\label{thm:exponential_steady_state}
	Let $\mathcal{H}$ be a finite-dimensional Hilbert space and let $L:B(\mathcal{H})\to B(\mathcal{H})$ be a bounded Lindblad generator with associated quantum dyanmical semigroup $(T_t | t\ge 0)$ in the Heisenberg picture. By $(T'_t|t\ge 0)$ we mean the semigroup acting on $B(\mathcal{H})^*$ by pre-composition. The following are equivalent: 
	\begin{enumerate}
		\item There exists a state $\pi\in B(\mathcal{H})^*$ so that $\pi\circ T_t = \pi$ for all $t\ge 0$ and there exists a function $g_{\pi}:[0, \infty) \to [0, \infty)$ so that $g_{\pi}(t) \to 0$ as $t\to \infty$ and for all  states $\psi\in B(\mathcal{H})^*$, and all $A\in B(\mathcal{H})$, one has 
			\begin{equation}
				|(\psi - \pi)\circ T_t(A)| \le \|A\| g_{\pi}(t)\,.
			\end{equation} We call such a state a \textit{dynamical fixed point} of $(T_t|t\ge 0)$. 
		\item There exist constants $c>0$ and $\gamma >0$, and a rank-one projection $P:B(\mathcal{H})^* \to B(\mathcal{H})^*$ so that 
		\begin{equation}
			\|T'_t - P\|\le ce^{-\gamma t}\,.
		\end{equation} 
	\end{enumerate}
\end{thm}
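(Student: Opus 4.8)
The plan is to prove the equivalence by analyzing the spectral structure of the generator $L'$ of the Schr\"odinger-picture semigroup $(T'_t)$ acting on the finite-dimensional space $B(\mathcal{H})^*$. Since everything is finite-dimensional, $T'_t = e^{tL'}$ and its long-time behavior is governed by the eigenvalues of $L'$ with largest real part. I would first observe that because each $T_t$ is unital and completely positive, each $T'_t$ is trace-preserving and positive, so $\|T'_t\|$ is bounded uniformly in $t$ (e.g.\ $T'_t$ maps states to states). This uniform boundedness forces every eigenvalue $\lambda$ of $L'$ to satisfy $\mathrm{Re}(\lambda) \le 0$, and moreover any eigenvalue on the imaginary axis must be semisimple (no nontrivial Jordan block), else $\|T'_t\|$ would grow polynomially.

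For the direction (1) $\Rightarrow$ (2): the hypothesis gives a state $\pi$ fixed by all $T_t$, equivalently $\pi$ (viewed in $B(\mathcal{H})^*$) satisfies $L'\pi = 0$, so $0$ is an eigenvalue. The decay condition $|(\psi-\pi)\circ T_t(A)| \le \|A\|g_\pi(t) \to 0$ says precisely that $T'_t\psi \to \pi$ for every state $\psi$, and by taking differences and using that states span $B(\mathcal{H})^*$, that $T'_t \to P$ in norm, where $P$ is the rank-one projection $X \mapsto \mathrm{Tr}(X)\,\pi$ onto $\mathrm{span}(\pi)$ along $\ker(\mathrm{Tr})$. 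A norm-convergent semigroup on a finite-dimensional space necessarily converges exponentially: writing $B(\mathcal{H})^* = \mathrm{span}(\pi) \oplus V$ with $V = \ker(\mathrm{Tr})$ invariant under $L'$ (trace-preservation), convergence $T'_t \to P$ forces $\mathrm{spec}(L'|_V) \subset \{\mathrm{Re}(\lambda) < 0\}$ --- there can be no purely imaginary eigenvalue on $V$ since that would produce a non-decaying oscillatory component. Setting $\gamma = -\max\{\mathrm{Re}(\lambda) : \lambda \in \mathrm{spec}(L'|_V)\} > 0$ and using the Jordan form on $V$ gives $\|T'_t - P\| = \|e^{tL'|_V}\| \le c e^{-\gamma' t}$ for any $\gamma' < \gamma$ with a suitable $c$ absorbing the polynomial Jordan factors, which is (2).

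For (2) $\Rightarrow$ (1): given the rank-one projection $P$ with $\|T'_t - P\| \le ce^{-\gamma t}$, I would first check $P$ is itself of the required form. Letting $t \to \infty$ in the semigroup identity $T'_s T'_t = T'_{s+t}$ yields $T'_s P = P = P T'_s$, so $\mathrm{ran}(P)$ is $T'_s$-invariant and one-dimensional, spanned by some $\pi_0$ with $T'_s \pi_0 = \pi_0$; since $T'_s$ is trace-preserving and positive, $\pi_0$ (suitably normalized) is a state $\pi$ fixed by the semigroup, giving $\pi \circ T_t = \pi$ (condition 1, first part). Also $P$ kills $\ker(\mathrm{Tr})$: for $X$ with $\mathrm{Tr}(X)=0$, $\mathrm{Tr}(T'_t X)=0$ for all $t$, and $PX = \lim T'_t X$ lies in $\mathrm{span}(\pi)$ with trace zero, hence $PX=0$; thus $PX = \mathrm{Tr}(X)\pi$. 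Then for any state $\psi$ and any $A$, $(\psi-\pi)\circ T_t(A) = \langle T'_t\psi - \pi, A\rangle = \langle (T'_t - P)\psi, A\rangle$ since $P\psi = \pi$, so $|(\psi-\pi)\circ T_t(A)| \le \|T'_t - P\|\,\|\psi\|\,\|A\| \le c'e^{-\gamma t}\|A\|$, and one takes $g_\pi(t) = c' e^{-\gamma t} \to 0$.

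The main obstacle I anticipate is being careful about which operator norm and dual pairing is in play: $T'_t$ acts on $B(\mathcal{H})^*$ and the estimate $|(\psi-\pi)\circ T_t(A)| \le \|A\| g_\pi(t)$ involves the operator norm on $A \in B(\mathcal{H})$, so one must fix the dual pairing $\langle \rho, A\rangle$ (trace pairing) and verify that the relevant norms on $B(\mathcal{H})^*$ are all equivalent --- harmless in finite dimensions but worth stating --- and that $\|T'_t - P\|$ in (2) can be taken with respect to the norm dual to $\|\cdot\|$ on $B(\mathcal{H})$, or else absorb the norm-equivalence constant into $c$. The spectral/Jordan-form argument showing norm convergence upgrades automatically to exponential convergence is standard but should be stated cleanly, as it is the technical heart connecting the qualitative hypothesis in (1) to the quantitative conclusion in (2).
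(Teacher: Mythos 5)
Your proposal is correct, and both directions follow the same overall architecture as the paper's proof: show that $T'_t\to P$ in norm with $P(\xi)=\xi(\one)\pi$, upgrade qualitative convergence to an exponential rate, and for the converse extract the fixed state from the range of $P$ and read off $g_\pi(t)=ce^{-\gamma t}$ (your $(2)\Rightarrow(1)$ argument is essentially verbatim the paper's). The genuine difference lies in the technical heart, the upgrade from norm convergence to exponential convergence. You work directly with the generator: decompose $B(\mathcal{H})^*=\spn(\pi)\oplus\ker(\xi\mapsto\xi(\one))$, observe both summands are invariant, argue that convergence to zero on the complement rules out eigenvalues of $L'$ with nonnegative real part there, and then invoke the Jordan form to get $\|e^{tL'|_V}\|\le ce^{-\gamma' t}$. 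The paper instead stays at the level of the semigroup and proves a Lyapunov-type theorem: via Fekete's subadditivity lemma and the Gelfand formula one gets $\rad(S_t)=e^{t\omega_0}$ for the growth bound $\omega_0$, so $\|S_t\|\to0$ forces $\omega_0<0$ and hence uniform exponential stability of $T'_t(\id-P)$. Your route is more elementary linear algebra and arguably shorter in finite dimensions; the paper's route is the one that survives in the infinite-dimensional $C_0$-semigroup setting. One small point where the paper is more careful than you are: to pass from the hypothesis on \emph{states} to norm convergence of $T'_t$ on all of $B(\mathcal{H})^*$, the paper uses the Jordan decomposition $\xi=\xi_1-\xi_2+i\xi_3-i\xi_4$ with $\sum_j\|\xi_j\|\le2\|\xi\|$ to get the explicit bound $\|(\xi-P\xi)\circ T_t\|\le2\|\xi\|g_\pi(t)$; your appeal to ``states span'' suffices for the qualitative convergence you actually need (pointwise convergence on a spanning set implies norm convergence in finite dimensions), but if you wanted a quantitative bound at this stage you would need the norm control the Jordan decomposition provides.
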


Before beginning the proof, let us mention an immediate application to our models in the following: 
\begin{cor}
	Let $(\Gamma, d)$ be a discrete countable metric space, and let $\mathscr{L}$ be a local dissipative interaction. Fix $\Lambda \in \mathcal{P}_0(\Gamma)$ and let $(T^{\Lambda}_t|t\ge0)$ be the associated dynamical semigroup. Then, the following are equivalent:
	\begin{enumerate}
		\item $\pi$ is a local dynamical fixed point (see Definition~\ref{def:steadystate}). 
		\item The state $\pi$ is a fixed point of $(T^{\Lambda}_t|t\ge0)$ and there exist constants $c,\gamma>0$ which may depend on $\Lambda$ so that 
	\begin{equation}
		|(\pi- \psi)\circ T^{\Lambda}_t(A)| \le c \|A\| e^{-\gamma t}\,, \quad \forall \psi \in \A_\Lambda^* \text{ and } A\in \A_{\Lambda}\,.
	\end{equation} for all states $\psi$ and all $A\in \A_{\Lambda}$. 
	\end{enumerate}
	In particular, if one of the above equivalent conditions holds in $\Lambda$, one can take $g_\Lambda(t) = ce^{-\gamma t}$.
\end{cor}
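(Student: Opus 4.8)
The plan is to read this corollary off from Theorem~\ref{thm:exponential_steady_state}, applied with $\mathcal{H} = \mathcal{H}_{\Lambda}$ and $L = \mathscr{L}^{\Lambda}$. The one preliminary point I would check is that $\mathscr{L}^{\Lambda} = \sum_{Z \subset \Lambda} L_Z$ genuinely satisfies the hypothesis of that theorem, i.e. is a bounded Lindblad generator on $\mathcal{H}_{\Lambda}$: each $L_Z$, regarded as $L_Z \otimes \id_{\Lambda \setminus Z}$, is still of the form (\ref{eqn:lindblad_equation}) on $\mathcal{A}_{\Lambda}$ (with its Hamiltonian and jump operators embedded in $\mathcal{A}_{\Lambda}$), and a finite sum of generators in Lindblad form is again in Lindblad form --- one simply amalgamates the Hamiltonian terms and concatenates the lists of jump operators. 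Boundedness is automatic since $\Lambda$ is finite. After this identification, $T_t = T^{\Lambda}_t$ and the semigroup $(T'_t)$ of Theorem~\ref{thm:exponential_steady_state} is the Schr\"odinger-picture dynamics on $\mathcal{A}_{\Lambda}^*$.

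For the implication (1) $\Rightarrow$ (2), I would observe that the two defining properties of a local dynamical fixed-point in Definition~\ref{def:steadystate} --- with $g_{\Lambda}$ taking values in $[0,2] \subset [0,\infty)$ and, in the case of interest, $g_{\Lambda}(t) \to 0$ --- together constitute exactly condition~1 of Theorem~\ref{thm:exponential_steady_state}, so its condition~2 supplies constants $c, \gamma > 0$ and a rank-one idempotent $P$ on $\mathcal{A}_{\Lambda}^*$ with $\|T'_t - P\| \le c e^{-\gamma t}$. I would then identify $P\psi$ with $\pi$ for every state $\psi$: the net $T'_t \psi$ consists of states and converges in the finite-dimensional space $\mathcal{A}_{\Lambda}^*$ to $P\psi$, hence $P\psi$ is a state, it is fixed by the semigroup because $T'_s P = \lim_t T'_{s+t} = P$, and uniqueness of the fixed state (which follows from $g_{\Lambda}(t) \to 0$) forces $P\psi = \pi$. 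Since $(\psi \circ T^{\Lambda}_t)(A) = (T'_t\psi)(A)$ and $\pi \circ T^{\Lambda}_t = \pi$, this yields $|(\pi - \psi)\circ T^{\Lambda}_t(A)| = |(\pi - T'_t\psi)(A)| \le \|\pi - T'_t\psi\|\,\|A\| \le c e^{-\gamma t}\|A\|$, which is (2).

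For (2) $\Rightarrow$ (1), I would simply set $g_{\Lambda}(t) = \min\{2,\, c e^{-\gamma t}\}$; this is a legitimate choice because for any two states $|(\pi - \omega)\circ T^{\Lambda}_t(A)| \le \|\pi - \omega\|\,\|A\| \le 2\|A\|$, so $g_{\Lambda}$ indeed maps $[0,\infty)$ into $[0,2]$, tends to $0$, and dominates the correlation as Definition~\ref{def:steadystate} requires; the fixed-point property is assumed outright. The ``in particular'' clause then follows since $g_{\Lambda}(t) = c e^{-\gamma t}$ as soon as $c e^{-\gamma t} \le 2$. I expect the only step that is not pure bookkeeping to be the identification $P\psi = \pi$, and even that is routine given finite-dimensionality (compactness of the state space) together with uniqueness of the dynamical fixed point; the other point I would be careful not to skip is the verification that $\mathscr{L}^{\Lambda}$ is a Lindblad generator, which is likewise elementary.
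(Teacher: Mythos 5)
Your proposal is correct and follows essentially the same route as the paper: apply Theorem~\ref{thm:exponential_steady_state} with $\mathcal{H}=\mathcal{H}_\Lambda$, use the rank-one projection $P$ with $\|T'_t-P\|\le ce^{-\gamma t}$ and the identification $P\psi=\pi$ for states $\psi$ (which the paper reads off from the explicit construction $P(f)=f(\one)\pi$ in the theorem's proof, whereas you re-derive it via closedness of the state space and uniqueness of the fixed point), and note the reverse implication is immediate. Your extra care about $\mathscr{L}^\Lambda$ being in Lindblad form and about capping $g_\Lambda$ at $2$ only tightens bookkeeping the paper leaves implicit.
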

\begin{proof}
By Theorem~\ref{thm:exponential_steady_state}, there is a rank-one projection $P:\A_{\Lambda}^* \to \A_{\Lambda}^*$ so that the pre-composition semigroup $S_t:= (T^{\Lambda}_t)'$ converges to $P$ exponentially fast in the norm of $\A_{\Lambda}^*$. By the proof of Theorem~\ref{thm:exponential_steady_state} below, it follows that $P(\pi) = \pi$ and for all $f\in \A_{\Lambda}^*$, one has $P(f) = f(\one) \pi$. Then, 
	\begin{equation}
		|(\psi - \pi)\circ T^{\Lambda}_t(A)| \le \|A\|  \|S_t(\psi) - S_t(\pi)\| = \|A\| \|(S_t-P)(\psi)\| \le  \|A\| \|S_t - P\| \le c \|A\| e^{-\gamma t} \,.
	\end{equation} The reverse implication is obvious.
\end{proof}

Before we begin the proof of Theorem~\ref{thm:exponential_steady_state}, we will recall several lemmas for the convenience of the reader. These can be found in more generality in \cite{EngelNagel_Short}. We shall fix $X$ as a finite-dimensional vector space equipped with a norm $\| \cdot \|$. Moreover, $(S_t:t\ge 0)$ will denote a semigroup of linear operators $S_t:X\to X$ so that $S_t$ is uniformly continuous in the sense that $\|S_t - \id\| \to 0$ as $t\downarrow 0$. We will write $A$ for the generator of $S_t$, which we note is a bounded operator on $X$.

\begin{lem}[Proposition II.2.3 \cite{EngelNagel_Short}]\label{lem:subspace_semigroup}
	Let $X$ be a vector space and let $Y\subset X$ be a vector subspace of $X$. Let $(S_t|t\ge 0)$ be a norm continuous semigroup on $X$ and suppose that $S_t(Y)\subset Y$ for all $t\ge 0$. If $A$ is the generator of $S_t$, then the restriction $S_t|_Y$ is a norm-continuous semigroup with generator $A|_Y$. 
\end{lem}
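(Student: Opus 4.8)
The plan is to verify directly that $(S_t|_Y \mid t \ge 0)$ is a norm-continuous one-parameter semigroup on $Y$ and then identify its generator. First I would observe that the algebraic semigroup properties are inherited for free: since $S_t(Y) \subset Y$ for every $t \ge 0$, the restriction $S_t|_Y$ is a well-defined linear operator on $Y$, and the identities $S_0|_Y = \id_Y$ and $S_{t+s}|_Y = S_t|_Y \, S_s|_Y$ follow immediately by restricting the corresponding identities for $S_t$ on $X$. For norm continuity, note that $Y$, being a linear subspace of the finite-dimensional space $X$, carries the subspace norm, and $\|S_t|_Y - \id_Y\|_{B(Y)} \le \|S_t - \id\|_{B(X)} \to 0$ as $t \downarrow 0$; hence $(S_t|_Y)$ is a uniformly (norm-)continuous semigroup on $Y$, so by the standard theory (as recalled in the preamble to this subsection) it has a bounded generator, call it $B$.

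Next I would identify $B$ with $A|_Y$. The generator of a norm-continuous semigroup is recovered as the norm limit $B y = \lim_{t \downarrow 0} t^{-1}(S_t|_Y y - y)$ for $y \in Y$. For $y \in Y$ this difference quotient equals $t^{-1}(S_t y - y)$, computed in $X$, which converges in $X$ to $Ay$. Since $Y$ is finite-dimensional, hence closed in $X$, and since each difference quotient lies in $Y$ (as $S_t(Y) \subset Y$ and $Y$ is a subspace), the limit $Ay$ also lies in $Y$; moreover the limit computed in the subspace norm of $Y$ agrees with the limit computed in $X$. Therefore $By = Ay$ for all $y \in Y$, i.e.\ $B = A|_Y$, which is exactly the bounded operator obtained by restricting $A$ to the invariant subspace $Y$. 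This gives both that $A|_Y$ maps $Y$ into $Y$ and that it generates $(S_t|_Y)$.

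I do not anticipate a genuine obstacle here: the statement is essentially bookkeeping once one notes that finite-dimensionality makes $Y$ automatically closed, so that norm limits of elements of $Y$ stay in $Y$ and the subspace and ambient norms induce the same convergence. The one point requiring a (trivial) word of care is verifying that the difference quotients actually lie in $Y$ before passing to the limit — this uses both that $Y$ is a linear subspace and that $S_t(Y) \subset Y$ — and that, consequently, $A$ genuinely restricts to an operator $A|_Y \colon Y \to Y$ rather than merely being defined on $Y$ with values in $X$. Everything else is a direct application of the uniqueness of generators for norm-continuous semigroups, which is standard and available from \cite{EngelNagel_Short}.
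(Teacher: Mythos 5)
Your proof is correct and follows essentially the same route as the paper: inherit the algebraic and continuity properties by restriction, compute the difference quotient in the ambient norm, and invoke uniqueness of generators for norm-continuous semigroups. The only difference is that you spell out the (automatic, since $Y$ is finite-dimensional and hence closed) fact that the limit $Ay$ lands in $Y$, a point the paper's proof leaves implicit.
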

\begin{proof}
	The continuity and semigroup parts of the claim are obvious. To see that $A|_Y$ generates $S_t|_Y$, note that for any $y\in Y$, we have 
	\begin{equation}
		\left \|\frac{S_t|_Y y - y}{t} - A|_Y y\right \|_Y = \left\|\frac{S_ty - y }{t}- Ay\right\|_X \to 0 \quad \text{ as } t\downarrow 0.
	\end{equation}  Since the generator of a semigroup is unique, it must be that $A|_Y$ generates the restricted semigroup. 
\end{proof}

Let us recall from \cite[Definition I.1.5]{EngelNagel_Short} the \textit{growth bound} $\omega_0$ of a semigroup $(S_t : t\ge 0)$ is the quantity \begin{equation} \omega_0 = \inf \{r \in \mathbb{R}: \exists M_r\ge 1 \colon \|S_t\|\le M_re^{rt}\}\,.\end{equation} Recall that the spectral radius $rad(S_t)$ is the supremum over all the elements of the spectrum of $S_t$. 

Let us recall an elementary subadditivity result. 
\begin{lem}\label{lem:Fekete}
Let $f:(0, \infty) \to [0, \infty)$ be a measurable function that is subadditive and uniformly bounded on compact intervals. Then, the following limits exist:
	\begin{equation}
		\lim_{t\to \infty} \frac{f(t)}{t} = \inf_{t>0} \frac{f(t)}{t}\,.
	\end{equation}
\end{lem}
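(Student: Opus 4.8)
The plan is to run the standard argument for the continuous version of Fekete's subadditivity lemma. Set $L := \inf_{t>0} f(t)/t$; since $f$ is nonnegative every quotient $f(t)/t$ is nonnegative, so $L \in [0,\infty)$ is finite, and the bound $\liminf_{t\to\infty} f(t)/t \ge L$ is immediate from the definition of the infimum. Thus the entire task reduces to proving $\limsup_{t\to\infty} f(t)/t \le L$.

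For this, I would fix $\epsilon > 0$, choose $t_0 > 0$ with $f(t_0)/t_0 < L + \epsilon$, and then control a long value $f(t)$ by writing $t$ as (roughly) $t/t_0$ copies of $t_0$ plus a bounded remainder. Because the domain is $(0,\infty)$ and the hypotheses only give boundedness of $f$ on compact subintervals of $(0,\infty)$ (not near $0$), I would route the remainder through the fixed compact interval $[t_0, 2t_0]$: put $M := \sup_{s\in[t_0,2t_0]} f(s) < \infty$, and for $t \ge 2t_0$ take the unique integer $n = \lfloor (t-t_0)/t_0\rfloor \ge 1$, so that $r := t - n t_0 \in [t_0, 2t_0)$ and $n t_0 \le t$. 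Iterating subadditivity gives $f(t) = f(n t_0 + r) \le n f(t_0) + f(r)$, hence
\[
\frac{f(t)}{t} \le \frac{n t_0}{t}\cdot\frac{f(t_0)}{t_0} + \frac{f(r)}{t} \le \frac{f(t_0)}{t_0} + \frac{M}{t} < L + \epsilon + \frac{M}{t}\,.
\]
Sending $t\to\infty$ gives $\limsup_{t\to\infty} f(t)/t \le L + \epsilon$, and since $\epsilon>0$ was arbitrary this combines with the trivial lower bound to yield $\lim_{t\to\infty} f(t)/t = L = \inf_{t>0} f(t)/t$.

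I do not anticipate a genuine obstacle, as this is a classical fact; the only subtlety worth flagging is the one already mentioned — one cannot naively reduce $t$ modulo $t_0$ and evaluate $f$ at a remainder in $[0,t_0)$, since $f$ is neither assumed bounded nor even defined near $0$, so the remainder must be kept inside a compact interval bounded away from the origin (here $[t_0,2t_0]$), on which the hypothesis supplies the needed bound $M$. I would also note in passing that measurability of $f$ plays no role in this argument; it is presumably included in the statement only because in the intended application $t \mapsto \log\|S_t\|$ is continuous.
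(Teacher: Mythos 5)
Your proof is correct. The paper does not prove this lemma itself — it defers to \cite[Lemma V.1.23]{EngelNagel_Short} — and the argument there is the same classical Fekete subadditivity argument you give; your care in routing the remainder through $[t_0,2t_0]$ rather than near $0$ (where $f$ is neither defined nor assumed bounded) is exactly the right point to flag, and your observation that measurability is not actually used under the stated local-boundedness hypothesis is also accurate.
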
 For a proof, see \cite[Lemma V.1.23]{EngelNagel_Short}.

\begin{lem}[Proposition V.1.22 \cite{EngelNagel_Short}]\label{lem:appendix_growth_bound}
	Let $S_t$ be a norm-continuous semigroup on a vector space $X$ with generator $A$. The following equalities hold:
	\begin{equation}
		\omega_0 = \inf_{t>0} t^{-1} \log \|S_t\| = \lim_{t\to \infty} t^{-1} \log \|S_t\|
	\end{equation} In particular, have $rad(S_t) = e^{t \omega_0}$. 
\end{lem}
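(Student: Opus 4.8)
The plan is to reduce the entire statement to Fekete's subadditivity lemma (Lemma~\ref{lem:Fekete}), applied to a suitable shift of $t \mapsto \log\|S_t\|$, and then to identify the resulting limit both with the growth bound $\omega_0$ and, via the Gelfand spectral radius formula, with $\rad(S_t)$. First I would record two facts that hold precisely because the generator $A$ is bounded: since $S_t = e^{tA}$ is invertible with inverse $S_{-t}=e^{-tA}$, submultiplicativity of the operator norm gives $1 = \|S_t S_{-t}\| \le \|S_t\|\,\|S_{-t}\|$, and together with $\|S_{\pm t}\| \le e^{t\|A\|}$ this yields $e^{-t\|A\|} \le \|S_t\| \le e^{t\|A\|}$ for all $t \ge 0$. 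In particular $\log\|S_t\|$ is finite, so the function $g(t) = \|A\|\,t + \log\|S_t\|$ is non-negative. Because $t\mapsto S_t$ is norm-continuous, $g$ is continuous, hence measurable and bounded on every compact subinterval of $(0,\infty)$; and the semigroup law gives $\|S_{t+s}\|\le \|S_t\|\,\|S_s\|$, so $g(t+s)\le g(t)+g(s)$. Thus $g$ meets every hypothesis of Lemma~\ref{lem:Fekete}, and subtracting the constant $\|A\|$ from the conclusion $\lim_{t\to\infty} g(t)/t = \inf_{t>0} g(t)/t$ produces
\[
\beta := \lim_{t\to\infty} \frac{\log\|S_t\|}{t} = \inf_{t>0} \frac{\log\|S_t\|}{t},
\]
a finite number (bounded below by $-\|A\|$).

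Next I would prove $\omega_0 = \beta$ by two opposing inequalities directly from the definition of $\omega_0$. For $\omega_0 \le \beta$, fix $r>\beta$; since $\log\|S_t\|/t \to \beta$, one has $\|S_t\|\le e^{rt}$ for all large $t$, and combining this with the finite bound $\|S_t\|\le e^{t\|A\|}$ on the remaining compact initial interval furnishes a constant $M_r\ge 1$ with $\|S_t\|\le M_r e^{rt}$ for all $t\ge 0$, whence $r \ge \omega_0$; letting $r\downarrow\beta$ gives $\omega_0\le\beta$. For $\beta\le\omega_0$, whenever an admissible bound $\|S_t\|\le M_r e^{rt}$ holds we get $\log\|S_t\|/t \le r + (\log M_r)/t \to r$, so $\beta\le r$; taking the infimum over all such $r$ yields $\beta\le\omega_0$. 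Hence $\omega_0=\beta$ and the displayed chain of equalities is exactly the first assertion.

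For the spectral radius I would fix $t>0$ and invoke the Gelfand formula $\rad(S_t)=\lim_{n\to\infty}\|S_t^{\,n}\|^{1/n}$. The semigroup property gives $S_t^{\,n}=S_{nt}$, so
\[
\rad(S_t) = \lim_{n\to\infty} \|S_{nt}\|^{1/n} = \lim_{n\to\infty} \exp\!\left( t\cdot \frac{\log\|S_{nt}\|}{nt}\right) = \exp(t\,\omega_0),
\]
where the last equality uses that $nt\to\infty$, the limit $\log\|S_s\|/s \to \beta$ established above, and $\omega_0=\beta$. This proves $\rad(S_t)=e^{t\omega_0}$, completing the ``in particular'' clause.

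I expect the only genuine subtlety to be the non-negativity hypothesis in Lemma~\ref{lem:Fekete}: the raw function $\log\|S_t\|$ may be negative (for instance for a contraction semigroup), so the shift by $\|A\|\,t$, whose non-negativity rests on the invertibility of $S_t$ for a bounded generator, is the one step that must be handled with care. Everything else is a routine unwinding of the definition of $\omega_0$ and of the Gelfand formula.
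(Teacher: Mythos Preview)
Your proof is correct and follows essentially the same route as the paper: apply Fekete's lemma to (a shift of) $t\mapsto\log\|S_t\|$, identify the resulting limit with $\omega_0$ directly from the definition of the growth bound, and then read off $\rad(S_t)=e^{t\omega_0}$ via the Gelfand formula and the semigroup identity $S_t^{\,n}=S_{nt}$. Your additional shift by $\|A\|\,t$ to force non-negativity is a worthwhile refinement, since Lemma~\ref{lem:Fekete} as stated requires $f\ge 0$, whereas the paper applies it directly to $f(t)=\log\|S_t\|$ without addressing that hypothesis.
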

\begin{proof}
	Note that the function $f(t) := \log \|S_t\|$ is a measurable subadditive function. Moreover, since $A$ is bounded, we have 
	\begin{equation}
		\log(\|S_t\|) \le \log e^{t\|A\|} = t\|A\|\,.
	\end{equation} Hence, $f(t)$ is uniformly bounded on compact intervals. Therefore, by the Lemma~\ref{lem:Fekete}, $v = \inf t^{-1}f(t) = \lim_{t\to \infty} t^{-1}f(t)$ exists. Note for all $t>0$, $e^{vt} \le e^{\frac{t}{t} \log \|S_t\|}= \|S_t\|$ by definition. One has $e^{v\cdot 0} = 1 = \|S_0\|$ as well since $S_0 = \id$. Thus $v\le \omega_0$. 
	
	On the other hand, if $\epsilon >0$, there is some $t_0\ge 0$ so that for any $t\ge t_0$, we have $t^{-1}f(t) \le v+\epsilon$. Hence $\|S_t\| \le e^{(v+\epsilon) t}$. Moreover, on $[0,t_0]$, we may bound $\|S_t\|$ above uniformly by $e^{t_0 \|A\|}\ge 1$. Thus, $\|S_t \| \le M_{v+\epsilon} e^{(v+\epsilon)t}$ for all $t\ge 0$ by taking $M_{v+\epsilon} = \max\{1, e^{t_0 \|A\|}\}$. Hence $\omega_0\le v+\epsilon$, and so it must be that $\omega_0 = v$. 
	
	Now, by the Gelfand-Hadamard formula (\textit{cf.} \cite{Bhatia} or \cite{Weidmann}) and the semigroup property, we know that for any $t>0$, 
		\begin{equation}
			rad(S_t) = \lim_{n\to \infty} \|T_{nt}\|^{1/n} = \lim_{n\to \infty} \exp \left( \frac{t}{nt} \log \|T_{nt}\| \right)= e^{t\omega_0}
		\end{equation} by continuity. That the relation holds when $t=0$ follows readily from the fact that $S_0 = \id$, completing the proof.
		\end{proof}
The contents of the following Lemma can be found in Propositions V.3.3.2 and V.3.3.5 of \cite{EngelNagel_Short}. See also Lyapunov's Theorem V.3.3.6 in \cite{EngelNagel_Short}.
\begin{lem}[Lyapunov's Theorem]
Let $(S_t|t\ge 0)$ be a semigroup on a vector space $X$. The following are equivalent: 
	\begin{enumerate}
		\item The limit $\lim_{t\to \infty} \|S_t\| = 0$. 
		\item There exists $\epsilon >0$ and $M>0$ so that 
			\begin{equation}
				\|S_t\| \le Me^{-\epsilon t}
			\end{equation} \textit{I.e.}, $S_t$ is \textit{uniformly exponentially stable}.
	\end{enumerate}
\end{lem}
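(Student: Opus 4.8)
The implication $(2) \Rightarrow (1)$ is immediate: if $\|S_t\| \le Me^{-\epsilon t}$ with $\epsilon > 0$, then $\|S_t\| \to 0$. The content is the converse, and the plan is to exploit the submultiplicativity of $t \mapsto \|S_t\|$ — that is, $\|S_{s+t}\| \le \|S_s\|\,\|S_t\|$ by the semigroup property — together with norm-continuity near $t=0$, to upgrade mere pointwise decay into a uniform exponential rate. Equivalently, one could phrase this via $\log\|S_t\|$ being subadditive and apply Lemma~\ref{lem:Fekete}/Lemma~\ref{lem:appendix_growth_bound}.

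Concretely, I would first observe that since $\|S_t\| \to 0$ there is some $t_0 > 0$ with $q := \|S_{t_0}\| < 1$, whence $\|S_{nt_0}\| \le q^n$ for all $n \in \mathbb{N}$. Writing an arbitrary $t \ge 0$ as $t = nt_0 + s$ with $n \in \mathbb{N}_0$ and $s \in [0,t_0)$, and setting $C := \sup_{0 \le s \le t_0}\|S_s\| < \infty$ (finite by norm-continuity of $(S_t)$ and $S_0 = \id$), one obtains
\begin{equation}
\|S_t\| = \|S_s S_{nt_0}\| \le C q^n \le \frac{C}{q}\exp\!\left(\frac{t}{t_0}\log q\right),
\end{equation}
using $n \ge t/t_0 - 1$ and $\log q < 0$ in the last step. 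Taking $M := C/q$ and $\epsilon := -t_0^{-1}\log q > 0$ gives $\|S_t\| \le M e^{-\epsilon t}$, which is $(2)$.

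Alternatively, and perhaps more economically given what the appendix already records, one can invoke Lemma~\ref{lem:appendix_growth_bound}, which identifies the growth bound as $\omega_0 = \inf_{t>0} t^{-1}\log\|S_t\|$. The choice of $t_0$ above yields $\omega_0 \le t_0^{-1}\log q < 0$, and then the definition of $\omega_0$ (Definition I.1.5 in \cite{EngelNagel_Short}) furnishes, for any fixed $r$ with $\omega_0 < r < 0$, a constant $M_r \ge 1$ with $\|S_t\| \le M_r e^{rt}$ for all $t \ge 0$; setting $\epsilon = -r$ concludes the proof.

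There is no real obstacle here; the only subtlety is conceptual, namely that pointwise decay $\|S_t\| \to 0$ carries no a priori rate, so one must feed a single instant of strict contraction, $\|S_{t_0}\| < 1$, back through the semigroup law to manufacture the uniform exponent. Finite-dimensionality of $X$ enters only to guarantee norm-continuity of $(S_t)$ (equivalently, boundedness of the generator), which is what makes $C = \sup_{[0,t_0]}\|S_s\|$ finite and legitimizes the appeal to Lemma~\ref{lem:appendix_growth_bound}.
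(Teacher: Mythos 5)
Your proposal is correct. Your second, ``more economical'' variant is in substance the paper's own proof: the paper deduces $\omega_0<0$ from the identity $e^{t\omega_0}=\mathrm{rad}(S_t)\le\|S_t\|$ together with $\|S_t\|\to 0$, and then reads off the exponential bound from the definition of the growth bound, exactly as you do (you reach $\omega_0<0$ via the infimum formula of Lemma~\ref{lem:appendix_growth_bound} rather than via the spectral radius, but that is the same circle of ideas). Your first variant --- pick $t_0$ with $q:=\|S_{t_0}\|<1$, iterate the semigroup law to get $\|S_{nt_0}\|\le q^n$, and absorb the remainder $s\in[0,t_0)$ into the constant $C=\sup_{[0,t_0]}\|S_s\|$ --- is a genuinely more elementary route that bypasses the growth-bound and spectral-radius machinery entirely; its payoff is self-containedness (it needs only submultiplicativity and local boundedness near $t=0$, so it works verbatim for strongly continuous semigroups on any Banach space), whereas the paper's route has the advantage of reusing Lemma~\ref{lem:appendix_growth_bound}, which the appendix has already established for other purposes. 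The only cosmetic caveat in your first argument is the degenerate case $q=0$ (where $C/q$ and $\log q$ are undefined); this is avoided by choosing $t_0$ so that $0<q<1$, which is always possible by continuity of $t\mapsto\|S_t\|$ and $\|S_0\|=1$.
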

\begin{proof} 
	That 2 implies 1 is obvious here. To see the other direction, recall that $e^{t\omega_0} = rad(S_t) \le \|S_t\|$. Hence $\omega_0<0$, and the inequality follows from Lemma~\ref{lem:appendix_growth_bound}.
\end{proof}

\begin{lem}\label{lem:exp_decay}
	Let $S_t$ be a norm-continuous semigroup on a finite dimensional vector space $X$ with a generator $A$. Assume that the limit
		\begin{equation}
			\|\cdot \|-\lim_{t\to \infty} S_t = P \neq 0
		\end{equation} exists. Then, there are constants $M\ge 1$ and $\epsilon >0$ so that 
		\begin{equation}
			\|S_t - P\| \le M e^{-\epsilon t}
		\end{equation}
\end{lem}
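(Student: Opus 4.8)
The plan is to identify $P$ as the projection onto the fixed space of the semigroup $(S_t)$, to split $X$ along $P$, and to reduce the decay statement to the Lyapunov-type lemma applied on the complementary summand.

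First I would record the algebraic consequences of the hypothesis that $S_t\to P$ in operator norm. Since $X$ is finite-dimensional, every linear topology coincides with the norm topology, so from the semigroup identities $S_{t+u}=S_tS_u=S_uS_t$ one may let $u\to\infty$ to obtain $S_tP=P$ and $PS_t=P$ for all $t\ge0$; letting $t\to\infty$ in either identity then gives $P^2=P$. Thus $P$ is an idempotent commuting with every $S_t$, and $X=\operatorname{ran}P\oplus\ker P$ is a decomposition into $S_t$-invariant subspaces; on $\operatorname{ran}P$ the semigroup acts as the identity, since $x=Py$ forces $S_tx=S_tPy=Py=x$.

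Next I would restrict attention to $Y:=\ker P$. Because $S_t(\ker P)\subset\ker P$, Lemma~\ref{lem:subspace_semigroup} shows that $(S_t|_Y)_{t\ge0}$ is again a norm-continuous semigroup, with generator $A|_Y$. Moreover $P|_Y=0$, so $\|S_t|_Y\|=\|(S_t-P)|_Y\|\le\|S_t-P\|\to0$ as $t\to\infty$. By Lyapunov's Theorem (the lemma stated immediately above) there are constants $M'\ge1$ and $\epsilon>0$ with $\|S_t|_Y\|\le M'e^{-\epsilon t}$ for every $t\ge0$.

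Finally I would transfer this estimate to all of $X$. For $x\in X$ write $x=Px+(\id-P)x$ with $(\id-P)x\in Y$; since $S_tP=P$,
\[
(S_t-P)x=S_t(\id-P)x=(S_t|_Y)\big((\id-P)x\big),
\]
hence $\|(S_t-P)x\|\le M'\|\id-P\|\,e^{-\epsilon t}\|x\|$, and taking the supremum over $\|x\|\le1$ with $M:=\max\{1,M'\|\id-P\|\}$ yields $\|S_t-P\|\le Me^{-\epsilon t}$. This lemma is essentially routine; the only steps requiring a little care — and the only place where the standing hypothesis $\lim_{t\to\infty}S_t=P$ is actually used — are the limit exchanges giving $S_tP=PS_t=P^2=P$ and the verification that $\|S_t|_Y\|\to0$, so that the hypotheses of Lemma~\ref{lem:subspace_semigroup} and of Lyapunov's Theorem are genuinely met on $\ker P$.
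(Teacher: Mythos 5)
Your proposal is correct and follows essentially the same route as the paper: identify $P$ as an idempotent commuting with the semigroup, split $X$ into the fixed space $\operatorname{ran}P$ and the invariant complement $(\id-P)X=\ker P$, observe that the restricted semigroup tends to zero in norm, and invoke Lyapunov's theorem. The only cosmetic difference is that you carry an extra factor $\|\id-P\|$ when transferring the estimate back to all of $X$, whereas the paper uses $S_tP=P$ to write $S_t-P=S_t(\id-P)$ and treat the norms as equal; both yield the claimed bound.
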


	We follow the general idea of the proof given in Lemma V.4.2 of \cite{EngelNagel_Short}. 
\begin{proof}
	First, we claim that $P$ is a projection onto the subspace $F:=\{x\in X \colon S_tx = x \, \forall t\ge 0\}$. To see this, note that for any $s\ge 0$, one has 
	\begin{align*}
		S_s P = \lim_{t\to \infty} S_s S_t = \lim_{t\to \infty} S_{t}S_s = \lim_{t\to \infty} S_{s+t} = P\,.
	\end{align*} Therefore, if $S_tx = x$ for all $t\ge 0$, then $Px = x$ by passing to the limit. Hence, $F\subset PX$. On the other hand, $S_tP = P$ so for any $y\in PX$, there is $x$ so that $y= Px$ and in particular $S_t y = S_t Px = Px = y$ for any $t\ge 0$. That is, $y$ is a fixed point of $S_t$. 
	
	Thus $X = PX \oplus (\id - P)X = F\oplus (\id-P)X$. In particular, the restriction $S_t|_{(\id-P)X}$ is given by the composition $S_t(\id - P)$. Moreover, by the commutation of $S_t$ and $P$, we infer that  $(S_t|_{(\id - P)X} | t\ge 0)$ is a norm continuous semigroup such that \[\lim_{t\to \infty} \|S_t(\id-P)\| = \lim_{t\to \infty} \|S_t - S_t P\| = \lim_{t\to \infty} \|S_t - P\| = 0\,.\] 
	
	In other words, $S_t(\id - P)$ is an exponentially stable semigroup on $(\id - P)X$. Therefore, by Lyapunov's theorem, there exists $M\ge 1$ and $\epsilon >0$ so that \begin{equation}
		\|S_t|_{(\id - P)X}\|=\|S_t(\id-P)\| = \|S_t - P\| \le Me^{-\epsilon t},
	\end{equation} as claimed.
\end{proof}

Recall that one may equip $B(\mathcal{H})^*$ with an involution given by $\xi^*(x) := \overline{\xi(x^*)}$ for all $x\in B(\mathcal{H})$ where $\xi\in B(\mathcal{H})^*$.  One then has that $\xi = \Re(\xi) + i \Im(\xi)$ where $2 \Re(\xi) = \xi + \xi^*$ and $2\Im(\xi) = \xi - \xi^*$.  Recall any linear functional such that $\xi^* = \xi$ is called \textit{Hermitian}.  The Jordan decomposition \cite[Proposition 3.2.7]{BratteliRobinson_1} (see \cite[Theorem 4.3.6]{KadisonRingrose_1} for a nice proof) asserts for any Hermitian linear functional $\xi_h=\xi_h^*$ there exist positive linear functionals $\xi_1, \xi_2$ so that $\xi_h = \xi_1 - \xi_2$ and $\|\xi_h\| = \|\xi_1\| + \|\xi_2\|$.  In particular, any linear functional $\xi = \xi_1 - \xi_2 +i\xi_3 - i \xi_4$, where the $\xi_j$ are positive linear functionals and $\|\Re(\xi)\| = \|\xi_1\|+ \|\xi_2\|$ and $\|\Im(\xi)\| = \|\xi_3\|+ \|\xi_4\|$.

We are now ready to complete the proof of Theorem~\ref{thm:exponential_steady_state}. 
\begin{proof}
	($1.\Rightarrow 2.$):  Let $P:B(\mathcal{H})^* \to B(\mathcal{H})^*$ via $P(\xi) = \xi(\one) \pi$.  Observe that $P$ is a bounded, idempotent operator with $P(\pi) = \pi$. Observe that for any nonzero positive linear functional $f\in B(\mathcal{H})^*$, we have the estimate 
	\[
		\|(f - f(\one)\pi)\circ T_t\| = f(\one) \|( \tilde f - \pi) \circ T_t\| \le f(\one) g_\pi(t)\,,
	\] where the inequality follows since $\tilde f := \frac{1}{f(\one)}f = \frac{1}{\|f\|} f$ is a state (see Propositions 3.6 and 3.8 \cite{Paulsen}).  
	
	Let $\xi \in B(\mathcal{H})^*$ and write $\xi = \xi_1 - \xi_2 +i\xi_3 - i \xi_4$ be the Jordan decomposition of $\xi$ into positive linear functionals. The following estimate holds: \begin{align*}
    \|(\xi - P(\xi))\circ T_t\| &\le \|(\xi_1 - \xi_1(\one)\pi )\circ T_t\|+\|(-1)(\xi_2 - \xi_2(\one)\pi )\circ T_t\|+\cdots\\
    &\qquad \cdots +\|(i)(\xi_3 - \xi_3(\one)\pi )\circ T_t\|+\|(-i)(\xi_4 - \xi_4(\one)\pi )\circ T_t\|\\
    &\le g_\pi(t)\sum_{j=1}^4 \xi_j(\one) = g_\pi(t) \sum_{j=1}^4 \|\xi_j\| \\
    &\le 2\|\xi\|g_\pi(t) \to 0 \text{ as }t\to \infty.
\end{align*}Thus, $\|\cdot \|_{B(\mathcal{H})^*}-\lim_{t\to \infty} T'_t = P \neq 0$. 

Putting these two facts together with the help of Lemma~\ref{lem:exp_decay} above, we see that there exist constants $c\ge 1$ and $\gamma>0$ for which \begin{equation}
	\|T_t' - P\| \le ce^{-\gamma t},
\end{equation} completing the proof.

$(2. \Rightarrow 1.)$: Obviously, the condition $\|T'_t - P\| \le ce^{-\gamma t}$ implies that $\|\cdot \|-\lim_{t\to \infty} T_t' = P\neq 0$ since $P$ is rank one. Observe that $PT_t' = (\lim T_s')T_t'
	 = \lim T_{t+s} = P = T_t'P$. In particular, $T_t'$ leaves $P(B(\mathcal{H})^*)$ invariant. Notice that $P$ preserves states in the sense that if $\psi$ is a state, so is $P\psi$. This follows because $T_t'\psi$ is a state for all $t$. Hence there must be a state $\pi$ so that $P(B(\mathcal{H})^*)=span(\pi)$, and in fact $P\psi = \pi$ for all other states $\psi$. 
	 
	 Moreover, we note  $T'_t \pi = \pi$ for all $t\ge 0$, hence $\pi$ is a fixed point. To show that $\pi$ is a dynamical dynamical fixed point, note
	\begin{align*}
		|(\psi - \pi)\circ T_t(A)| &\le \|A\| \|T_t'\psi - P\psi \|\\
		&\le \|A\| \|T_t' - P\| \|\psi\|\\
		&\le \|A\|ce^{-\gamma t},
	\end{align*} from which the claim follows easily.
\end{proof}

We turn our attention to understanding how our dynamical fixed point assumption fits into the the Schr\"odinger picture. In particular, recall that a \textit{periodic point} of a Lindbladian $L$, is any $X\in B(\mathcal{H})$ where $LX = i \lambda X$ for some real number $\lambda$. Note that fixed points are therefore periodic points with $\lambda = 0$. Write $P_\lambda$ for the projection onto the subspace associated to the eigenvalue $i\lambda$. In particular, one can define $T_\phi = \sum_{\lambda} P_\lambda$ and $T_{t,\phi} = T_t T_\phi = \sum_{\lambda} e^{i\lambda t}P_\lambda$ (see \cite{Cubitt,Wolf} for details). In particular, if $L$ has a unique fixed point and no other periodic points, then $T_{\phi, t} = P_0$ is just the projection onto the subspace spanned by the fixed point. 

Recall the mixing coefficient $\eta(T^\dagger_t)$  (see \cite{Brandao, Cubitt, Kastoryano_et_al, Temme_et_al}) defined via
	\begin{equation}
		\eta(T^{\dagger}_t) = \sup_{\substack{\varrho \ge 0:\\ \Tr(\varrho) = 1}} \frac{1}{2} \|T^{\dagger}_t \varrho - T^{\dagger}_{\phi, t}\varrho \|_1. 
	\end{equation} 

We note that the mixing coefficient can be defined without reference to a specific model of an open quantum spin system (see \cite{Kastoryano_et_al, Temme_et_al}). We therefore state the following Theorem without reference to a specific model. However, we note that when Theorem~\ref{thm:rapid_mixing} is applied to the case we consider in the main text, $\eta((T_t^{\Lambda})^\dagger)$ in general depends on the volume $\Lambda$. This is in contrast to the assumption in \cite{Brandao, Cubitt} where the authors of both studies require the mixing coefficient to decay at a volume-independent rate.

\begin{thm}\label{thm:rapid_mixing}
	Let $\mathcal{H}$ be a finite dimensional Hilbert space and let $T_t = \exp\{t L\}$ be a quantum dynamical semigroup with generator $L$. Then, the following are logically equivalent. 
	\begin{enumerate}[label = (\alph*)]
		\item The pre-composition semigroup $T_t'$ admits a dynamical fixed point (see Theorem~\ref{thm:exponential_steady_state}) and has no other periodic points. 
		\item The Schr\"odinger picture semigroup $T_t^\dagger$ admits a fixed point with no other periodic points and $\eta(T^\dagger_t) \to 0$ as $t\to \infty$. 
	\end{enumerate}
\end{thm}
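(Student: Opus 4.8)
The plan is to run everything through the canonical trace duality $B(\mathcal{H})^* \cong B(\mathcal{H})$. Writing $\xi \leftrightarrow \rho_\xi$ for the linear identification determined by $\xi(A) = \Tr(\rho_\xi A)$, one has an isometry of $B(\mathcal{H})^*$ (with its dual norm) onto $(B(\mathcal{H}), \|\cdot\|_1)$ that intertwines $T_t'$ with the Schr\"odinger picture $T_t^\dagger$ and $L^*$ with $L^\dagger$. Since $L^\dagger$ is the transpose of $L$ for the bilinear trace form, it has the same spectrum, eigenspace dimensions and Jordan structure as $L$; hence ``$T_t^\dagger$ has a fixed point with no other periodic points'' is, modulo this identification, exactly the hypothesis ``$L$ has a unique fixed point and no other periodic points''. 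Under that hypothesis — using that $T_t$ is unital completely positive, so $\|T_t\|=1$, whence $(T_t^\dagger)$ and $(T_t')$ are bounded semigroups on a finite-dimensional space and their generators have semisimple peripheral spectrum (this is precisely what underlies the reduction $T_{\phi,t}=P_0$ recorded in the set-up) — the spectral projection $P_0$ onto $\C\one$ is rank one, so $T_{\phi,t}^\dagger = P_0^\dagger$ is the $t$-independent rank-one map $\rho \mapsto \Tr(\rho)\,\sigma$, where $\sigma = \rho_\pi$ is the unique fixed state.

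The first step is then to note that condition (a) already forces this structural hypothesis: if $T_t'$ admits a dynamical fixed point, Theorem~\ref{thm:exponential_steady_state} gives a rank-one projection $P$ with $T_t' \to P$, so the peripheral spectrum of $L^*$, hence of $L$, is the simple eigenvalue $0$. Thus both (a) and (b) entail the structural hypothesis above, and the task reduces to the quantitative equivalence
\[
\|T_t^\dagger - P_0^\dagger\|_{1\to 1} \longrightarrow 0 \quad \Longleftrightarrow \quad \eta(T_t^\dagger) \longrightarrow 0,
\]
because, by Theorem~\ref{thm:exponential_steady_state} together with Lemma~\ref{lem:exp_decay}, norm-convergence of $T_t'$ to a rank-one projection is itself equivalent to the existence of a dynamical fixed point (and is automatically exponential).

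For the easy direction of this equivalence I would use, with the supremum over density matrices and $T_{\phi,t}^\dagger = P_0^\dagger$, that $\eta(T_t^\dagger) = \sup_{\rho}\tfrac12\|(T_t^\dagger - P_0^\dagger)\rho\|_1 \le \tfrac12\|T_t^\dagger - P_0^\dagger\|_{1\to 1}$; combined with Theorem~\ref{thm:exponential_steady_state} this yields $(a)\Rightarrow \eta(T_t^\dagger)\le \tfrac c2 e^{-\gamma t}\to 0$. For the converse $(b)\Rightarrow(a)$ I would decompose an arbitrary $X\in B(\mathcal{H})$ via its Hermitian and anti-Hermitian parts and the Jordan decomposition into $X = \sum_{j=1}^4 c_j X_j$ with $X_j\ge 0$, $|c_j|=1$ and $\sum_j\|X_j\|_1 \le 2\|X\|_1$, apply $\|(T_t^\dagger - P_0^\dagger)X_j\|_1 \le 2\,\Tr(X_j)\,\eta(T_t^\dagger)$ to each nonzero $X_j$ (normalizing it to a density matrix), and sum to get $\|T_t^\dagger - P_0^\dagger\|_{1\to 1} \le 4\,\eta(T_t^\dagger)\to 0$. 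Transporting this back through the identification gives $\|T_t' - P\|\to 0$ with $P$ rank one, and Theorem~\ref{thm:exponential_steady_state}/Lemma~\ref{lem:exp_decay} then produce the dynamical fixed point, with $\pi$ the state corresponding to $\sigma$.

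I expect the main obstacle to be bookkeeping rather than conceptual: keeping the two pictures and their norms straight (the Banach adjoint $L^*$ versus the trace transpose $L^\dagger$, and the fact that the dual-space operator norm in Theorem~\ref{thm:exponential_steady_state} coincides with the trace-norm operator norm relevant to $\eta$), and carefully justifying that under ``no other periodic points'' the family $T_{\phi,t}^\dagger$ collapses to the single rank-one projection $P_0^\dagger$ rather than being genuinely $t$-dependent — which is exactly where the (standard, and already invoked) semisimplicity of the peripheral spectrum of a bounded semigroup is used.
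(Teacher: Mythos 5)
Your proposal is correct and follows essentially the same route as the paper: trace duality identifying $T_t'$ with $T_t^{\dagger}$, a correspondence between the periodic points of the two semigroups, and a comparison of the mixing coefficient with the convergence of $T_t'$ to its rank-one limit (using that $T_{\phi,t}^{\dagger}$ collapses to $P_0$ when there are no nonzero periodic points). The only cosmetic difference is that the paper establishes the exact identity $\sup_{\psi}\sup_{\|A\|\le 1}|(\psi-\pi)\circ T_t(A)| = 2\eta(T_t^{\dagger})$ in one calculation, which yields both directions at once, whereas you prove the two-sided bound $\tfrac12\|T_t^{\dagger}-P_0^{\dagger}\|_{1\to 1}\le 2\eta(T_t^{\dagger})$ via the Jordan decomposition --- a step the paper performs instead inside the proof of Theorem~\ref{thm:exponential_steady_state}.
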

\begin{proof}
	Let $S\subset B(\mathcal{H})^*$ denote the set of states. Recall that for every $\psi \in S$, there exists a unique density matrix $\varrho_\psi$ so that $\psi(\cdot) = \Tr[\varrho_\psi (\cdot)]= \Tr[\varrho_\psi^*(\cdot)]$. It is not difficult to show that the Schr\"odinger picture semigroup $T^\dagger_t$ fixes $\varrho_\pi$ if and only if the precomposition semigroup $T'_t$ fixes $\psi$.

Now, consider the following calculation: \begin{align*}
	\sup_{\psi \in S} \sup_{\substack{A\in B(\mathcal{H}):\\ \|A\| \le 1}} |(\psi  - \pi)T_t(A)| & = \sup_{\psi \in S} \sup_{\substack{A\in B(\mathcal{H}):\\ \|A\| \le 1}} |\Tr[ (\varrho_\psi - \varrho_\pi) T_t(A)]| \\
	&= \sup_{\substack{\varrho \ge 0\\ \Tr \varrho =1}} \sup_{\substack{A\in B(\mathcal{H}): \|A\|\le 1}} |\Tr[ (T_t^\dagger \varrho - T^\dagger_t \varrho_\pi) (A)|\\
	&=  \sup_{\substack{\varrho \ge 0\\ \Tr \varrho =1}} \|T^\dagger_t(\varrho) - T^\dagger_t(\varrho_\pi)\|_1\\
	&= \sup_{\substack{\varrho \ge 0:\\ \Tr \varrho = 1}} \|T_t^\dagger (\varrho) - \varrho_\pi\|_1 = 2\eta(T^{\dagger}_t)
\end{align*} where $T^\dagger$ denotes the conjugate transpose. Notice we have used the fact that the trace norm is dual to the operator norm with respect to the trace pairing to obtain the third equality. 

In other words, if there exists a dynamical fixed point for $T_t$ and the Schr\"odinger picture semigroup $T^\dagger _t$ has no other periodic points, then \[ \eta(T_t^\dagger) \le \frac{1}{2} g_\pi(t) \to 0 \text{ as }t\to \infty,\] where $\eta$ is the mixing coefficient defined above. 

To finish the proof, we will show that there is a one-to-one correspondence between the periodic points of $T_t'$ and $T_t^\dagger$. Let $\theta$ be a periodic point of $T_t'$. Then there is real $\lambda \neq 0$ so that $T_t'\theta = e^{i\lambda t}\theta$. By the Riesz representation theorem, there is $X_\theta \in B(\mathcal{H})$ so that $\theta(\cdot) = \Tr[(X_\theta)^* \cdot ]$. In which case for any other $x\in B(\mathcal{H})$, one has 
	\[
		\Tr[(T^\dagger_t(X_\theta))^*x] = \Tr[(X_\theta)^* T_t(x)] = \theta(T_t(x)) = e^{i\lambda t} \theta(x) = \Tr[(e^{-i\lambda t} X_\theta)^* x]. \] Hence $T^\dagger_t(X_\theta) = e^{-i \lambda t} X_\theta$. Running this argument in reverse shows that $\theta \mapsto X_\theta$ is a bijection between the periodic points. 
	\end{proof}
	
	\begin{rmk}
		It is not difficult to show that, as matrices, $T_t'$ is the transpose of $T_t$, where $T_t^\dagger$ is the \textit{conjugate} transpose. Therefore, we could have argued the last paragraph by appealing to elementary linear algebra as well. 
	\end{rmk}

\bibliographystyle{plain}
\bibliography{bib.bib}

\end{document}